\documentclass[a4paper]{article}
\usepackage{amsfonts,amssymb,amsmath,amsthm,cite}
\usepackage{ucs} 
\usepackage[utf8x]{inputenc}
\usepackage{graphicx}
\usepackage[english]{babel}
\usepackage{slashed}
\usepackage{textcomp}
\usepackage{bm}
\usepackage{titlesec}
\usepackage{stackengine}

\usepackage{etoolbox}
\patchcmd{\thebibliography}{\section*}{\section}{}{}
\usepackage{adjustbox}
\usepackage{nccmath}
\usepackage{mathtools}
\usepackage{colortbl}
\mathtoolsset{showonlyrefs}
\usepackage{hyperref}
\usepackage{calligra} 
\hypersetup{
	colorlinks,
	citecolor=red,
	filecolor=black,
	linkcolor=red,
	urlcolor=black
}
\evensidemargin=\oddsidemargin
\newtheorem{theorem}{Theorem}

\begin{document}
\vspace{20mm}
\begin{center}
	\large{\textbf{Renormalization aspects of the Yang--Mills theory with a cutoff}}
\end{center}
\vspace{2mm}
\begin{center}
	\large{\textbf{A. V. Ivanov${}^{\dagger}$~~~N. V. Kharuk${}^{2^5}$}}
\end{center}
\begin{center}
	St. Petersburg Department of Steklov Mathematical Institute of Russian Academy of Sciences,\\ 
	27 Fontanka, St. Petersburg 191023, Russia
\end{center}

\begin{center}
	${}^{\dagger}$E-mail: regul1@mail.ru\,\,\,
	${}^{2^5}$E-mail: natakharuk@mail.ru
\end{center}

\vspace{10mm}
\begin{flushright}
\large{\textbf{\textit{On the 10th anniversary of "Scenario"\footnote{See works \cite{29-1-1,29-1-0} in the list of references.}}}}
\end{flushright}

\vspace{10mm}

\textbf{Abstract.} The paper discusses renormalization aspects of the quantum four-dimensional Yang–Mills theory with a cutoff regularization in the coordinate representation. The background field method is used to formulate a generating functional, and the regularization is introduced through quasi-local probabilistic averaging. Two main types of regularization are proposed: strong deformation, which consists in averaging fluctuation fields, and weak deformation, which is a covariant generalization of the first case with respect to gauge transformations of the background field. We study singular contributions for the first two quantum corrections in this paper and compare them in detail with the case of dimensional regularization. The consistency of the action and the equation of motion after introducing the regularization and making a renormalization procedure is analyzed. New counter-vertices are studied, in particular their locality properties and dependence on the regularization parameter.

\vspace{2mm}
\textbf{Key words and phrases:} two loops, cutoff regularization, dimensional regularization, renormalization, deformation, Green's function, averaging, Yang--Mills theory, quantum action, effective action, divergence, quantum equation, singularity.

\newpage		
	
\tableofcontents

\section{Introduction}
\label{ym:sec:int}
The concept of "divergence" is inextricably linked to most quantum field models studied using perturbative methods \cite{3,9}, that is, using an expansion over some small parameter. This concept arose during attempts to construct quantum electrodynamics around 80 years ago \cite{ya-1,10} and at the moment, without losing its relevance, it can be found\footnote{We are talking about singular functionals and issues related to their definitions.} in many problems in theoretical and mathematical physics. For example, one can mention the classical theory of generalized functions \cite{Gelfand-1964,Vladimirov-2002}, problems of functorial quantum field theory \cite{sk-14,sk-16,sk-5,sk-7}, the theory of representations of groups and Lie algebras \cite{ya-2,ya-3}, special functions \cite{ya-4,ya-5}, questions of constructing  functional integrals \cite{sk-1,sk-2,34-6,34-c-m}, as well as integrable models \cite{ya-6}.

According to the standard theory, see \cite{6,7,105}, divergences must be regularized, that is, the classical action of the model must be deformed by introducing an auxiliary parameter\footnote{For the sake of certainty, we notate the regularization parameter with the symbol $\Lambda$, since this designation is standard when using a cutoff. The limit of regularization removal is reached by passing $\Lambda\to +\infty$.} $\Lambda$ in such a way that the divergences turn into singular functions relative to $\Lambda$. Such functions are commonly referred to as singularities. It turns out that the perturbative expansions for some theories contain singularities that obey a set of additional relations that allow them to be reduced, or subtracted, by multiplying the parameters and fields of the theory by special constants, see for example \cite{33-rev7}. Such theories are called renormalizable, and the process of finding coefficients for the constants is called the process of multiplicative renormalization.

A class of appropriate regularizations used in practice is not very wide, since it is limited not only by the desire to preserve certain internal symmetries of the theory, but also by the ability to conduct multi-loop\footnote{Correction terms of a high order.} calculations. Among the most popular options, the following approaches can be noted: dimensional regularization \cite{19,555}, higher covariant derivative regularization \cite{Bakeyev-Slavnov,29-st,AA-1,AA-2}, implicit \cite{chi-0,chi-1,chi-2}, Feynman regularization \cite{FF-1,Bog-R} and Pauli--Villars one \cite{Pauli-Villars}, regularization with a cutoff \cite{w6,w7,ww7,w8,Khar-2020,sk-b-19,Sh}. All of these methods have both advantages and some disadvantages. For example, the dimensional regularization is the main tool for multi-loop calculations, although it is built through dimensional deformation, which leads to a number of unanswered questions. In turn, the higher derivative regularization has found its application in supersymmetric theories. However, higher-order operators are less well understood, and for such situations well-posedness of the formulation is not always obvious, not to mention the spectral properties. Further, although some cutoff regularization is transparent in terms of construction and meaning, it can violate important internal symmetries in models, for example, the gauge one.

Unfortunately, the possibility of carrying out the renormalization procedure may depend not only on the choice of a specific model, but also on the type of regularization. As an example, we can give a two-dimensional principal chiral field model, see \cite{sig1,Ivanov-Akac,AIK-25,i-2626,i-2626-1}. It is renormalizable in the standard sense in the case of dimensional regularization, however, when using a cutoff, it faces the need to introduce auxiliary counter-vertices. This is primarily due to the fact that the $\delta$-functional is being replaced by a smooth function, which no longer allows subtracting all subsingularities using the standard $\mathcal{R}$-operation, see similar discussions on this topic in \cite{33-rev8,33-rev9}. This state of affairs leads either to an extension (generalization) of the classical action, as, for example, was done in the theory with the Yukawa interaction, see \cite{33-rev4,33-rev5,33-rev6} or problem 10.2 in \cite{10}, or to an extension\footnote{That is, to weaken the renormalization conditions. For example, this may consist in allowing the addition of a specific class of counter-vertices.} of concepts of renormalizability. Other popular examples of renormalizable theories within the framework of using dimensional regularization are the four-dimensional Yang--Mills quantum model \cite{1,5}, as well as various scalar models\footnote{Here the lower index indicates the dimension of space, and the upper index indicates the maximum degree of the field in the interaction.} $\{\phi_3^6,\phi_4^4,\phi_6^3\}$, see \cite{29-3,29-4}.

This paper is devoted to the study of a special cutoff regularization in the coordinate representation and its covariant generalization using the example of the four-dimensional Yang--Mills theory, see also \cite{8,4}. This approach has been incorporated in the work of \cite{34} and has since been significantly developed and improved. In addition to a number of mathematical features, such as the spectral decomposition \cite{Ivanov-2022}, relationships with an averaging operator \cite{Iv-2024,sk-b-20,ya-10} and consistency regarding the process of gluing statistical sums \cite{sksk}, new properties were studied related to an application in specific models. For example, four-loop corrections have been calculated for $\phi_3^6$, see \cite{Kh-2024,Kh-25}, three-loop corrections for $\phi_4^4$ and $\phi_5^3$, see \cite{Iv-2024-1,Iv-Kh-2024}, two-loop\footnote{As well as some parts from the third correction, see \cite{Ivanov-Kharuk-2023}.} for the four-dimensional Yang--Mills theory, see \cite{Ivanov-Kharuk-2020,Ivanov-Kharuk-20222}, and the three-loop structure of singularities for the two-dimensional principal chiral field model has been studied, see \cite{AIK-25}. It is worth noting that unlike the sigma model, where additional nonlocal singularities led to the need to extend the concept of renormalization, scalar models fully fit into the standard paradigm. This state of affairs led to a desire not only to study in more detail the dependence of the quantum action of the four-dimensional Yang--Mills theory on the cutoff in the coordinate representation, but also to generalize the regularization, making it covariant with respect to the gauge transformations of the background field.

Here it is worth making an important remark about the history of the development of a more general class of regularizations (deformations). The fact is that the introduction of the averaging operator in the main order can be reformulated either by applying some operator function $F(\cdot)$, in the argument of which there is a Laplace operator, or by multiplying the spectrum by a special regularizing function $\rho(\cdot)$. At the same time, the family of such functions has a rather specific form, as shown in \cite{Ivanov-2022}. Nevertheless, the class of functions can be significantly expanded. In this case, quasi-locality will be lost, and the theory will become nonlocal. Examples include "sharp" cutoffs in the momentum representation, which arose when studying the functional renormalization group \cite{nkf-1,nkf-2,nkf-3}, as well as various types of exponential functions, see for example \cite{ym-zz-1,ym-zz-2}. Note that the approach using nonlocal theories is used in the study of various aspects of "quantum" gravity \cite{ym-zz-6,ym-zz-7} and stochastic quantum mechanics \cite{ym-zz-5}. At the same time, its systematic study was laid down quite a long time ago, see for example \cite{ym-zz-3,ym-zz-4}. Nevertheless, quasi-local deformations, which are the key object of this work, have not been studied in the context of renormalization before.

About ten years ago, two papers were published, see \cite{29-1-1,29-1-0}, devoted to the renormalization scenario of the four-dimensional quantum Yang--Mills theory. They considered a "combinatorics" of logarithmic singularities\footnote{Here $\Lambda\gg1$ is a dimensional regularization parameter, and $\mu>0$ is an auxiliary finite fixed parameter. The value of $L$ is an analog of $1/\varepsilon$ in the case of dimensional regularization.} $L=\ln(\Lambda/\mu)$ in all quantum corrections and the corresponding solution of the Gell-Mann--Low equation, see for reference \cite{ya-11}. Since there was no suitable cutoff regularization at that time, a variant for the case with dimensional regularization was used as a basis for the distribution of singularities, see \cite{ya-12}, Section 6 in \cite{105}, as well as some issues of summation in \cite{o-3,o-4}. It is important to note that earlier attempts were made to calculate two-loop quantum corrections in the case of cutoff, see \cite{w5,w10}, as well as issues related to the violation of gauge symmetry\footnote{Including the Slavnov--Taylor (Ward--Takahashi) identities, see \cite{g1,g2,g3,g4}.}, see \cite{w6,w7,w8,w9}, and options for its restoration \cite{w3,w4}. Unfortunately, the task of generalizing to the case of a large number of loops seemed insurmountable, just as their connection with background fields and the quantum equation of motion was not clear. It should also be noted that the approach to solving recurrence relations used in \cite{29-1-1,ya-12} is a very useful tool and is found in a number of recent works \cite{o-5,o-6,o-7} devoted to the study of the renormalization group.

This work is a continuation of two articles \cite{Ivanov-Kharuk-2020,Ivanov-Kharuk-20222}, which, in turn, were considered by the authors as an addition to the "scenario" with a number of important examples on the introduction of cutoff regularizations and a set of explicit calculations. They analyze the two-loop approximation for quantum action and the one-loop approximation for the quantum equation of motion for a family of cutoff regularizations in the framework of using the background field method, see \cite{102,103,24,25,26}. A detailed comparison is also performed with known similar calculations for dimensional regularization of \cite{12}. Nevertheless, these two papers did not cover important issues related to the interaction of the renormalization processes of the action and equation, covariance with respect to gauge transformations of the background field, the structure of singularities and counter-vertices, as well as the choice of ansatz for Green's functions in multi-loop calculations. These are the issues that this article is dedicated to. The following points can be listed as the main results.\\

\noindent$\triangleright$ A variant of the "strong" deformation of the Yang--Mills theory is proposed, which preserves the functional relationship between the effective action and the quantum equation of motion. Within the framework of this approach, the following tasks were solved.
\begin{enumerate}
	\item The process of renormalizing the action near the diagonal\footnote{When the field responsible for fixing the gauge condition matches the background one.} is described in the framework of Faddeev's approach\footnote{That is, when the background field solving the quantum equation of motion coincides with the field from the gauge condition. For reference, see \cite{ya-21}, the arXiv version for \cite{23}, or Section \ref{ym:sec:pr:gen3-3} below.}.
	\item The first two coefficients of the $\beta$-function are calculated.
	\item The power-law singularities in the first two corrections are calculated.
	\item The singularities in the first loop are calculated for the first variation of the quantum action with respect to the "gauge condition" on the diagonal.
	\item Counter-vertices are found in the first two quantum corrections within the framework of an extended renormalization process allowing the appearance of terms with power singularities.
	\item The splitting process\footnote{In this case, it means that some parts of the classical action within the framework of the strong deformation may acquire an additional renormalization constant.} for the classic action is shown at the level of the second correction.
\end{enumerate}

\noindent$\triangleright$ A variant of the "weak" deformation of the Yang--Mills theory is proposed, which is covariant with respect to the gauge transformations of the background field. Within the framework of this approach, the following tasks were solved.
\begin{enumerate}
	\item The process of renormalization of quantum action in the framework of Faddeev's approach and the dependence on an auxiliary field included in the deformation operator are described.
	\item The first two coefficients of the $\beta$-function are calculated.
	\item The absence of local terms with power singularities is shown.
	\item A single counter-vertex has been found in the framework of the extended renormalization process.
	\item At the level of the first correction, it is shown that the singular parts do not depend on the field included in the deformation operator. The singularities for the first variation on the diagonal are also calculated.
	\item A comparison with the dimensional regularization is carried out: master-integrals are determined and it is shown that in the case of cutoff, the integrals themselves are deformed, while in the case of a change in dimension, only the coefficients with which these integrals appear are deformed.
	\item A comparison with the strong deformation is carried out.
\end{enumerate}
\noindent$\triangleright$ A variant of extending the classical action by adding a mass parameter is considered. The renormalization coefficients are calculated for the "weak" case in two loops and for the "strong" case in the first correction.\\

\noindent$\triangleright$ At the level of the second correction, the option of introducing quasi-local vertices, as well as their effect on the coefficients of the $\beta$-function, has been studied.\\ 

\noindent$\triangleright$ A special approach is proposed, which consists in using a "weak" deformation of the quantum action on the diagonal without studying the remaining variations. The issue of multi-loop calculations and the construction of an action outside the diagonal is discussed.\\

The work is organized as follows. \textbf{Section \ref{ym:sec:pr:gen}} contains the main definitions related to classical and quantum actions, perturbative decompositions, definitions of vertices and operators, as well as the choice of a gauge condition. In \textbf{Section \ref{ym:sec:pr:gen2}}, the diagrammatic representation for the quantum action is analyzed, elementary blocks are defined, as well as connection properties for diagrams. In \textbf{Section \ref{ym:sec:pr:gen3}}, the strong deformation of the Yang--Mills theory is discussed. Special attention is paid not only to the process of introducing the regularization, but also to the connection between the quantum action and the quantum equation of motion. In the same context, Faddeev's approach to the choice of the background field is considered, as well as the process of renormalization, taking into account the preservation of the relationship between the action and equation. The problem statement and results are described in separate subsections. \textbf{Section \ref{ym:sec:re}} has a similar structure, but is devoted to the weak deformation, which can be considered as a covariant generalization of the strong case. At the same time, \textbf{Section \ref{ym:sec:re-6}} is devoted to discussing a special approach that is attractive from the point of view of multi-loop calculations. \textbf{Sections \ref{ym:sec:one}}, \textbf{\ref{ym:sec:two}}, and \textbf{\ref{ym:sec:two1}} contain one- and two-loop calculations for the presented results. Then, in \textbf{Section \ref{ym:sec:mas}}, the issue of extending the classical action by adding a mass term, as well as its renormalization depending on the type of deformation, is discussed. Next, in \textbf{Section \ref{ym:sec:quas}}, the pros and cons of the appearance of quasi-local vertices are discussed, as well as their relationship to the obtained renormalization coefficients. \textbf{Section \ref{ym:sec:zakl}} contains a summary of the results, comments, open questions, and acknowledgements.

\section{General definitions}
\label{ym:sec:pr:gen}
In this paper, we study the Euclidean version of the Yang--Mills theory \cite{3} in the flat four-dimensional space $\mathbb{R}^4$. The elements of such a space are notated by the letters $\{x,y,z\}$, and their individual components are highlighted by the Greek indices\footnote{The paper uses the Einstein convention, which consists in automatic summation over repeated indices.}. For example, $x^\mu$ or $x_\mu$, where $\mu$ of $\{1,2,3,4\}$. It is clear that in this case the metric tensor is reproduced by the Kronecker symbol $\delta^{\mu\nu}$, therefore $x^\mu=x_\mu$, and the location of the indices will not always be tracked. 

Let $G$ denote a compact semisimple Lie group, and the symbol $\mathfrak{g}$ represent the corresponding Lie algebra, see \cite{2} for reference. In this case, the symbols $t^a$, where $a\in\{1,\ldots,\dim\frak{g}\}$, denote the generators of the algebra $\mathfrak{g}$. Without loss of generality, we assume that they satisfy the relations
\begin{equation}\label{ya-a-1}
[t^a,t^b]=f^{abc}t^c,\,\,\,
\mathrm{tr}(t^at^b)=-2\delta^{ab}.
\end{equation}
Here, square brackets denote a commutator, $\mathrm{tr}(\cdot)$ is a trace operation, and the coefficients $f^{abc}$ are completely antisymmetric real structure constants that satisfy\footnote{In fact, the structure constants can be chosen as new generators of the algebra $\mathfrak{g}$. This representation is called adjoint.} the normalization condition and the Jacobi identity in the form
\begin{equation}\label{ya-a-2}
f^{abc}f^{abe}=c_2\delta^{ce},\,\,\,
f^{acb}f^{bed}f^{dga}=-\frac{c_2}{2}f^{ceg}.
\end{equation}
The constant $c_2$ is an eigenvalue of the Casimir operator and depends on the choice of the group $G$. For example, for the group $\mathrm{SU}(n)$ the value of $c_2=n$.

A smooth Yang--Mills field is a set $\{A_\mu^{\phantom{1}}=A_\mu^at^a\}_{\mu=1}^4$ of Lie algebra elements $\mathfrak{g}$, coefficients $A_\mu^a$ of which belong to $C^{\infty}(\mathbb{R}^4,\mathbb{R})$. In this case, the classical action for the Euclidean version of the Yang--Mills theory is determined by the equation \footnote{Here it is assumed that the fields decrease at infinity quickly enough, so that the integral converges.}
\begin{equation}\label{ya-a-3}
S_{\mathrm{cl}}[A]=\frac{1}{4g^2}
\int_{\mathbb{R}^4}\mathrm{d}^4x\,
F_{\mu\nu}^a[A]F_{\mu\nu}^a[A],
\end{equation}
where the curvature (field strength) tensor, which for each fixed indices $\mu$ and $\nu$ forms an element $F_{\mu\nu}^{\phantom{1}}[A]=F_{\mu\nu}^a[A]t^a$ of the Lie algebra $\mathfrak{g}$, defined in local coordinates by the relation
\begin{equation}\label{ya-a-4}
F_{\mu\nu}^a[A](x)=
\partial_{x^\mu}^{\phantom{1}}A_\nu^a(x)-
\partial_{x^\nu}^{\phantom{1}}A_\mu^a(x)+
f^{abc}A_\mu^b(x)A_\nu^c(x).
\end{equation}

The main method of transition to perturbative decomposition in this work is the background field method\footnote{See also \cite{ym-d-1,ym-d-2,I-R} for an example of using the method in the case of 0D models and multidimensional integrals.} \cite{102,103,24,25,26}, which in general terms consists in decomposing into a background field $B_\mu^a$ and a fluctuation field $a_\mu^a$ in the form $A_\mu^a=B_\mu^a+ga_\mu^a$. This substitution leads to the decomposition of the classical action into the sum of a finite number of terms
\begin{align}\label{ya-a-5}
S_{\mathrm{cl}}[B+ga]=\frac{W_{-1}}{4g^2}&+
\frac{1}{g}\Gamma_1[a]+\frac{1}{2}
\int_{\mathbb{R}^4}\mathrm{d}^4x\,
a_\mu^aM_{1\mu\nu}^{\,\,ab}a_\nu^b\\&+
g\Gamma_3[a]+\frac{g^2}{4}\Gamma_4[a]-
\frac{1}{2}\int_{\mathbb{R}^4}\mathrm{d}^4x\,
\big(D_\mu^{ae}a_\mu^e\big)
\big(D_\nu^{ab}a_\nu^b\big),
\end{align}
where, for convenience, the notations $W_{-1}\equiv W_{-1}[B]=4g^2S_{\mathrm{cl}}[B]$ and $F_{\mu\nu}^a\equiv F_{\mu\nu}^a[B]$ were used, the covariant derivative $D_\mu^{ab}$ is defined, which in local coordinates takes the form
\begin{equation}\label{ya-a-6}
D_\mu^{ab}(x)=\partial_{x^\mu}^{\phantom{1}}\delta^{ab}+f^{acb}B_\mu^c(x),
\end{equation}
as well as the following auxiliary functionals
\begin{align}\label{ya-a-7}
\Gamma_1[a]&=-\int_{\mathbb{R}^4}\mathrm{d}^4x\,
a_\nu^aD_\mu^{ab}F_{\mu\nu}^b,	
	\\\label{ya-a-8}
\Gamma_3[a]&=\int_{\mathbb{R}^4}\mathrm{d}^4x\,
\big(D_\mu^{ae}a_\nu^e\big)f^{abc}a_\mu^b
a_\nu^c,\\\label{ya-a-9}
\Gamma_4[a]&=\int_{\mathbb{R}^4}\mathrm{d}^4x\,
f^{abc}a_\mu^ba_\nu^cf^{aed}a_\mu^e
a_\nu^d,
\end{align}
and the operators
\begin{equation}\label{ya-a-10}
M_0^{ab}=-D_\mu^{ac}D_\mu^{cb},\,\,\,
M_{1\mu\nu}^{\,\,ab}=M_0^{ab}\delta_{\mu\nu}^{\phantom{1}}-2f^{acb}F_{\mu\nu}^c
\end{equation}
have been used. It can be noted that the subscript in the $\Gamma$-functionals corresponds to the degree of the fluctuation field. Besides, $\Gamma_1$ and $\Gamma_3$ also depend on the background field $B_\mu$. 

According to the general theory \cite{3}, when moving to the quantum case, the classical Yang--Mills action should be supplemented by the $S_{\mathrm{gf}}$ term, which fixes a gauge condition, as well as the Faddeev--Popov term $S_{\mathrm{g}}$, corresponding to the ghost fields $c=c^at^a$ and $\bar{c}=\bar{c}^at^a$, see \cite{27}. To do this, we first need to determine the type of the gauge condition. Let us choose it as follows
\begin{equation}\label{ya-a-31}
\big(\partial_{x^\mu}\delta^{ac}+f^{adc}e_\mu^d(x)\big)a_\mu^c(x)=
\mathfrak{D}_\mu^a(x)a_\mu^c(x)
=0,
\end{equation}
where $e_\mu^{\phantom{1}}=e_\mu^at^a$ is an element of $\mathfrak{g}$ with smooth coefficients. Its explicit appearance is not important at the moment and will be fixed in the next sections. Then, using the representation from \cite{23}, the additions to the classical action take the form\footnote{It is important to note that in the general case, the functional $S_{\mathrm{gf}}$ must contain the $\xi$ parameter. The option $\xi=1$ is fixed here in order to work exclusively with the standard Laplace operator ($-\partial_\mu\partial_\mu+\ldots$). }
\begin{align}\label{ya-a-11}
S_{\mathrm{gf}}[a,e]&=\frac{1}{2}\int_{\mathbb{R}^4}\mathrm{d}^4x\,\big(\mathfrak{D}_\mu^{ae}a_\mu^e\big)
\big(\mathfrak{D}_\nu^{ab}a_\nu^b\big),\\\label{ya-a-12}
S_{\mathrm{g}}[B,a,c,\bar{c},e]&=\int_{\mathbb{R}^4}\mathrm{d}^4x\,\bar{c}^a\mathfrak{M}_0^{ab}c^b
+g\Omega_3,
\end{align}
where auxiliary functionals and operators have the form
\begin{equation}\label{ya-a-13}
\Omega_3[a,c,\bar{c},e]=\int_{\mathbb{R}^4}\mathrm{d}^4x\,
\big(\mathfrak{D}_\mu^{ab}\bar{c}^b\big)f^{aed}a_\mu^ec^d,
\end{equation}
\begin{equation}\label{ya-a-32}
\mathfrak{M}_0^{ab}(x)=-\mathfrak{D}_\mu^{ae}(x)D_\mu^{eb}(x),
\end{equation}
\begin{equation}\label{ya-a-33}
\mathfrak{M}_{1\mu\nu}^{\,\,ab}(x)=M_{1\mu\nu}^{\,\,ab}(x)+
D_\mu^{ae}(x)D_\nu^{eb}(x)-
\mathfrak{D}_\mu^{ae}(x)\mathfrak{D}_\nu^{eb}(x).
\end{equation}

Considering all the constructions presented above, the quantum action $W[B,e]$ for the Euclidean version of the Yang--Mills theory with the gauge condition\footnote{Clearly, the theory should not depend on the gauge condition. In particular, it depends on the choice of $e_\mu$. Nevertheless, the presence of this field will be noted in the quantum action, since after the introduction of regularization, invariance may be violated. The fact of violation of this invariance is fixed by the appearance of additional singularities. The methods of its restoration, in turn, are not discussed in this paper.} \eqref{ya-a-31} can be written symbolically using the following functional integral
\begin{equation}\label{ya-a-14}
\exp\Big(-W[B,e]\Big)=\int_{\mathcal{H}}\mathcal{D}a\mathcal{D}^\prime \bar{c}\mathcal{D}^\prime c\,\exp\Big(-S_{\mathrm{tot}}[B,e]\Big),
\end{equation}
where
\begin{equation}\label{ya-z-25}
S_{\mathrm{tot}}[B,e]\equiv
S_{\mathrm{tot}}[B,e;a,c,\bar{c},g]=S_{\mathrm{cl}}[B+ga]+S_{\mathrm{gf}}[a,e]+S_{\mathrm{g}}[B,a,c,\bar{c},e].
\end{equation}
For convenience, the integration variables will usually be omitted in the future. In formula \eqref{ya-a-14}, the symbol $\mathcal{H}$ conditionally denotes a "domain" of integration, functions from which have a specified behavior for large argument values. It is known, see \cite{3,23,Ivanov-Kharuk-2020}, that such an object can be defined as a perturbative decomposition with respect to the coupling constant by performing Gaussian integrations of polynomials. For these purposes and further formulations, inverse operators, kernels for which are Green's functions, for the mentioned Laplace operators are needed. Therefore, at the end of this section, we introduce these functions
\begin{equation}\label{ya-a-34}
M_0^{ac}(x)G_0^{cb}(x,y)=\delta^{ab}\delta(x-y),\,\,\,
\mathfrak{M}_0^{ac}(x)\mathfrak{G}_0^{cb}(x,y)=\delta^{ab}\delta(x-y),
\end{equation}
\begin{equation}\label{ya-a-35}
M_{1\mu\sigma}^{\,\,ab}(x)G_{1\sigma\nu}^{\,\,cb}(x,y)=\delta^{ab}\delta_{\mu\nu}\delta(x-y),\,\,\,
\mathfrak{M}_{1\mu\sigma}^{\,\,ab}(x)\mathfrak{G}_{1\sigma\nu}^{\,\,cb}(x,y)=\delta^{ab}\delta_{\mu\nu}\delta(x-y).
\end{equation}
It is clear that the Green's functions are uniquely fixed by choosing suitable boundary conditions, which must be found for physical reasons. Next, we assume that the task is well-posed. Note also that for $e_\mu=B_\mu$ we get $\mathfrak{M}_0=M_0$ and $\mathfrak{M}_1=M_1$, as well as $\mathfrak{G}_0=G_0$ and $\mathfrak{G}_1=G_1$.

\section{Quantum action}
\label{ym:sec:pr:gen2}
Formula \eqref{ya-a-14} is purely symbolic, since the currently available methods of functional integration, see for example \cite{34-6,sk-2}, do not cover standard quantum field models. Nevertheless, a neat mathematical formulation is still possible. For this, the presented formula must be understood as a formal series with respect to the coupling constant $g^2$. It is easy to obtain it by decomposing exponentials into a series and then calculating multidimensional Gaussian integrals, see \cite{3}.

As practice shows, it is convenient to write (encrypt) combinations of Green's functions in a compact form using diagrammatic techniques. To do this, we first define the basic elements: vertices and connecting lines. In this case, we use the notation approach proposed in \cite{13} and used in \cite{Ivanov-Kharuk-2020,Ivanov-Kharuk-20222}.
\begin{enumerate}
	\item We need to introduce four types of vertices: unary, two triple, and one quadruple. Let us define them according to the following comparison
	\begin{equation}\label{ya-a-16}
		\Gamma_1\sim{\centering\adjincludegraphics[width = 1 cm, valign=c]{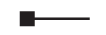}},\,\,\,
		\Gamma_3\sim{\centering\adjincludegraphics[width = 1.5 cm, valign=c]{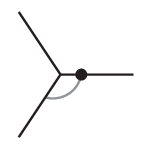}},\,\,\,
		\Omega_3\sim{\centering\adjincludegraphics[width = 1.5 cm, valign=c]{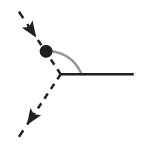}},\,\,\,
		\Gamma_4\sim{\centering\adjincludegraphics[width = 1.9 cm, valign=c]{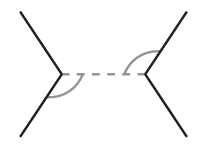}}.
	\end{equation}
Note that the images have an internal structure for easy decryption. For example, if we compare the vertex $\Gamma_3$ with formula \eqref{ya-a-8}, we can see that the round dot indicates the covariant derivative, the gray line indicates the convolution of the indices, and the order of the outer lines reproduces the order of the indices $a\to b\to c$ for the structure constant $f^{abc}$. We can decrypt $\Gamma_4$ in the same way. In the case of the vertex $\Omega_3$, arrows are additionally used to indicate the ghost fields.
	\item Connecting lines are related to the Green's functions that arise in the theory. In this case, there are two functions: the function $\mathfrak{G}_0^{bc}(x,y)$ for the operator $\mathfrak{M}_0^{ab}(x)$ and the function $\mathfrak{G}_{1\sigma\nu}^{\,\,bc}(x,y)$ for the operator $\mathfrak{M}_{1\mu\sigma}^{\,\,ab}(x)$. Their regularized versions are mapped to the lines
	\begin{equation}\label{ya-a-15}
	\mathfrak{G}_0\big|_{\mathrm{reg.}}\sim{\centering\adjincludegraphics[width = 1.5 cm, valign=c]{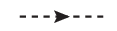}},\,\,\,
	\mathfrak{G}_1\big|_{\mathrm{reg.}}\sim{\centering\adjincludegraphics[width = 1.5 cm, valign=c]{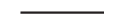}}.
	\end{equation}
It is important to keep in mind that the ends of the lines contain arguments and indices of the corresponding Green's functions. For example, the right end of the solid line corresponds to the set $\{y,\nu,c\}$. They are omitted in the drawing for convenience.
\end{enumerate}
Note that at this step, a restriction was actually introduced on the type of regularization, which ultimately boils down to the deformation of the corresponding Green's functions. This is quite enough, since the approach considered in the paper falls within the specified framework. For convenience, we assume that $\Lambda$ is a regularization parameter that can be "removed"\footnote{That is, to switch back to divergent values.} by the limit transition $\Lambda\to+\infty$. 

Next, we define an operator that maps the sum of diagrams to a set of vertices with a certain number of external lines. This approach makes it possible to compactly write down the answers for Gaussian integrals, which, as is known, ultimately boil down to the application of Wick's theorem on pairings, see \cite{Bog-R}. Let $j,i,k\in\mathbb{N}\cup\{0\}$ and $\Gamma$ be a set of a finite number of vertices defined above, with a total number of solid lines equal to $i$ and a total number of dashed lines equal to $k$. The symbol $\mathbb{H}_j^{\mathrm{c(sc)}}$ denotes an operator that performs a series of transformations with a set of $\Gamma$. If $j>i$, or at least one number from the set $\{i-j,k\}$ is odd, then the answer is zero. Otherwise, the operator performs the following three procedures:
\begin{enumerate}
	\item connects all the dashed lines ($k$ pieces) in all possible ways, while in each diagram each dashed loop is multiplied by $-1$;
	\item connects each randomly selected combination of $i-j$ external lines in all possible ways;
	\item saves only the connected (strongly connected\footnote{It is also called the one-particle irreducible (or $\mathrm{1PI}$) part.}) part. This is reflected by the superscript $\mathrm{c}(\mathrm{sc})$.
\end{enumerate}

Taking into account all of the above, the regularized quantum action of the Yang--Mills theory is defined as a formal series in the coupling constant of the form
\begin{align}\label{ya-a-17}
W_{\mathrm{reg}}^{\mathrm{c}}[B,e]=\frac{1}{4g^2}W_{-1}&+\bigg(\ln\det(\mathfrak{G}_0|_{\mathrm{reg.}})-\frac{1}{2}\ln\det(\mathfrak{G}_1|_{\mathrm{reg.}})+\kappa_0^{\mathrm{c}}\bigg)\\\label{ya-a-17-1}
&-\mathbb{H}_0^{\mathrm{c}}\bigg[\exp\bigg(-\frac{1}{g}\Gamma_1-g\Gamma_3-\frac{g^2}{4}\Gamma_4-g\Omega_3\bigg)\bigg]
+\sum_{k=1}^{+\infty}g^{2k}\kappa_k^{\mathrm{c}}.
\end{align}
Here, the values of $\kappa_i^{\mathrm{c}}$ in each order in the coupling constant remove singular terms that do not depend on the background field. This shift is done for convenience, as regularization may not cover too "strong"\footnote{The fact is that the singularity is multiplied by the integral of the density, which depends on the background field. If there is no field, then there is an integral of the constant by $\mathbb{R}^4$. This fact is a consequence of the presence of infrared divergences, which are not discussed in this paper.} divergences that are easier to "remove by hand". Note that the description of physical phenomena does not depend on shifting the action by a constant. Next, the superscript in $W_{\mathrm{reg}}^\mathrm{c}$ and $\kappa_i^{\mathrm{c}}$ is associated with the index in the operator $\mathbb{H}_0^{\mathrm{c}}$. For example, the notation $W_{\mathrm{reg}}^\mathrm{sc}$, which will appear in the next section, means the operator $\mathbb{H}_0^{\mathrm{sc}}$ instead of $\mathbb{H}_0^{\mathrm{c}}$.

Let us discuss special gauge transformations. Let $h(\cdot)\in C^{\infty}(\mathbb{R}^4,G)$ be a smooth function. In the adjoint representation, the elements of such a matrix are notated by $h^{ab}(x)$. Further, we also imply that the fields $B_\mu^a$ and $e_\mu^a$ change under gauge transformations as follows
\begin{align}\label{ya-p-1}
f^{acb}B_\mu^c(x)&\to f^{acb}B_\mu^{h,c}(x)=h^{-1,ad}(x)f^{dce}B_\mu^c(x)h^{eb}(x)+
h^{-1,ae}(x)\partial_{x^\mu}h^{eb}(x),
	\\\label{ya-p-2}
f^{acb}e_\mu^c(x)&\to f^{acb}e_\mu^{h,c}(x)=h^{-1,ad}(x)f^{dce}e_\mu^c(x)h^{eb}(x)+
h^{-1,ae}(x)\partial_{x^\mu}h^{eb}(x),
\end{align} 
while the fluctuation field from \eqref{ya-a-5} according to the rule
\begin{equation}\label{ya-p-4}
f^{acb}a_\mu^c(x)\to f^{acb}a_\mu^{h,c}(x)=h^{-1,ad}(x)f^{dce}a_\mu^c(x)h^{eb}(x).
\end{equation}
Additionally, we assume that the domain of integration of $\mathcal{H}$ from \eqref{ya-a-14} is invariant with respect to such transformations. Formally, in the absence of regularization, such invariance exists, as evidenced by the definition in terms of a perturbative series. Then it can be argued that at the formal level, in the absence of regularization, the quantum action is invariant with respect to the transformations \eqref{ya-p-1} and \eqref{ya-p-2}, which can be written as
\begin{equation}\label{ya-p-3}
	W^{\mathrm{c}}[B,e]=W^{\mathrm{c}}[B^h,e^h].
\end{equation}
As the calculations below show, the regularization can violate this invariance. This fact affects the renormalization process and the coefficients of the corresponding constants.

\section{Strong deformation}
\label{ym:sec:pr:gen3}
\subsection{Motivation}
\label{ym:sec:pr:gen3-1}
First, let us decipher the name of the section. The word "deformation" refers to the process of changing some parameters of the classical action, which eventually leads to regularization\footnote{We are talking about ultraviolet divergences that depend on the background field.} of the quantum action. The word "strong" reflects the degree of violation of internal symmetries. In this case, it symbolizes the loss of invariance under the gauge transformations of the background field $B_\mu^a$ and the field $e_\mu^a$, see formula \eqref{ya-p-3}. Nevertheless, this is done for a reason and is a price to pay for maintaining some useful functional equalities.

First of all, we need to figure out exactly what the standard regularized\footnote{The variant with the removed regularization is further indicated by the absence of $|_{\mathrm{reg.}}$. Such a series contains divergences.} quantum\footnote{In the lowest order of the coupling constant, the quantum equation of motion reproduces the classical equation. The remaining additions are called quantum corrections.} equation of motion looks like
\begin{equation}\label{ya-z-1}
Q_\mu^a|_{\mathrm{reg.}}[B_q,e](x)=0,
\end{equation}
the left part of which can be formally written out as the sum of all strongly connected diagrams with one external line, that\footnote{The factor $-g^{-1}$ is chosen here for the convenience of writing the relation with the quantum action, see \eqref{ya-a-22} or \eqref{ya-a-23}.} is
\begin{equation}\label{ya-a-18}
	\int_{\mathbb{R}^4}\mathrm{d}^4x\,a_\mu^aQ_{\mu}^a|_{\mathrm{reg.}}=
	-g^{-1}\mathbb{H}_1^{\mathrm{sc}}\bigg[\exp\bigg(-\frac{1}{g}\Gamma_1-g\Gamma_3-\frac{g^2}{4}\Gamma_4-g\Omega_3\bigg)\bigg].
\end{equation}
In \eqref{ya-z-1} and further, the symbol $B_{q,\mu}^a$ denotes the solution of the regularized quantum equation of motion. It is a special case of selecting the background field $B_\mu^a$. If the regularized quantum equation of motion is fulfilled, then only strongly connected contributions are preserved in the quantum action \eqref{ya-a-17-1}, that is, the index $\mathrm{c}\to\mathrm{sc}$ is replaced in the operator $\mathbb{H}$. Thus, it can be argued that
\begin{equation*}
\mbox{if}\,\,\, Q_\mu^a|_{\mathrm{reg.}}[B_q,e](x)=0,\,\,\,\mbox{then}\,\,\,
W_{\mathrm{reg}}^{\mathrm{c}}[B_q,e]=W_{\mathrm{reg}}^{\mathrm{sc}}[B_q,e].
\end{equation*}
This equality is well known. Indeed, it follows from the fact that, in the general case, the difference of actions can be represented as
\begin{align}\label{ya-a-46}
	W^{\mathrm{c}}_{\mathrm{reg}}[\,\cdot\,,e]-W^{\mathrm{sc}}_{\mathrm{reg}}[\,\cdot\,,e]=
	\sum_{k=2}^{+\infty}\int_{\mathbb{R}^{4\times k}}
	\mathrm{d}^4x_1\ldots\mathrm{d}^4x_k\,&
	w_{\mu_1\ldots\mu_k}^{a_1\ldots a_k}(x_1,\ldots,x_k)\times\\\nonumber
	\times &Q_{\mu_1}^{a_1}|_{\mathrm{reg.}}[\,\cdot\,,e](x_1)\times\ldots\times Q_{\mu_k}^{a_k}|_{\mathrm{reg.}}[\,\cdot\,,e](x_k),
\end{align}
each term of which is proportional to the functional $Q_\mu^a|_{\mathrm{reg.}}[\,\cdot\,,e](x)$, with summation starting with two. This is described in more detail, for example, in \cite{Vas-98}. Further, we note that at the formal level, before the introduction of regularization, there is a relation\footnote{It is more correct to perceive it as the difference between the left and right sides, when the non-integrable densities cancel each other.} between the quantum action and the density from the quantum equation of motion
\begin{equation}\label{ya-a-22}
\frac{\delta}{\delta B_\mu^a(x)}W^{\mathrm{sc}}[B,e]=Q_\mu^a[B,e](x).
\end{equation}
The violation of such a connection when regularization is introduced leads to the fact that the further renormalization process ceases to be consistent, in the sense that renormalization of the action will not guarantee the renormalization of the equation, and vice versa. Accordingly, it will be necessary to adjust the process in some way at each step (in each order in the coupling constant). Unfortunately, the possibility of such an adjustment is not clear. Thus, it is important to require that the functional relationship \eqref{ya-a-22} be preserved during the regularization process. Referring to formula \eqref{ya-a-17}, mathematically the equality can be written as follows
\begin{equation}\label{ya-a-23}
\int_{\mathbb{R}^4}\mathrm{d}^4x\,a_\mu^a\frac{\delta}{\delta B_\mu^a(x)}W_{\mathrm{reg}}^{\mathrm{sc}}[B,e]=
-g^{-1}\mathbb{H}_1^{\mathrm{sc}}\bigg[\exp\bigg(-\frac{1}{g}\Gamma_1
-g\Gamma_3-\frac{g^2}{4}\Gamma_4-g\Omega_3\bigg)\bigg].
\end{equation}
It turns out that this relation can be maintained. To do this, it should be noted that the diagrams have a block structure (they consist of vertices and lines), therefore, maintaining similar relationships for each individual element guarantees the presence of relation \eqref{ya-a-23}.

\subsection{Regularization}
\label{ym:sec:pr:gen3-2}
Before formulating specific deformation rules, we first write out the relations for unregularized Green's functions and vertices, which are necessary to prove formula \eqref{ya-a-22}. The first relation has the form
\begin{equation}\label{ya-a-24}
\int_{\mathbb{R}^4}\mathrm{d}^4x\,\phi_\mu^a\frac{\delta\Gamma_3[a]}{\delta B_\mu^a(x)}=\frac{1}{4}
\int_{\mathbb{R}^4}\mathrm{d}^4x\,\phi_\mu^a\frac{\delta\Gamma_4[a]}{\delta a_\mu^a(x)}
\end{equation}
and it allows the following interpretation. The variation of the vertex $\Gamma_3$ in the background field is equivalent to highlighting one external line of the vertex $\Gamma_4$. This means that by varying $\Gamma_3$ in a quantum action, it is possible to obtain the entire set of diagrams from the density of the quantum equation, which is associated with marking one external line of the vertex $\Gamma_4$ during the application of the operator $\mathbb{H}_1$. The remaining relations can be written out in the same way, dividing into three groups. Indeed, let the functional differentiation operator be defined as
\begin{equation}\label{ya-a-28}
D[\psi,\phi]=\int_{\mathbb{R}^4}\mathrm{d}^4x\,\phi_\mu^a\frac{\delta}{\delta\psi_\mu^a(x)},
\end{equation}
then the first set of relations includes formula \eqref{ya-a-24} and three additional equalities, which are obtained by direct differentiation of the functionals:
\begin{equation}\label{ya-a-29}
D[B,\phi]W_{-1}=
4D[a,\phi]\Gamma_1[a],
\end{equation}
\begin{equation}\label{ya-a-26}
D[B,\phi]\Gamma_4[a]=0,\,\,\,D[B,\phi]\Omega_3[a,c,\bar{c},e]=0.
\end{equation}
The next set consists of two equalities. The first of which is related to the differentiation of the Green's function for the ghost fields
\begin{equation}\label{ya-a-27}
D[B,\phi]\mathfrak{G}_0^{ab}(x,y)=-
\int_{\mathbb{R}^4}\mathrm{d}^4z\,
D_\mu^{cd}(z)\mathfrak{G}_0^{ad}(x,z)
f^{ceg}\phi_\mu^e(z)
\mathfrak{G}_0^{gb}(z,y).
\end{equation}
It corresponds to the vertex $\Omega_3$ and allows the following interpretation: the inner dashed line with the arrow is cut and the vertex $\Omega_3$ is connected to the cut point, taking into account the rules of diagramming and an additional minus sign. Note that there is only one connection option. The second equality is related to the differentiation of the solid line. It consists of six parts
\begin{align}
D[B,\phi]\mathfrak{G}_{1\sigma\rho}^{\,\,ab}(x,y)=&+
\int_{\mathbb{R}^4}\mathrm{d}^4z\,
\mathfrak{G}_{1\sigma\mu}^{\,\,ad}(x,z)
f^{deg}\phi_\nu^e(z)D_\nu^{gc}(z)
\mathfrak{G}_{1\mu\rho}^{\,\,cb}(z,y)\\
&-
\int_{\mathbb{R}^4}\mathrm{d}^4z\,
D_\nu^{cd}(z)\mathfrak{G}_{1\sigma\mu}^{\,\,ad}(x,z)
f^{ceg}\phi_\nu^e(z)
\mathfrak{G}_{1\mu\rho}^{\,\,gb}(z,y)\\
&+
\int_{\mathbb{R}^4}\mathrm{d}^4z\,
\mathfrak{G}_{1\sigma\mu}^{\,\,ad}(x,z)
f^{deg}\phi_\mu^e(z)D_\nu^{gc}(z)
\mathfrak{G}_{1\nu\rho}^{\,\,cb}(z,y)\\
&-
\int_{\mathbb{R}^4}\mathrm{d}^4z\,
D_\mu^{cd}(z)\mathfrak{G}_{1\sigma\mu}^{\,\,ad}(x,z)
f^{ceg}\phi_\nu^e(z)
\mathfrak{G}_{1\nu\rho}^{\,\,gb}(z,y)\\
&-
\int_{\mathbb{R}^4}\mathrm{d}^4z\,
\mathfrak{G}_{1\sigma\mu}^{\,\,ad}(x,z)
f^{deg}\phi_\nu^e(z)D_\mu^{gc}(z)
\mathfrak{G}_{1\nu\rho}^{\,\,cb}(z,y)\times2\\
&+
\int_{\mathbb{R}^4}\mathrm{d}^4z\,
D_\nu^{cd}(z)\mathfrak{G}_{1\sigma\mu}^{\,\,ad}(x,z)
f^{ceg}\phi_\mu^e(z)
\mathfrak{G}_{1\nu\rho}^{\,\,gb}(z,y)\times2
\end{align}
and is interpreted as an operation of cutting a solid line with further connection of the vertex $\Gamma_3$ in all possible ways and additional multiplication by $-1$. The last third set of relations follows from the differentiation of determinants. To do this, it should be noted that the determinant of an operator in the framework of this work is understood in the sense of a perturbative decomposition in degrees of potentials for a functional integral. Their explicit formulas are presented below in \eqref{ya-a-48} and \eqref{ya-a-185}. We get two equalities
\begin{equation}\label{ya-a-30}
D[B,\phi]\ln\det(\mathfrak{G}_0)=-
\int_{\mathbb{R}^4}\mathrm{d}^4z\,
D_\mu^{cd}(z)\mathfrak{G}_0^{ad}(x,z)\big|_{x=z}
f^{cea}\phi_\mu^e(z)=\mathbb{H}_1^{\mathrm{c}}(\Omega_3)\big|_{a=\phi}
\end{equation}
and
\begin{align}
D[B,\phi]\ln\det(\mathfrak{G}_1)=&+2
	\int_{\mathbb{R}^4}\mathrm{d}^4z\,
	f^{beg}\phi_\nu^e(z)D_\nu^{gc}(z)
	\mathfrak{G}_{1\mu\mu}^{\,\,cb}(z,y)\big|_{y=z}\\
	&+2
	\int_{\mathbb{R}^4}\mathrm{d}^4z\,
	f^{beg}\phi_\mu^e(z)D_\nu^{gc}(z)
	\mathfrak{G}_{1\nu\mu}^{\,\,cb}(z,y)\big|_{y=z}\\
	&-4
	\int_{\mathbb{R}^4}\mathrm{d}^4z\,
	f^{beg}\phi_\nu^e(z)D_\mu^{gc}(z)
	\mathfrak{G}_{1\nu\mu}^{\,\,cb}(z,y)\big|_{y=z}=-2\mathbb{H}_1^{\mathrm{c}}(\Gamma_3)\big|_{a=\phi}.
\end{align}
In diagrammatic language, both relations are equivalent to cutting a single-loop correction and connecting a triple vertex in all possible ways. Additionally, we note that the last equalities should be considered as the difference between the left and right sides, so that singular components in the first orders are reduced. All the above-mentioned relations are sufficient to show the validity of formula \eqref{ya-a-22}.

So, the main requirement for a possible regularization is that relation \eqref{ya-a-22} is also fulfilled for the regularized case. This can be achieved if all the functional relations presented in this section are correct for regularized objects as well. It turns out that a suitable option exists. For convenience, we formulate the rules in two equivalent ways.\\

\textbf{The first way.} Let us choose a piecewise continuous\footnote{A more wide class of kernels is presented in \cite{ya-10,Iv-2024}.} function $\omega(\cdot)$ on the half-axis $\mathbb{R}_+$ as an averaging kernel in such a way that the relations $\mathrm{supp}(\omega)\subset[0,1/2]$ and $\omega\geqslant0$ are fulfilled. Then the introduction of regularization is called the deformation of the Green's function for the free Laplace operator of the following form
\begin{equation}\label{ya-a-36}
R_0^{\phantom{1}}(x)=\frac{1}{4\pi^2|x|^2}\to R_0^\Lambda(x)=\frac{1}{4\pi^2}
\int_{\mathbb{R}^4}\mathrm{d}^4y\int_{\mathbb{R}^4}\mathrm{d}^4z\,
\frac{\omega(|y|)\omega(|z|)}{|x+y/\Lambda+z/\Lambda|^2}.
\end{equation}
Such an operator is also called quasi-local probability averaging, and the auxiliary parameter $\Lambda\gg1$ is a regularizing one. Additionally, we assume that the smoothness properties of the kernel guarantee that $R_0^\Lambda(\cdot)$ belongs to $C^2(\mathbb{R}^4)$. It was previously shown, see \cite{Ivanov-2022}, that under such a deformation, the free Green's function admits the following representation
\begin{equation}\label{ya-a-37}
R_0^\Lambda(x)=\frac{\Lambda^2\mathbf{f}\big(|x|^2\Lambda^2\big)}{4\pi^2}+\frac{1}{4\pi^2}
\begin{cases}
	\,\,\,\Lambda^2, &\mbox{for}\,\,\,|x|\leqslant1/\Lambda;\\
|x|^{-2}, &\mbox{for}\,\,\,|x|>1/\Lambda,
\end{cases}
\end{equation}
where the support of the function $\mathbf{f}(\cdot)\in C(\mathbb{R}_+)$ is contained in $[0,1]$.

Further, returning to the explicit form of the quantum action, we note that in the process of such regularization, the vertices are not deformed, while the corresponding perturbative expansions for regularized Green's functions are obtained by replacing $R_0^{\phantom{1}}(\cdot)\to R_0^{\Lambda}(\cdot)$ and are written out explicitly as follows
\begin{fleqn}
\begin{align}\label{ya-a-38}
\mathfrak{G}_{1\sigma\rho}^{\,\,ab}(x,y)=\delta^{ab}\delta_{\sigma\rho}R_0^{\Lambda}(x-y)+
\sum_{k=1}^{+\infty}\int_{\mathbb{R}^{4\times k}}\mathrm{d}^4z_1\ldots&\mathrm{d}^4z_k\,
R_0^{\Lambda}(x-z_1)\\\nonumber\times&\big(-\mathfrak{M}_{1\sigma\mu_1}^{\,\,ac_1}(z_1)-\delta^{ac_1}\delta_{\sigma\mu_1}\partial_{z_1^\nu}\partial_{z_1^\nu}\big)R_0^{\Lambda}(z_1-z_2)\times\ldots\\\nonumber\times&
\big(-\mathfrak{M}_{1\mu_k\rho}^{\,\,c_kb}(z_k)-\delta^{c_kb}\delta_{\mu_k\rho}\partial_{z_k^\nu}\partial_{z_k^\nu}\big)R_0^{\Lambda}(z_k-y),
\end{align}
\end{fleqn}
\begin{fleqn}
\begin{align}\label{ya-a-39}
\,\mathfrak{G}_{0}^{ab}(x,y)=\delta^{ab}R_0^{\Lambda}(x-y)+
\sum_{k=1}^{+\infty}\int_{\mathbb{R}^{4\times k}}\mathrm{d}^4z_1\ldots&\mathrm{d}^4z_k\,
R_0^{\Lambda}(x-z_1)\\\nonumber\times&\big(-\mathfrak{M}_{0}^{ac_1}(z_1)-\delta^{ac_1}\partial_{z_1^\nu}\partial_{z_1^\nu}\big)R_0^{\Lambda}(z_1-z_2)\times\ldots\\\nonumber\times&
\big(-\mathfrak{M}_{0}^{c_kb}(z_k)-\delta^{c_kb}\partial_{z_k^\nu}\partial_{z_k^\nu}\big)R_0^{\Lambda}(z_k-y).
\end{align}
\end{fleqn}
It can be seen from the construction that the deformation does not affect the fields, therefore, all the functional relations outlined at the beginning of the section remain valid during the transition to the regularized (deformed) objects.\\

\textbf{The second way.} As is known, the regularization of the quantum action in a perturbative presentation should be introduced by deforming the classical action, see for example the arguments from Section 3.3 in \cite{sksk}. Therefore, we reformulate the approach to regularization described above in an alternative way. It can be shown that the deformation of \eqref{ya-a-36} corresponds to the accomplishment of three procedures with the classical action, see also \eqref{ya-z-25},
\begin{equation}\label{ya-a-40}
	S_{\mathrm{cl}}[B+ga]+S_{\mathrm{gf}}[a,e]+S_{\mathrm{g}}[B,a,c,\bar{c},e].
\end{equation}
\begin{enumerate}
	\item Average all the fluctuation fields $a_\mu^b(x)$ according to the formula
	\begin{equation}\label{ya-a-41}
		a_\mu^b(x)\to a_\mu^{\Lambda,b}(x)=
		\int_{\mathbb{R}^4}\mathrm{d}^4y\,
		\omega(|y|)a_\mu^b(x+y/\Lambda).
	\end{equation}
    \item Similarly, average all ghost fields: $c^a\to c^{\Lambda,a}$ and $\bar{c}^a\to\bar{c}^{\Lambda,a}$.
	\item Subtract quadratic forms with the free Laplace operator for averaged fields and add forms without averaging instead. In other words, add the summand to the existing classical action with averaged fields
	\begin{equation}\label{ya-a-42}
	+\frac{1}{2}
	\int_{\mathbb{R}^4}\mathrm{d}^4x\,
	\Big(a_\mu^aA_0a_\mu^a-a_\mu^{\Lambda,a}A_0a_\mu^{\Lambda,a}\Big)+
	\int_{\mathbb{R}^4}\mathrm{d}^4x\,\Big(\bar{c}^aA_0c^a-\bar{c}^{\Lambda,a}A_0c^{\Lambda,a}\Big),
	\end{equation}
where $A_0(x)=-\partial_{x^\mu}\partial_{x_\mu}$.
\end{enumerate}
Thus, the regularization consists of averaging all ghost and fluctuation fields, except those in quadratic forms with the free Laplace operator $A_0$. This approach is equivalent to deforming Green's functions from the first method. This fact follows from the application of Wick's theorem on pairings. Let us give a proof using the example of fluctuation fields. For ghost fields, it is performed similarly. It is known that in the absence of regularization, pairs of fields are replaced by the corresponding Green's functions, that is
\begin{equation*}
a_\mu^a(x)a_\nu^b(y)\longrightarrow\delta^{ab}\delta_{\mu\nu}R_0^{\phantom{1}}(x-y).
\end{equation*}
In turn, after the fields are deformed, two averaging operators appear, which, after applying Wick's theorem, act on the Green's function. Thus, we get
\begin{equation*}
a_\mu^{\Lambda,a}(x)a_\nu^{\Lambda,b}(y)\longrightarrow\delta^{ab}\delta_{\mu\nu}R_0^\Lambda(x-y).
\end{equation*}
This implies the equivalence of the two methods.

\subsection{Gauge fixing}
\label{ym:sec:pr:gen3-3}
Returning to Section \ref{ym:sec:pr:gen2}, we note that each term of the perturbative decomposition for the quantum action $W^{\mathrm{c}}_{\mathrm{reg}}[B,e]$ from \eqref{ya-a-17} is a functional that depends\footnote{Indeed, the term $S_{\mathrm{gf}}$ from \eqref{ya-a-11} does not contain the background field, and therefore $B_\mu^a$ cannot be removed by simply shifting the variable, as is usually done in scalar models, see for example \cite{Iv-2024-1}.} on the background field $B_\mu^a$ and on the field from the gauge condition, which in the proposed formulation is dictated solely by the choice of the field $e_\mu^a$. Let us do some thought experiments for the regularized case. To do this, we take advantage of the fact that the background field can be chosen as $B_\mu^a\to B_{q,\mu}^a$, so that it solves the regularized quantum equation of motion $Q_\mu^a|_{\mathrm{reg.}}[B_q,e](x)=0$, see \eqref{ya-a-18}. In this case, the background field $B_q$ will at the same time be a function dependent\footnote{Regularization can violate invariance, so the presence of dependence is quite expected.} on the field $e_\mu^a$. Based on this, we have $B_q=B_q[e]$. The question arises: "Is it possible to choose the field $e_\mu^a=\hat{e}_\mu^a$ from the gauge condition in such a way that the equality $B_q[\hat{e}]=\hat{e}$ is satisfied?"

It is impossible to give a mathematically accurate answer to this question, since practically nothing is known about the methods of finding a solution to the quantum equation of motion, about its uniqueness, or about the properties of smoothness. Nevertheless, there is a hypothetically possible procedure for constructing such a solution. It was proposed by Faddeev and can be found in a short version\footnote{See the latest version on arXiv for \cite{23}.} in \cite{ya-21,23}. Let us describe it in stages.
\begin{enumerate}
	\item Consider the density functional $Q_\mu^a|_{\mathrm{reg.}}[B,e](x)$. It depends on the fields $B$ and $e$.
	\item Substitute $e=B$ and solve the equation $Q_\mu^a|_{\mathrm{reg.}}[B,B](x)=0$.
	\item Get some solution $B_{f}$.
	\item Start anew the procedure for considering the quantum action $W^{\mathrm{c}}_{\mathrm{reg}}[B,e]$ for $e=B_f$.
	\item By construction, $B_f$ is a solution to the quantum equation of motion, that is
	\begin{equation*}
\frac{\delta W^{\mathrm{sc}}_{\mathrm{reg}}[B,B_f]}{\delta B_\mu^a(x)}\Bigg|_{B=B_f}=0.
	\end{equation*}
\end{enumerate}
Thus, the strongly connected quantum action depends on $B_f$ and is equal to $W^{\mathrm{sc}}_{\mathrm{reg}}[B_f,B_f]$. However, new issues arise, in particular, related to the possibility of carrying out the renormalization procedure. Indeed, previously, the quantum action and the equation were related by relation \eqref{ya-a-23}, which made it possible to work with only one object, the second one was renormalized automatically. In the new version, the situation has changed because
\begin{align}\label{ya-z-16}
\frac{\delta W^{\mathrm{sc}}_{\mathrm{reg}}[B,B]}{\delta B_{\mu}^a(x)}&=
\frac{\delta W^{\mathrm{sc}}_{\mathrm{reg}}[B,e]}{\delta B_\mu^a(x)}\Bigg|_{e=B}+
\frac{\delta W^{\mathrm{sc}}_{\mathrm{reg}}[B,e]}{\delta e_\mu^a(x)}\Bigg|_{e=B}\\\nonumber
&=Q_\mu^a|_{\mathrm{reg.}}[B,B](x)+
\frac{\delta W^{\mathrm{sc}}_{\mathrm{reg}}[B,e]}{\delta e_\mu^a(x)}\Bigg|_{e=B}\neq Q_\mu^a|_{\mathrm{reg.}}[B,B](x).
\end{align}
Moreover, it is precisely inequality that takes place. Direct calculations in the first two loops show that the second term has its own divergences, moreover, they do not depend on the choice of regularization. This fact symbolizes that the renormalization of the quantum action on the diagonal, that is, when $e=B$, no longer guarantees the renormalization of the quantum equation of motion. Moreover, from the renormalized $W^{\mathrm{sc}}_{\mathrm{reg}}[B,B]$ it is impossible to find the renormalized functional $Q_\mu^a|_{\mathrm{reg.}}[B,B](x)$, and this means that it will not be possible to calculate a renormalized analog for solution $B_f$.

The problem that has arisen is due to the fact that the information when working on the diagonal is much more scarce. Indeed, drawing a direct analogy with classical analysis, we note that it is impossible to reconstruct the derivatives of a function, knowing only its value at a selected point. Thus, to decipher the quantum equation of motion, it is necessary not only to renormalize $W^{\mathrm{sc}}_{\mathrm{reg}}[B,B]$, but also to additionally store some information about the variation with respect to the second argument. At the same time, working with strongly connected diagrams with a higher number of external lines leads to the need to work with the next variations with respect to the second argument.

This paper provides an analysis of the first two quantum corrections for the quantum action and the first correction for its first functional derivative. In connection with this formulation of the problem, it is proposed to work with an object of the form
\begin{equation}\label{ya-a-164}
W^{\mathrm{sc}}_{\mathrm{reg}}[B,B+\varepsilon],
\end{equation}
where the second argument $e_\mu^a$ is a small perturbation near the value of $B_\mu^a$. Thus, we discuss not only the renormalization in two loops for the action $W^{\mathrm{sc}}_{\mathrm{reg}}[B,B]$ on the diagonal, but also its relation to the first variation.
\begin{equation}\label{ya-a-165}
\frac{\delta W^{\mathrm{sc}}_{\mathrm{reg}}[B,B+\varepsilon]}{\delta \varepsilon_{\mu}^a(x)}
\Bigg|_{\varepsilon=0}.
\end{equation}
Note that the choice of $e_\mu^a=B_\mu^a+\varepsilon_\mu^a$ is also determined by the desire to work with Green's functions constructed using covariant operators. Indeed, the operators \eqref{ya-a-10} appear in the main order of $\varepsilon_\mu^a$, rather than \eqref{ya-a-32} and \eqref{ya-a-33}, which greatly simplifies calculations.\\

\noindent\textbf{Remark.} Next, for convenience, we use the notations
\begin{equation*}
W^{\mathrm{sc}}_{\mathrm{reg}}[B]\equiv W^{\mathrm{sc}}_{\mathrm{reg}}[B,B]\,\,\,\mbox{and}\,\,\, 
Q_\mu^a|_{\mathrm{reg.}}[B](x)\equiv Q_\mu^a|_{\mathrm{reg.}}[B,B](x).
\end{equation*}

\subsection{Renormalization of action}
\label{ym:sec:pr:gen3-4}
According to the general theory, renormalization of the strongly connected quantum regularized action from \eqref{ya-a-17} consists in redefining (shifting) the coupling constant $g\to g_\Lambda$, scaling the fields, and possibly multiplying individual parts of the classical action by renormalization constants. Thus, the appearance of additional terms of a new type in the classical action \eqref{ya-a-5} from the point of view of the generally accepted approach, which is inextricably linked to the use of dimensional regularization, indicates that the theory is not renormalizable. An example of such a theory is a special case of a two-dimensional non-linear sigma model (principal chiral field model) with a cutoff regularization, see \cite{AIK-25}. In this paper\footnote{Section \ref{ym:sec:mas} discusses the possibility of working with the classical renormalization process by introducing a mass term.}, renormalizability will have the classical formulation, supplemented by the ability to introduce terms with power singularities, which are not present in dimensional regularization by construction. Finally, extending the classical formulation, we assume that during the renormalization procedure, the classical action can be changed\footnote{In the case of dimensional regularization, not only the coupling constant $g$ is transformed, but also the parameter $\xi$ from the term \eqref{ya-a-11} responsible for gauge-fixing term. In our formulation, $\xi=1$. Nevertheless, the corresponding contributions with logarithmic singularities arise and are included in the second term of formula \eqref{ya-a-173}. It is shown below that such contributions are consistent with the available known results.} as follows
\begin{equation}\label{ya-a-173}
S_{\mathrm{tot}}[B,e]\longrightarrow S_{\mathrm{tot}}[B,e]\big|_{g\to g_\Lambda}+\hat{W}[B,e]
.
\end{equation}
In this case, the additive can be presented in the form of three parts
\begin{equation}\label{ya-a-174}
\hat{W}[B,e;a,c,\bar{c},g_\Lambda]=\bigg(\sum_{k=1}^{+\infty}\frac{g_\Lambda^{2k-2}}{4}\hat{W}_k[B,e]\bigg)
+g_\Lambda\hat{\Gamma}_1[B,e,a,g_\Lambda]
+\bigg(\sum_{k=2}^{+\infty}g^{k}_\Lambda\hat{\Gamma}_k[B,e;a,c,\bar{c}\,]\bigg)
,
\end{equation}
where the right hand side satisfies a set of the following assumptions.
\begin{enumerate}
	\item The first term is independent of the integration variables $a_\mu^a$, $c^a$, and $\bar{c}^a$. At the same time, important\footnote{This condition can be waived. Then the renormalization of the constant $g\to g_\Lambda\to g_{\mathrm{ren}}$ can be rewritten as the sum of auxiliary vertices.} is that $\hat{W}_k[B,B]$ is not proportional to $W_{-1}[B]$ for all index values.
	\item The second part is proportional to the first degree of the fluctuation. In this case, $\hat{\Gamma}_1$ can be decomposed in a series with respect to $g_\Lambda$, starting from the zero degree.
	\item In the third part, each coefficient is a finite set of vertices with two, three, or four external lines. Moreover, excluding the term with power singularity $\Lambda^2$ and proportional to $a_\mu^aa_\mu^a$, all vertices are parts of the classical action $S_{\mathrm{tot}}$. 
\end{enumerate}
To summarize, it can be argued that the renormalization process consists in the transformation of the form 
\begin{equation}\label{ya-a-170}
W_{\mathrm{reg}}^{\mathrm{sc}}[B,e]\longrightarrow W_{\mathrm{ren}}^{\mathrm{sc}}[B,e].
\end{equation}
In turn, in general, the ansatz looks like this
\begin{align}\label{ya-a-169}
W_{\mathrm{ren}}^{\mathrm{sc}}[B,e]&=\frac{1}{4g^2_\Lambda}
\bigg(W_{-1}[B]+\sum_{k=1}^{+\infty}g^{2k}_\Lambda\hat{W}_k[B,e]\bigg)\\\nonumber
&\phantom{=}+\bigg(\ln\det(\mathfrak{G}_0|_{\mathrm{reg.}})-\frac{1}{2}\ln\det(\mathfrak{G}_1|_{\mathrm{reg.}})+\kappa_0^\prime\bigg)\\\nonumber
&\phantom{=}-\mathbb{H}_0^{\mathrm{sc}}\bigg[\exp\bigg(-g_\Lambda\Gamma_3-\frac{g^2_\Lambda}{4}\Gamma_4-g_\Lambda\Omega_3-\sum_{k=2}^{+\infty}g^{k}_\Lambda\hat{\Gamma}_k\bigg)\bigg]
+\sum_{k=1}^{+\infty}g^{2k}\kappa_k^\prime.
\end{align}
Let us explain the notation. The strokes for the constants $\kappa_i^\prime$ indicate their redefinition, since densities independent of the background field can change during the renormalization. The new coupling constant $g_\Lambda$ is a function of the regularization parameter $\Lambda$ and can be decomposed through the finite and independent of $\Lambda$ renormalized constant $g_{\mathrm{ren}}$ as follows
\begin{equation}\label{ya-a-171}
\frac{1}{g^2_\Lambda}=\frac{1}{g_{\mathrm{ren}}^2}+\sum_{k=0}^{+\infty}a_k^{\phantom{1}}(\Lambda)g_{\mathrm{ren}}^{2k},
\end{equation}
where the coefficients $a_k^{\phantom{1}}(\Lambda)$ are polynomials of $\ln(\Lambda/\sigma)$ with degrees that are not higher than $k$ for $k>0$ and not higher than $1$ for $k=0$. The important condition is that the additives \eqref{ya-a-174} must be introduced in a coordinated manner, that is, without disrupting the connection between the quantum action and the quantum equation of motion. The fact that such a condition is met must be checked at each step.

After performing the renormalization process \eqref{ya-a-170}, taking into account the fact that the relationship between action and equation has been preserved, it is necessary to write out the renormalized quantum equation of motion and then find the renormalized quantum background field $B_{q,\mathrm{ren}}$. In the end, the usual action can be written out as
\begin{equation}\label{ya-a-172}
W_{\mathrm{ren}}^{\mathrm{sc}}[B,e]\longrightarrow
W_{\mathrm{ren}}^{\mathrm{sc}}[B_{q,\mathrm{ren}},e]=W_{\mathrm{ren}}^{\mathrm{c}}[B_{q,\mathrm{ren}},e].
\end{equation}

\subsection{Renormalization near diagonal}
\label{ym:sec:pr:gen3-5}
In Section \ref{ym:sec:pr:gen3-3}, it was noted that in Faddeev's approach, when studying the action of \eqref{ya-a-17}, some information is lost, since it is impossible to unambiguously restore the global properties of a function by its value at a point. Therefore, in order to preserve the connection between the action and the equation, it is necessary to study in parallel the first variation of \eqref{ya-a-165} over the field responsible for the gauge condition. 

Consider the renormalization problem near the diagonal and choose $e_\mu^a=B_\mu^a+\varepsilon_\mu^a$ as a small change in the background field of $B_\mu^a$ with fixed specified boundary conditions. Then the functionals depending on $e_\mu^a$ can be represented as finite series with respect to a small fluctuation of $\varepsilon_\mu^a$ as follows:
\begin{fleqn}
\begin{equation*}
\mbox{\textbf{Vertex:}}\,\,\,\,\,\,\Omega_3=\Omega_3\big|_{e=B}+\int_{\mathbb{R}^4}\mathrm{d}^4x\,
\varepsilon_\mu^{ab}\bar{c}^bf^{aed}a_\mu^ec^d;
\end{equation*}
\end{fleqn}
\begin{fleqn}
\begin{align*}
\mbox{\textbf{Operators:}}\,\,\,\,\,\,\,\,\mathfrak{M}_0^{ab}(x)&=M_0^{ab}(x)-\varepsilon_\mu^{ae}(x)D_\mu^{eb}(x),\\
\mathfrak{M}_{1\mu\nu}^{\,\,ab}(x)&=M_{1\mu\nu}^{\,\,ab}(x)-
\varepsilon_\mu^{ae}(x)D_\nu^{eb}(x)-D_\mu^{ae}(x)\varepsilon_\nu^{eb}(x)-
\varepsilon_\mu^{ae}(x)\varepsilon_\nu^{eb}(x);
\end{align*}
\end{fleqn}
\begin{fleqn}
	\begin{align*}
		\mbox{\textbf{Renormalization functionals:}}\,\,\,\,\,\,\hat{W}_k^{\phantom{1}}[B,e]&=\hat{W}_k^0[B]+\hat{W}_k^1[B,\varepsilon]+\ldots,
		\\
		\hat{\Gamma}_k^{\phantom{1}}[B,e]&=\hat{\Gamma}_k^0[B]+\hat{\Gamma}_k^1[B,\varepsilon]+\ldots,
	\end{align*}
\end{fleqn}
where $\hat{\Gamma}_k^i$ and $\hat{W}_k^i$ are proportional to the $i$-th power of $\varepsilon$. Also we introduce three additional auxiliary vertices
\begin{align}\label{ya-z-22}
V_1[B,\varepsilon]&=\int_{\mathbb{R}^4}\mathrm{d}^4x\,\bar{c}^a\varepsilon_\mu^{ae}D_\mu^{eb}c^b,
\\\label{ya-z-23}
V_2[B,\varepsilon]&=\int_{\mathbb{R}^4}\mathrm{d}^4x\,
\big(a_\mu^e\varepsilon_\mu^{ea}\big)
\big(D_\nu^{ab}a_\nu^b\big),
\\\label{ya-z-24}
V_3[\varepsilon]&=\int_{\mathbb{R}^4}\mathrm{d}^4x\,\varepsilon_\mu^{ab}\bar{c}^bf^{aed}a_\mu^ec^d.
\end{align}
\textbf{Remark.} Next, the symbol $\Omega_3$ will denote the vertex at $e_\mu^a=B_\mu^a$, and the connecting lines \eqref{ya-a-15} will be matched with the regularized Green's functions $G_0$ and $G_1$ taking into account Section \ref{ym:sec:pr:gen3-2}.\\

\noindent Note that the strongly connected renormalized action \eqref{ya-a-169} can be represented as a series by powers of the small field $\varepsilon_\mu^a$ as follows
\begin{equation*}
W_{\mathrm{ren}}^{\mathrm{sc}}[B,B+\varepsilon]=
W_{\mathrm{ren}}^{\mathrm{sc}}[B,B]+
\dot{W}_{\mathrm{ren}}^{\mathrm{sc}}[B,\varepsilon]+\mathcal{O}(\varepsilon^2),
\end{equation*}
where the second coefficient with the dot denotes the derivative of the action near the diagonal with respect to the second argument and, thus, is linear\footnote{Formula \eqref{ya-a-175} actually defines the auxiliary field $J_\mu^a$.} perturbation
\begin{equation}\label{ya-a-175}
\dot{W}_{\mathrm{ren}}^{\mathrm{sc}}[B,\varepsilon]=\Big(\partial_s^{\phantom{1}}W_{\mathrm{ren}}^{\mathrm{sc}}[B,B+s\varepsilon]\Big)
\Big|_{s=0}=
\int_{\mathbb{R}^4}\mathrm{d}^4x\,\varepsilon_\mu^a(x)J_\mu^a(x).
\end{equation}
At the same time, the functionals themselves, taking into account ansatz \eqref{ya-a-169}, allow explicit representations
\begin{fleqn}
\begin{align}\label{ya-a-176}
	W_{\mathrm{ren}}^{\mathrm{sc}}[B,B]&=\frac{1}{4g^2_\Lambda}
	\bigg(W_{-1}^{\phantom{1}}+\sum_{k=1}^{+\infty}g^{2k}_\Lambda\hat{W}_k^0\bigg)
+\bigg(\ln\det(G_0|_{\mathrm{reg.}})-\frac{1}{2}\ln\det(G_1|_{\mathrm{reg.}})+\kappa_{0}^\prime\bigg)\\\nonumber
	&\,\,\,\,\,\,\,~~~~~~~~~~~~~~~~~~~~~~~~~~~~~
	-\mathbb{H}_0^{\mathrm{sc}}\bigg[\exp\bigg(-g_\Lambda\Gamma_3-\frac{g^2_\Lambda}{4}\Gamma_4-g_\Lambda\Omega_3-\sum_{k=2}^{+\infty}g^{k}_\Lambda\hat{\Gamma}_k^0\bigg)\bigg]
	+\sum_{k=1}^{+\infty}g^{2k}\kappa_{k}^\prime,
\\\label{ya-a-177}
\dot{W}_{\mathrm{ren}}^{\mathrm{sc}}[B,\varepsilon\,]&=\frac{1}{4}
	\sum_{k=1}^{+\infty}g^{2k-2}_\Lambda\hat{W}_k^1
-\mathbb{H}_0^{\mathrm{sc}}\bigg[
	\bigg(V_1+V_2-V_3-\sum_{k=2}^{+\infty}g^{k}_\Lambda\hat{\Gamma}_k^1\bigg)\\\nonumber&
	\,\,\,\,\,\,\,~~~~~~~~~~~~~~~~~~~~~~~~~~~~~
	\times
	\exp\bigg(-g_\Lambda\Gamma_3-\frac{g^2_\Lambda}{4}\Gamma_4-g_\Lambda\Omega_3-\sum_{k=2}^{+\infty}g^{k}_\Lambda\hat{\Gamma}_k^0\bigg)\bigg].
\end{align}
\end{fleqn}
Thus, it can be seen from the construction that the two actions are not equivalent to each other, therefore it is necessary to renormalize for both. It should be noted that the renormalization of the first one \eqref{ya-a-176} can be performed independently. In this case, the study of the "$k$-th loop", subject to the availability of results for previous corrections, leads to answers for the following renormalization values:
\begin{align}\nonumber
	\mbox{$1$-st correction for}\,\,W_{\mathrm{ren}}^{\mathrm{sc}}\,\,&\longrightarrow\,\,
	a_{0},\,\,\hat{W}_1^0;
	\\\nonumber
\mbox{$2$-nd correction for}\,\,W_{\mathrm{ren}}^{\mathrm{sc}}\,\,&\longrightarrow\,\,
a_{1},\,\,\hat{W}_2^0,\,\,\hat{\Gamma}_{2}^0;
\\\label{ya-a-182}
\mbox{$k$-th correction for}\,\,W_{\mathrm{ren}}^{\mathrm{sc}}\,\,&\longrightarrow\,\,
a_{k-1},\,\,\hat{W}_k^0,\,\,\hat{\Gamma}_{2k-2}^0,\,\,\hat{\Gamma}_{2k-3}^0,\,\,
\mbox{if}\,\,k>2.
\end{align}
In turn, the renormalization of the second action depends on the auxiliary vertices and coefficients found when working with the first one. The procedure turns out to be recurrent again:
\begin{align}\nonumber
	\mbox{$1$-st corrections for}\,\,W_{\mathrm{ren}}^{\mathrm{sc}}\,\,\mbox{and}\,\,\dot{W}_{\mathrm{ren}}^{\mathrm{sc}}\,\,&\longrightarrow\,\,
\hat{W}_1^1;
\\\nonumber
	\mbox{$2$-nd corrections for}\,\,W_{\mathrm{ren}}^{\mathrm{sc}}\,\,\mbox{and}\,\,\dot{W}_{\mathrm{ren}}^{\mathrm{sc}}\,\,&\longrightarrow\,\,
\hat{W}_2^1,\,\,\hat{\Gamma}_{2}^1;
\\\label{ya-a-183}
	\mbox{$k$-th corrections for}\,\,W_{\mathrm{ren}}^{\mathrm{sc}}\,\,\mbox{and}\,\,\dot{W}_{\mathrm{ren}}^{\mathrm{sc}}\,\,&\longrightarrow\,\,
	\hat{W}_k^1,\,\,\hat{\Gamma}_{2k-2}^1,\,\,\hat{\Gamma}_{2k-3}^1,\,\,
	\mbox{if}\,\,k>2.
\end{align}
Recall that the second object from \eqref{ya-a-177} is auxiliary and is used exclusively to reconstruct the quantum equation of motion. Indeed, taking into account \eqref{ya-a-175}, the following formula is valid
\begin{align}\label{ya-a-180}
Q_\mu^a\Big|_{\mathrm{ren.}}[B](x)&=
\frac{\delta W^{\mathrm{sc}}_{\mathrm{ren}}[B,B]}{\delta B_{\mu}^a(x)}-
\frac{\delta W^{\mathrm{sc}}_{\mathrm{ren}}[B,B+\varepsilon]}{\delta \varepsilon_\mu^a(x)}\Bigg|_{\varepsilon=0}\\\nonumber
&=\frac{\delta W^{\mathrm{sc}}_{\mathrm{ren}}[B,B]}{\delta B_{\mu}^a(x)}-J_\mu^a[B](x).
\end{align}
However, the resulting auxiliary field $J_\mu^a$ is cumbersome. Let us highlight the elements that can be used to restore it, if necessary. It is clear that, unlike \eqref{ya-a-176}, formula \eqref{ya-a-177} additionally contains new parts of the classical action $\hat{W}_k^1$ and vertices $\hat{\Gamma}_k^1$. Therefore, knowing them, we can restore the second action entirely. Let us define one more auxiliary field
\begin{equation}\label{ya-a-181}
\int_{\mathbb{R}^4}\mathrm{d}^4x\,\varepsilon_\mu^a(x)j_\mu^a[B](x)=
\frac{1}{4}\sum_{k=1}^{+\infty}g^{2k-2}\hat{W}_k^1[B,\varepsilon]
+\sum_{k=2}^{+\infty}g^{k}\hat{\Gamma}_k^1[B,\varepsilon].
\end{equation}
However, the part with $\hat{\Gamma}_1^1$ is missing from the definition, since it is actually missing\footnote{Because the incoming diagrams must be strongly connected.} in the quantum equation of motion on the diagonal and in the auxiliary field $J_\mu^a$. Thus, if we define the field $j_\mu^a$, then using \eqref{ya-a-177}, we get $J_\mu^a$.

\subsection{What do we calculate?}
\label{ym:sec:pr:gen3-6}
So, from a computational point of view, the main purpose of Section \ref{ym:sec:pr:gen3} devoted to strong deformation is to demonstrate the process of determining the coefficients and vertices of renormalization in the first two "loops" for the action on the diagonal, as well as for the first correction in the case of the first functional derivative. Using the schemes from \eqref{ya-a-182} and \eqref{ya-a-183}, we note that it is necessary to perform the following tasks:
\begin{align*}
\mbox{1-st correction for}\,\,W_{\mathrm{ren}}^{\mathrm{sc}}&\,\,\longrightarrow\,\,\mbox{answers for}\,\,
a_0,\,\,\hat{W}_1^0;
\\
\mbox{1-st correction for}\,\,\dot{W}_{\mathrm{ren}}^{\mathrm{sc}}&\,\,\longrightarrow\,\,\mbox{answer for}\,\,\hat{W}_1^1;
\\
\mbox{2-nd correction for}\,\,W_{\mathrm{ren}}^{\mathrm{sc}}&\,\,\longrightarrow\,\,\mbox{answer for}\,\,
\hat{\Gamma}_2^0\,\,+\,\,\mbox{structure for}\,\,a_1,\,\,\hat{W}_2^0.
\end{align*}
Using expansions \eqref{ya-a-176} and \eqref{ya-a-177} obtained in the previous section, we write out the first orders with respect to the renormalized coupling constant for the quantum action on the diagonal and for its first derivative with respect to the second argument
\begin{fleqn}
\begin{align}\label{ya-z-17}
	W_{\mathrm{ren}}^{\mathrm{sc}}[B,B]&=\frac{1}{4g^2_{\mathrm{ren}}}W_{-1}[B]+W_0^{\Lambda}[B]+
	g^2_{\mathrm{ren}}W_1^{\Lambda}[B]+\mathcal{O}\big(g^4_{\mathrm{ren}}\big),\\\label{ya-z-18}
~~~~~~~~~~~~~~	\dot{W}_{\mathrm{ren}}^{\mathrm{sc}}[B,\varepsilon\,]&=\dot{W}_0^{\Lambda}[B,\varepsilon]+\mathcal{O}\big(g^2_{\mathrm{ren}}\big),
\end{align}
\end{fleqn}
where the coefficients for arbitrary $B_\mu^a$ and $\varepsilon_\mu^a$ are determined by the following equalities
\begin{fleqn}
\begin{align}\label{ya-a-49}
	W_0^{\Lambda}&=\ln\det(G_0|_{\mathrm{reg.}})-\frac{1}{2}\ln\det(G_1|_{\mathrm{reg.}})
	+\frac{1}{4}\Big(a_0W_{-1}^{\phantom{1}}+\hat{W}_1^0\Big)+\kappa_0^\prime,
	\\\label{ya-a-163}
~~~~~~~~~~~~~~	W_1^{\Lambda}&=-\frac{1}{2}\mathbb{H}_0^{\mathrm{sc}}\big(\Gamma_3^2\big)
	+\frac{1}{4}\mathbb{H}_0^{\mathrm{sc}}\big(\Gamma_4^{\phantom{1}}\big)-\frac{1}{2}\mathbb{H}_0^{\mathrm{sc}}\big(\Omega_3^2\big)+\mathbb{H}_0^{\mathrm{sc}}\big(\hat{\Gamma}_2^0\big)+
	\frac{1}{4}\Big(a_1W_{-1}^{\phantom{1}}+\hat{W}_2^0\Big)+\kappa_1^\prime,
\\\label{ya-a-178}
	\dot{W}_0^{\Lambda}&=\frac{1}{4}\hat{W}_1^1-\mathbb{H}_0^{\mathrm{sc}}
	\big(V_1+V_2\big).
\end{align}
\end{fleqn}
In this case, the diagrammatic representation of the main coefficients can be written in the following standard way
\begin{align}\label{ya-a-19}
	\mathbb{H}_0^{\mathrm{sc}}\big(\Gamma_3^2\big)&=
	{\centering\adjincludegraphics[width = 1.2 cm, valign=c]{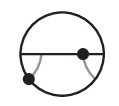}}-
	{\centering\adjincludegraphics[width = 1.2 cm, valign=c]{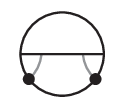}}+
	{\centering\adjincludegraphics[width = 1.2 cm, valign=c]{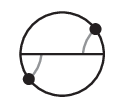}}-2
	{\centering\adjincludegraphics[width = 1.2 cm, valign=c]{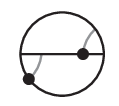}}+
	{\centering\adjincludegraphics[width = 1.2 cm, valign=c]{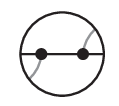}},
	\\\label{ya-a-20}
	\mathbb{H}_0^{\mathrm{sc}}\big(\Omega_3^2\big)&=
	-{\centering\adjincludegraphics[width = 1.2 cm, valign=c]{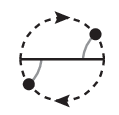}},
	\\\label{ya-a-21}
	\mathbb{H}_0^{\mathrm{sc}}\big(\Gamma_4\big)&=
	{\centering\adjincludegraphics[width = 1.2 cm, valign=c]{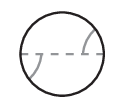}}-
	{\centering\adjincludegraphics[width = 1.2 cm, valign=c]{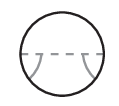}}+
	{\centering\adjincludegraphics[width = 2 cm, valign=c]{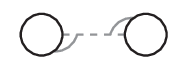}}.
\end{align}
Next, we use the assumption that the ansatz \eqref{ya-a-176} and \eqref{ya-a-177} do not contain ultraviolet singularities after a suitable choice of renormalization coefficients. The feasibility of such an assumption is quite expected, since the Yang--Mills theory is renormalizable in the case of dimensional regularization, as evidenced by the value of the divergence index. It is assumed that in the case of the cutoff, it is renormalizable in the generalized\footnote{See discussions in Section \ref{ym:sec:pr:gen3-4}.} formulation involving power-law divergences. Then it can be argued that there are no ultraviolet singularities in each individual order with respect to the coupling constant $g_{\mathrm{ren}}$. Therefore, each order leads to recurrence equations, which, after introducing a special sign for the equality of the singular parts, can be written as follows
\begin{align}\label{ya-a-184}
W_0^{\Lambda}&\stackrel{\mathrm{s.p.}}{=}0,\,\,\,W_1^{\Lambda}\stackrel{\mathrm{s.p.}}{=}0
\,\,\,\mbox{etc.},
\\\label{ya-a-184-1}
\dot{W}_0^{\Lambda}&\stackrel{\mathrm{s.p.}}{=}0\,\,\,\mbox{etc.}
\end{align}
The study of the explicit form of renormalization constants and vertices included in the selected relations is the main task associated with the strong deformation.

\subsection{Results}
\label{ym:sec:pr:gen3-7}
\begin{theorem}\label{ya-t1}
Let us assume that all the assumptions formulated above in Section \ref{ym:sec:pr:gen3} are fulfilled. Let also $L=\ln(\Lambda/\sigma)$, where $\Lambda$ is a regularization parameter and $\sigma>0$ is an auxiliary fixed parameter to make combinations dimensionless. We additionally define the functionals
\begin{equation*}
S_2[B]=\int_{\mathbb{R}^4}\mathrm{d}^4x\,B_\mu^a(x)B_\mu^a(x),\,\,\,
S_{\mathrm{f}}[a,e]=\int_{\mathbb{R}^4}\mathrm{d}^4x\,\Big(D_\mu^{ab}(x)a_\mu^b(x)\Big)\Big(\mathfrak{D}_\nu^{ac}(x)a_\nu^c(x)\Big),
\end{equation*}
as well as six functionals $W_{-1}^i[B]$, where $i\in\{1,\ldots,6\}$, which are parts of the classical action and are determined by integrals over $\mathbb{R}^4$ from densities of the following form
\begin{equation*}
\begin{tabular}{ccc}
$\big(\partial_{y^\mu}B_\mu^a(y)\big)\big(\partial_{y^\nu}B_\nu^a(y)\big)$,&
$\big(\partial_{y^\mu}B_\mu^a(y)\big)f^{abc}B_{\nu}^b(y)B_\nu^c(y)$,&
$f^{ade}B_{\mu}^d(y)B_\mu^e(y)f^{abc}B_{\nu}^b(y)B_\nu^c(y)$,\\
$\big(\partial_{y^\mu}B_\nu^a(y)\big)\big(\partial_{y^\mu}B_\nu^a(y)\big)$,&
$\big(\partial_{y^\mu}B_\nu^a(y)\big)f^{abc}B_{\mu}^b(y)B_\nu^c(y)$,&
$f^{ade}B_{\mu}^d(y)B_\nu^e(y)f^{abc}B_{\mu}^b(y)B_\nu^c(y)$.
\end{tabular}
\end{equation*}
Next, we define several\footnote{Here $\mathrm{B}_{1}$ is a closed ball of unit radius centered at the origin.} auxiliary numbers\footnote{They are functionals that depend on the regularizing function $\mathbf{f}(\cdot)$ and were introduced when calculating asymptotic expansions with respect to the parameter $\Lambda$ in Section \ref{ym:sec:two-vs}.}, depending on the regularized free fundamental solution $R_0^1(x)$, see \eqref{ya-a-37},
\begin{equation*}
	\frac{\rho_0}{4\pi^2}=R_0^1(0),\,\,\,
	\frac{\rho_3}{4\pi^2}=\int_{\mathrm{B}_{1}}\mathrm{d}^4x\,R_0^1(x)A_0^{\phantom{1}}(x)R_0^1(x),\,\,\,
	\frac{2\rho_4}{\pi^2}=\int_{\mathrm{B}_{1}}\mathrm{d}^4x\,R_0^1(x)|x|^2A_0^{\phantom{1}}(x)R_0^1(x),
\end{equation*}
\begin{equation*}
	\frac{\rho_2}{8\pi^2}=
	\int_{\mathrm{B}_{1}}\mathrm{d}^4x\,A_0^{\phantom{1}}(x)R_0^1(x)
	\int_{\mathbb{R}^4}\mathrm{d}^4y\,\Big(R_0^1(x-y)R_0^1(y)-R_0^1(y)R_0^1(y)\Big),
\end{equation*}
\begin{equation*}
	\frac{\rho_6}{4\pi^2}=\int_{\mathrm{B}_{1}}\mathrm{d}^4x\,A_0^{\phantom{1}}(x)R_0^1(x)
	\int_{\mathbb{R}^4}\mathrm{d}^4y\,R_0^1(x-y)A_0^{\phantom{1}}(y)R_0^1(y),
\end{equation*}
as well as
\begin{align*}
	\frac{\rho_7}{16\pi^4}=\int_{\mathbb{R}^4}\mathrm{d}^4x\,\bigg(
	\Big(&-2\partial_{x_\mu}\tilde{\theta}_\mu(x)-\tilde{\tau}(x)-R_0^1(x)\Big)\Big(R_0^1(x)\Big)^2\\&
	+2R_0^1(x)\Big(A_0^{\phantom{1}}(x)R_0^1(x)\Big)
	\Big(\theta(x)-A_0^{\phantom{1}}(x)\tau(x)\Big)\Big|_{\Lambda=1,\,\sigma\to+0}\\&+
	R_0^1(x)\tilde{\theta}_\mu(x)\Big(\partial_{x_\mu}R_0^1(x)-A_0^{\phantom{1}}(x)\tilde{\theta}_\mu(x)\Big)
	-\frac{1}{2}\tilde{\theta}_\mu(x)\tilde{\theta}_\mu(x)A_0^{\phantom{1}}(x)R_0^1(x)
	\bigg)
	,
\end{align*}
where
\begin{align*}
\tilde{\theta}_\mu(x)&=\int_{\mathbb{R}^4}
\mathrm{d}^4z\,R_0^1(x-z)\partial_{z^\mu}R_0^1(z),\\
\tilde{\tau}(x)&=
\int_{\mathbb{R}^4}\mathrm{d}^4z\int_{\mathbb{R}^4}\mathrm{d}^4y\,R_0^{1}(x-z)A_0^{\phantom{1}}(z)R_0^{1}(z-y)A_0^{\phantom{1}}(y)R_0^{1}(y).
\end{align*}
Also, let the numbers $\tilde{a}$, $\{\tilde{a}_i,\tilde{b}_i\}_{i=0,1}$, and $\{\tilde{d}_i\}_{i=1}^6$ denote arbitrary finite elements from $\mathbb{R}$. Then the presented equalities \eqref{ya-a-184} and \eqref{ya-a-184-1} lead to the following coefficients and auxiliary vertices

\begin{fleqn}
\begin{equation*}
1)\,\,W_0^{\Lambda}\stackrel{\mathrm{s.p.}}{=}0\,\,\Longrightarrow\,\,
a_0(\Lambda)=\frac{11c_2L}{24\pi^2}+\tilde{a}_0,\,\,\,\hat{W}_1^0[B]=\frac{c_2\Lambda^2(\rho_3-2\rho_0)}{2\pi^2}S_2[B].
\end{equation*}
\end{fleqn}
\begin{fleqn}
	\begin{equation*}
		2)\,\,\dot{W}_0^{\Lambda}\stackrel{\mathrm{s.p.}}{=}0\,\,\Longrightarrow\,\,
		\hat{W}_1^1[B,\varepsilon]=\Gamma_1[\varepsilon]\bigg(\frac{c_2L}{2\pi^2}+\tilde{b}_0\bigg).
	\end{equation*}
\end{fleqn}
\begin{fleqn}
\begin{align*}
3)\,\,W_1^{\Lambda}\stackrel{\mathrm{s.p.}}{=}0\,\,\Longrightarrow\,\,
\hat{\Gamma}_2^0[B]&=\frac{\Lambda^2c_2(2\rho_3-3\rho_0)}{8\pi^2}
S_2[a]-\bigg(\frac{5Lc_2}{48\pi^2}+\tilde{b}_1\bigg)S_{\mathrm{f}}[a,B].\\
a_1(\Lambda)&=\frac{c_2^2L}{(4\pi)^4}\Big(26+2^417\rho_2/3-2^7\rho_4+\tilde{a}\Big)+\tilde{a}_1,
\\
\hat{W}_2^0[B]&=\frac{c_2^2\Lambda^2}{2\pi^4}\bigg(\frac{5L}{24}\big(\rho_0+\rho_6-2\rho_3\big)-\rho_2\rho_3-\rho_7\bigg)S_2[B]
\\
&-\frac{\tilde{a}c_2^2L}{(4\pi)^4}W_{-1}[B]+\Big(\mbox{linear combination of}\,\,\big(L+\tilde{d}_i\big)\times W_{-1}^i[B]\Big).
\end{align*}
\end{fleqn}
\end{theorem}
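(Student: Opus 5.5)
The proof is a direct computation of the singular parts of \eqref{ya-a-49}, \eqref{ya-a-178} and \eqref{ya-a-163}, followed by solving the recurrence equations \eqref{ya-a-184}, \eqref{ya-a-184-1} order by order.

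\emph{One-loop action.} For $W_0^\Lambda$ we expand $\ln\det(G_0|_{\mathrm{reg.}})$ and $\ln\det(G_1|_{\mathrm{reg.}})$ in powers of the potentials, using \eqref{ya-a-38} and \eqref{ya-a-39} with $e=B$. Only the terms of degree at most four in $B$ can be ultraviolet singular. Substituting the representation \eqref{ya-a-37}, i.e. $R_0^\Lambda(x)=\Lambda^2R_0^1(x\Lambda)$, and rescaling $x\to x/\Lambda$ inside $\mathrm{B}_{1/\Lambda}$ gives the asymptotics. The quadratic tadpole-type terms produce $\Lambda^2 S_2[B]$ with coefficients built from $R_0^1(0)=\rho_0/4\pi^2$ and $\int R_0^1A_0R_0^1=\rho_3/4\pi^2$. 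The remaining terms combine into $L\,W_{-1}$ with the standard coefficient $-\tfrac{11c_2}{3}\cdot\tfrac{1}{(4\pi)^2}\cdot 4$. Setting the singular part to zero fixes $a_0$ up to a finite constant $\tilde a_0$ and forces $\hat W_1^0$ to cancel the noncovariant $\Lambda^2S_2$ term.

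\emph{One-loop variation.} For $\dot W_0^\Lambda=\tfrac14\hat W_1^1-\mathbb{H}_0^{\mathrm{sc}}(V_1+V_2)$ there is only one contraction of each vertex: a single loop of $G_0$ or $G_1$ with a covariant derivative. Power singularities cancel by antisymmetry of $f^{abc}$ and parity. The logarithmic part is local and linear in $\varepsilon$, and by dimension and covariance it must be proportional to $\int\varepsilon_\nu^aD_\mu^{ab}F_{\mu\nu}^b=-\Gamma_1[\varepsilon]$. Computing its coefficient from the $\ln$-part of $G_1$ near the diagonal gives $\hat W_1^1$.

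\emph{Two loops.} For $W_1^\Lambda$ we evaluate the diagrams \eqref{ya-a-19}–\eqref{ya-a-21} plus $\mathbb{H}_0^{\mathrm{sc}}(\hat\Gamma_2^0)$. The key step is to isolate subdivergences. Each two-loop diagram with a singular one-loop subgraph (self-energy insertion on a line) contains $\Lambda^2$ and $L$ pieces multiplying lower-order Green's-function structures. These nonlocal or mixed pieces must be cancelled by the new counter-vertex $\hat\Gamma_2^0$, which is chosen as $\Lambda^2c_2(2\rho_3-3\rho_0)S_2[a]/8\pi^2$ plus an $L$-multiple of $S_{\mathrm f}[a,B]$. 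The latter reflects the $\xi$-renormalization remark in footnote to \eqref{ya-a-173}. After this cancellation the remainder is local, and we expand it to degree four in $B$. The coefficient of $L\,W_{-1}$ gives $a_1$; the regularization-dependent combination $2^417\rho_2/3-2^7\rho_4$ arises from the genuinely two-loop integrals with two $R_0^1$ kernels meeting inside $\mathrm{B}_1$. The $\Lambda^2S_2$ and noncovariant $L\,W_{-1}^i$ pieces go into $\hat W_2^0$, and the freedom $\tilde a$ reflects the arbitrariness in splitting $L W_{-1}$ between $a_1$ and $\hat W_2^0$.

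\emph{Main obstacle.} I expect the hardest step to be the asymptotic evaluation of the two-loop $\Lambda^2$ coefficient, i.e. producing $\rho_7$ with the auxiliary functions $\tilde\theta_\mu$ and $\tilde\tau$. This requires careful treatment of overlapping regions where both loop momenta are cut. My first attempt would be to write each diagram in coordinates, subtract the one-loop counterterm insertions, and reduce by scaling to integrals over $\mathrm{B}_1$ of products of $R_0^1$ and $A_0R_0^1$, leaving the $\sigma\to+0$ limit to regulate infrared tails.
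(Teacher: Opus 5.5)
Your overall plan is the paper's: compute the singular parts of the two one-loop determinants, then the one-loop $e$-variation, then the two-loop diagrams with subdivergences absorbed by $S_2[a]$ and $S_{\mathrm f}[a,B]$. Two steps, however, fail as written.

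\textbf{The one-loop variation.} The power singularities do \emph{not} drop out by antisymmetry of $f^{abc}$ or by parity. The only candidate, $\Lambda^2\int\varepsilon_\nu^aB_\nu^a$, is colour-symmetric and parity-even, and each loop produces it with a nonzero coefficient:
\begin{itemize}
\item the ghost loop $-\mathbb{H}_0^{\mathrm{sc}}(V_1)$ gives $+\frac{c_2\Lambda^2(\rho_3-2\rho_0)}{8\pi^2}$;
\item the gluon loop $-\mathbb{H}_0^{\mathrm{sc}}(V_2)$ gives $-\frac{c_2\Lambda^2(\rho_3-2\rho_0)}{8\pi^2}$, cf.\ \eqref{ya-a-67}, \eqref{ya-a-73}.
\end{itemize}
So the absence of $\Lambda^2$ in $\hat W_1^1$ is a ghost--gluon cancellation that must be computed. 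For the same reason you cannot invoke covariance to fix the $L$-part, because the strong deformation is not background covariant. Both loops also feed the $L$-part, with $-\frac{1}{48}$ and $-\frac{5}{48}$ in units of $\frac{c_2L}{\pi^2}\Gamma_1[\varepsilon]$, so $G_1$ alone is not enough.

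The paper obtains all of this cheaply. It notes that $-\mathbb{H}_0^{\mathrm{sc}}(V_1)$ and $2\mathbb{H}_0^{\mathrm{sc}}(V_2)$ are the $e$-variations of $\ln\det\mathfrak{G}_0$ and $\ln\det\mathfrak{G}_1$ at $e=B$. By the kernel symmetry $G_0^{ab}(x,y)=G_0^{ba}(y,x)$, the ghost variation equals half the $B$-variation of \eqref{ya-a-64}. The vector case \eqref{ya-a-70} needs only five terms of its $F$-expansion.

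Also fix the one-loop number. The $LW_{-1}$ coefficient of $\ln\det G_0-\frac12\ln\det G_1$ is $-\frac{11c_2}{96\pi^2}$, i.e.\ $-\frac{11c_2}{3(4\pi)^2}\cdot\frac12$, not $\cdot 4$. Your factor would make $a_0$ eight times too large.

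\textbf{The two-loop part.} It is not true that all nonlocal pieces are cancelled by $\hat\Gamma_2^0$. Isolating them needs a concrete device. In the paper this is the splitting $G_1^\Lambda=\delta G_0^\Lambda+\mathcal N+\mathcal L+PS_1$, $G_0^\Lambda=G_{\mathrm{loc}}+PS_0$, with the smooth remainders $PS_i$ inserted as test functions into the one-loop kernels of \eqref{ya-a-102}, \eqref{ya-a-103-1}. The resulting logarithmic nonlocal terms contain more than $L\int D_\rho D_\sigma PS_{1\sigma\rho}$, which $S_{\mathrm f}$ removes. They also contain $L\int M_1PS_1$, $L\int M_0PS_0$ and explicit $L\int F_{\rho\sigma}PS_{1\sigma\rho}$, see \eqref{ya-a-121}, \eqref{ya-a-122}, \eqref{ya-a-134}. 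None of these has the form of $S_2[a]$ or $S_{\mathrm f}$:
\begin{itemize}
\item the $F\,PS_1$ terms cancel between the $\Gamma_3^2$ and $\Gamma_4$ groups only after rewriting $M_{1\rho\sigma}=\delta_{\rho\sigma}M_0-2F_{\rho\sigma}$;
\item the $M_1PS_1$ and $M_0PS_0$ terms are harmless only because $G_1$, $G_0$ invert $M_1$, $M_0$ up to the smeared $\delta$-function. This reduces them to local expressions whose $\rho_2$-parts cancel, \eqref{ya-a-161}, \eqref{ya-a-162}.
\end{itemize}
Without this step your claim that the remainder is local is unsupported.

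Two further points need attention. First, derive the coefficient $2\rho_3-3\rho_0$ rather than read it off: $\rho_3$ comes from the $\Gamma_3^2$ and $\Omega_3^2$ kernels, $\rho_0$ from the $\Gamma_4$ tadpoles. Second, track the local parts of the counter-diagrams \eqref{ya-a-158}, \eqref{ya-a-159}. They produce $\rho_5$ terms that cancel those of the diagrams and a $\rho_2$ term that enters $a_1$. Through $\mathrm{J}_{\otimes}$ and $\mathrm{J}_{\odot}$ they also produce the $L\Lambda^2$ and $\rho_2\rho_3$ terms of $\hat W_2^0$.
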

\vspace{3mm}
\noindent\textbf{Comment 1.} The first coefficient $a_0$ has the standard value and is consistent with the results when using other\footnote{The logarithmic singularity in the first correction does not depend on the choice of regularization and the scheme of subtraction, therefore, in all cases it has the same form.} regularizations, see for example \cite{16,17}. The second coefficient $a_1$ has a different value from the one obtained earlier in dimensional regularization, see \cite{104,104-11} or in the framework of the background field method \cite{11,12}, and in the case of implicit regularization, see \cite{chi-2}. It depends on the regularizing function $\mathbf{f}(\cdot)$, see \eqref{ya-a-37}, and, moreover, can be shifted by selecting the free parameters $\tilde{a}$ and $\tilde{a}_1$. In this case, the parameter $\tilde{a}$ is a consequence of the ambiguity of fixing the auxiliary functional of the renormalization $\hat{W}_2^0$, which appeared due to the lack of invariance with respect to the gauge transformations of the background field and, as a result, the splitting of the classical action. The strict additional conditions for fixing the parameter are not clear at the moment and may be chosen out of convenience. For example, if we define
\begin{equation*}
\tilde{a}=-59/9-2^55\rho_2/3,
\end{equation*}
then the coefficient $a_1$ will coincide with the case of covariant regularization from Theorem \ref{ya-t2}, and the functional $\hat{W}_2^0$ in this case will symbolize a value of deviation caused by an additional deformation of the covariant case.

\vspace{3mm}
\noindent\textbf{Comment 2.} The functional $\hat{W}_1^1[B,\varepsilon]$ from the second point has a standard value and coincides with the results for other regularizations. It is easy to verify this by calculating the variations, taking into account \eqref{ya-a-180}, for the renormalization functionals
\begin{equation*}
\frac{1}{4}\bigg(\frac{\delta}{\delta B_\mu^a(x)}-\frac{\delta}{\delta \varepsilon_\mu^a(x)}\bigg)
\Big(a_0W_{-1}^{\phantom{1}}[B]+\hat{W}_1^1[B,\varepsilon]\Big)\bigg|_{\varepsilon=0}=D_\nu^{ab}(x)F_{\nu\mu}^b(x)
\bigg(-\frac{c_2L}{3\pi^2}-\tilde{a}_0\bigg).
\end{equation*}
Indeed, the result exactly compensates for the singular part of the sum of $\mathbb{H}_1^{\mathrm{c}}(\Omega_3)$ and $\mathbb{H}_1^{\mathrm{c}}(\Gamma_3)$, the value of which was previously verified in the formula (80) of work \cite{Ivanov-Kharuk-20222} during the study of the quantum equation of motion. 

\vspace{3mm}
\noindent\textbf{Comment 3.} The counter-vertex $S_{\mathrm{f}}[a,B]$ with a logarithmic coefficient $L$ has a standard form and is completely consistent with the known results, see formula (3.42) in \cite{12}. An interesting fact is that it is introduced out of the need to reduce "nonlocal" singularities in the quantum action. For example, in \cite{ya-20} this vertex was determined differently, from the renormalization of the quantum equation of motion. Thus, at the level of the first two corrections, it can be seen that the quantum action on the diagonal is closed from the point of view of determining the renormalization constants. 

\vspace{3mm}
\noindent\textbf{Comment 4.} The counter-term $S_2[B]$ and the counter-term $S_2[a]$ are included together with power-law singularities $\Lambda^2$. Such terms are standard in the case of cutoff and symbolize the loss of invariance with respect to the gauge transformations of the background field. The coefficients depend on the regularizing function $\mathbf{f}(\cdot)$.

\vspace{3mm}
\noindent\textbf{Comment 5.} Note that in the theorem, all coefficients with logarithmic singularities are provided with additional free constants, which can be fixed based on the choice of the subtraction scheme. At the same time, terms with power singularities do not have such additions. This choice is not important and is used by the authors solely for the convenience of calculations in the next corrections. If necessary, we can add free constants.

\vspace{3mm}
\noindent\textbf{Proof.} Following the formulation, we divide the solution of the equations into three parts.\\
	
\noindent\underline{The first part.} The equality of the singular part to zero for the first quantum renormalized correction \eqref{ya-a-49} boils down to searching for singularities for "$\ln\det$" and further defining the coefficient $a_0$ and the auxiliary functional $\hat{W}_1^0$ so that the result is finite. In Section \ref{ym:sec:one}, the determinants are investigated using explicit perturbative formulas \eqref{ya-a-48} and \eqref{ya-a-185}. The results consist of two parts:
\begin{align*}
\ln\det(G_0|_{\mathrm{reg.}})\,\,\longrightarrow\,\,\mbox{see formula}\,\,\eqref{ya-a-64};
\\
\ln\det(G_1|_{\mathrm{reg.}})\,\,\longrightarrow\,\,\mbox{see formula}\,\,\eqref{ya-a-69}.
\end{align*}
Substituting the results into \eqref{ya-a-49} and equating the singular part to zero, we obtain an equation of the form
\begin{equation*}
-\frac{c_2\Lambda^2(\rho_3-2\rho_0)}{8\pi^2}S_2[B]-\frac{11c_2L}{96\pi^2}W_{-1}[B]
+\frac{1}{4}\Big(a_0W_{-1}^{\phantom{1}}[B]+\hat{W}_1^0[B]\Big)\stackrel{\mathrm{s.p.}}{=}0,
\end{equation*}
where $L=\ln(\Lambda/\sigma)$. By solving this linear equation, we obtain the answers indicated in the formulation.\\

\noindent\underline{The second part.} The equality of the singular part to zero for the first variation in the field responsible for the gauge condition is reduced to the variation of the determinants, which, in turn, are reduced to the calculation of one-loop contributions $\mathbb{H}_0^{\mathrm{sc}}
\big(V_1\big)$ and $\mathbb{H}_0^{\mathrm{sc}}
\big(V_2\big)$. They are studied in Section \ref{ym:sec:one}, and the results are presented in formulas:
\begin{align*}
-\mathbb{H}_0^{\mathrm{sc}}
\big(V_1\big)=\int_{\mathbb{R}^{4}}\mathrm{d}^4z\,e_\mu^g(z)\Bigg(\frac{\delta}{\delta e_\mu^g(z)}\ln\det(\mathfrak{G}_0|_{\mathrm{reg.}})\bigg|_{e=B}\Bigg)\,\,\longrightarrow\,\,\mbox{see formula}\,\,\eqref{ya-a-67};
	\\
2\mathbb{H}_0^{\mathrm{sc}}
\big(V_2\big)=\int_{\mathbb{R}^{4}}\mathrm{d}^4z\,e_\mu^g(z)\Bigg(\frac{\delta}{\delta e_\mu^g(z)}\ln\det(\mathfrak{G}_1|_{\mathrm{reg.}})\bigg|_{e=B}\Bigg)\,\,\longrightarrow\,\,\mbox{see formula}\,\,\eqref{ya-a-73}.
\end{align*}
Substituting the obtained relations into \eqref{ya-a-178} and equating the singular part to zero, we obtain an equation of the form
\begin{equation*}
\frac{1}{4}\hat{W}_1^1[B,\varepsilon]-\frac{c_2L}{8\pi^2}\Gamma_1[\varepsilon]
\stackrel{\mathrm{s.p.}}{=}0
\end{equation*}
from which follows the stated answer.\\

\noindent\underline{The third part.} Consider the second correction for the action on the diagonal. In this case, it is convenient to study the linear combination of standard diagrams separately:
\begin{equation*}
-\frac{1}{2}\mathbb{H}_0^{\mathrm{sc}}\big(\Gamma_3^2\big)
+\frac{1}{4}\mathbb{H}_0^{\mathrm{sc}}\big(\Gamma_4^{\phantom{1}}\big)-\frac{1}{2}\mathbb{H}_0^{\mathrm{sc}}\big(\Omega_3^2\big)
\,\,\longrightarrow\,\,\mbox{see formula}\,\,\eqref{ya-a-166}.
\end{equation*}
Additionally, we substitute the relations of \eqref{ya-a-158} and \eqref{ya-a-159} for nonlocal parts into the resulting ratio, using the definitions for the counter-vertices
\begin{equation}\label{ya-z-14}
\mathbb{H}_0^{\mathrm{sc}}\big(S_2[\,\cdot\,]\big)=\int_{\mathbb{R}^4}\mathrm{d}^4x\,G_{1\rho\rho}^{\,\,aa}(x,x),\,\,\,
\mathbb{H}_0^{\mathrm{sc}}\big(S_{\mathrm{f}}[\,\cdot\,,B]\big)=-\int_{\mathbb{R}^4}\mathrm{d}^4x\,\Big(D_\rho^{bc}(x)D_\sigma^{ca}(x)G_{1\sigma\rho}^{\,\,ab}(x,y)\Big)\Big|_{y=x}.
\end{equation}
Then, after substituting and cancelling some terms, we obtain the equation of the form
\begin{multline}\label{ya-a-186}
	-\frac{\Lambda^2c_2(2\rho_3-3\rho_0)}{8\pi^2}
	\mathbb{H}_0^{\mathrm{sc}}\big(S_2[\,\cdot\,]\big)+\frac{5Lc_2}{48\pi^2}
	\mathbb{H}_0^{\mathrm{sc}}\big(S_{\mathrm{f}}[\,\cdot\,,B]\big)+\mathbb{H}_0^{\mathrm{sc}}\big(\hat{\Gamma}_2^0\big)\\
	-\frac{c_2^2W_{-1}L}{32(4\pi^2)^2}\Big(13+68\rho_2/3-64\rho_4\Big)
	+\frac{c_2^2W_{-1}^{+}L}{32(4\pi^2)^2}\Big(2\rho_2-1/2\Big)\\
	+2\mathrm{J}_{\ominus}+\frac{c_2\rho_3\Lambda^2\mathrm{J}_{\odot}}{\pi^2}-\frac{5Lc_2}{48\pi^2}
	\mathrm{J}_{\otimes}+
	\frac{1}{4}\Big(a_1W_{-1}^{\phantom{1}}+\hat{W}_2^0\Big)+\kappa_1^\prime\stackrel{\mathrm{s.p.}}{=}0.
\end{multline}
If the singular part is equal to zero, the answer for the vertex $\hat{\Gamma}_2^0[B,a]$ immediately follows, so the first line can be excluded from the ratio. Further, the second line has a suitable form as the sum of the classical action and one of its special parts, see \eqref{ya-c-1}. Therefore, next we need to use calculations for the remaining diagrams:
\begin{align*}
	\mathrm{J}_{\otimes}\,\,&\longrightarrow\,\,\mbox{see formula}\,\,\eqref{ya-a-188};
	\\
	\mathrm{J}_{\odot}\,\,&\longrightarrow\,\,\mbox{see formula}\,\,\eqref{ya-a-190};
	\\
	\mathrm{J}_{\ominus}\,\,&\longrightarrow\,\,\mbox{see formula}\,\,\eqref{ya-a-191}.
\end{align*}
Summing up and substituting into \eqref{ya-a-186}, we get the stated answers for $a_1(\Lambda)$ and $\hat{W}_2^0[B]$. The theorem has been proved.

\section{Weak deformation}
\label{ym:sec:re}

\subsection{General remarks}
\label{ym:sec:re-1}

In Section \ref{ym:sec:pr:gen3}, a variant of strong deformation was presented, which consisted of probabilistic averaging of fluctuation fields, see formula \eqref{ya-a-41}. One of the problems associated with this approach was the loss of invariance under the gauge transformations of the background field $B_\mu^a$. It is precisely because of this fact that at the level of the second quantum correction, see Theorem \ref{ya-t1} in Section \ref{ym:sec:pr:gen3-7}, the singular contribution proportional to the classical action $W_{-1}$ was split into several parts $W_{-1}^i$. This indicates a significant complication of the renormalization process, as it essentially restricts the possibilities of using "simple"\footnote{This refers to the exclusion of terms from the expansions for the Green's function, which obviously will not lead to singular components. For examples, see Section \ref{ym:sec:two1-1}.} decompositions for the Green's functions, effectively reducing them to zero.

In this section, an alternative approach to the introduction of regularization is proposed, which consists in deforming the fluctuation fields by acting on them with a special gauge invariant operator depending on $B_\mu^a$. At the same time, in the main order, when decomposed into "small" fields, such an operator reproduces the probabilistic averaging from Section \ref{ym:sec:pr:gen3-1}. Such a deformation will be called weak for two reasons. First of all, it does not lead\footnote{Hypothetically, the deformation can be chosen in such a way that the first loop is also regularized. That is, so that the increase in eigenvalues is "compensated" by a decrease in the regularized spectral density. However, the authors did not manage to study this fact well enough.} to the regularization of the first loop, that is "$\ln\det$". For these purposes, one should use, for example, the regularization of the heat kernel, see Section 6.2 in \cite{30-1-1} and section $\mathrm{B}$ in \cite{ya-22}, or the analytical continuation of the $\zeta$-function \cite{ya-23}. A similar situation occurs in other regularizations, see \cite{ya-24}. Secondly, such regularization additionally deforms the relationship between the quantum action and the density of the quantum equation of motion, which leads to additional difficulties in recalculation.

Nevertheless, the weak deformation approach is very attractive from the point of view of multi-loop calculations. This is primarily due to the fact that the elementary blocks that make up the Feynman diagrams are invariant with respect to the gauge transformations of the background field $B_\mu^a$. The following important properties are an immediate consequence of this fact.
\begin{enumerate}
	\item Logarithmic local contributions are proportional only to $W_{-1}[B]$.
	\item There are no power-law local contributions.
\end{enumerate}
Such constraints significantly simplify the calculation process and allow us to replace the decomposition for Green's functions near the diagonal with simpler functions. This process is described in detail in Section \ref{ym:sec:two1} devoted to the study of "two loops".

\subsection{Regularization}
\label{ym:sec:re-2}

Let us discuss the process of introducing regularization. To do this, we use the formulations of the two methods for the case of strong deformation from Section \ref{ym:sec:pr:gen3-2} and adjust them. This is possible because, according to the construction, the "strong" case should be reproduced in the main order for the case of weak (small) fields.\\

\noindent\textbf{The first way.} Let us formulate the rules for the deformation of Green's functions. To do this, as in Section \ref{ym:sec:pr:gen3-2}, we choose a piecewise continuous function $\omega(\cdot)$ as the averaging kernel with the same properties\footnote{A generalization of the properties can be found in \cite{ya-10,Iv-2024}.}. Next, note that the double averaging of the free Green's function, described by the transition \eqref{ya-a-36} using the integral operator and reducing to the averaging of the oscillating exponent, can be rewritten in operator form as follows\footnote{The right hand side of \eqref{ya-a-36-1} must be understood in the sense of operators. That is, this is the Taylor series by degrees of the operator. Note that only even powers arise in the expansion.}
\begin{equation}\label{ya-a-36-1}
\int_{\mathbb{R}^4}\mathrm{d}^4y\int_{\mathbb{R}^4}\mathrm{d}^4z\,
\omega(|y|)\omega(|z|)\,e^{ik\cdot(x+y/\Lambda+z/\Lambda)}=
\hat{\Omega}^2(r/\Lambda)\Big|_{r^2=A_0(x)}e^{ik\cdot x},
\end{equation}
where $r=|x|$, $A_0(x)=-\partial_{x_\mu}\partial_{x^\mu}$, which in hyperspherical coordinates is equal to $-r^{-3}\partial_rr^3\partial_r$ for the case of spherically symmetric functions, as well as an auxiliary function
\begin{equation}\label{ya-z-9-1}
	\hat{\Omega}(r)=2S_3\int_0^1\mathrm{d}t\,t^2\omega(t)
	J_1(tr)/r.
\end{equation}
Note that formula (20) from \cite{Ivanov-2022} was used for the last transition. Next, returning to the Green's functions, we define deformations as follows
\begin{equation}\label{ya-z-15}
G_{0}^{\Lambda,\,ab}(x,y)=\bigg(\hat{\Omega}^2(r/\Lambda)\Big|_{r^2=M_0(x)}\bigg)^{ac}G_{0}^{cb}(x,y),
\end{equation}
as well as
\begin{align}\label{ya-a-38-3}
	G_{1\sigma\rho}^{\Lambda\,ab}(x,y)=\delta_{\sigma\rho}G_{0}^{\Lambda,\,ab}(x,y)+
	\sum_{k=1}^{+\infty}&2^k\int_{\mathbb{R}^{4\times k}}\mathrm{d}^4z_1\ldots\mathrm{d}^4z_k\,
	G_{0}^{\Lambda,\,ad_1}(x,z_1)\times\\\label{ya-a-38-4}\times&F_{\sigma\mu_1}^{d_1c_1}(z_1)G_{0}^{\Lambda,\,c_1d_2}(z_1,z_2)\times\ldots\times
	F_{\mu_k\rho}^{d_kc_k}(z_k)G_{0}^{\Lambda,\,c_kb}(z_k,y),
\end{align}
It follows from the construction that the functions change covariantly with respect to the gauge transformation \eqref{ya-p-1}, since they are constructed using the gauge invariant operator $M_0^{ab}(x)$, and their expansion near the diagonal for the special case of fields was previously studied in \cite{Ivanov-2022}, see Section 4.1.\\

\noindent\textbf{The second way.} Additionally, we point out an alternative view of the introduction of operator \eqref{ya-z-15}. It is completely similar to the one described in Section \ref{ym:sec:pr:gen3-2}, so let us pay attention to only two key changes.

\begin{enumerate}
	\item The fluctuation and ghost fields are changed not by the averaging operator \eqref{ya-a-41}, but by the operator function \eqref{ya-z-9-1}. For example, deformed fluctuations have the form
\begin{equation}\label{ya-re-1}
a_\mu^{\Lambda,a}(x)=\bigg(\hat{\Omega}(r/\Lambda)\Big|_{r^2=M_0(x)}\bigg)^{ab}a_\mu^b(x).
\end{equation}
The fields $c^a,\,\bar{c}^a$ are modified in the same way.
	\item The substitution from the first point applies to all fields except those in quadratic form with the operator $M_0$, that is, in 
\begin{equation*}
	\frac{1}{2}
	\int_{\mathbb{R}^4}\mathrm{d}^4x\,
	a_\mu^a(x)M_0^{ab}(x)a_\mu^b(x)+
	\int_{\mathbb{R}^4}\mathrm{d}^4x\,\bar{c}^a(x)M_0^{ab}(x)c^b(x).
\end{equation*}
\end{enumerate}
By decomposing the action into a perturbative series, we can make sure that the last rules are equivalent to the transition to the deformed Green's functions \eqref{ya-z-15} and \eqref{ya-a-33}.\\

\noindent\textbf{About the determinant.} As was already noted in Section \ref{ym:sec:re-2}, the proposed approach with a covariant deforming operator does not guarantee the regularization of the first correction with "$\ln\det$". For these purposes, additional regularization should be used, which can be introduced either within the framework of the heat kernel method, for example, using an analytical continuation or a cutoff, or by simple division and multiplication (deformation of the measure of the functional integral). Since it is the singular parts that are studied in this paper, and the singularity in the first correction does not depend on regularization, then for further study it is enough to have only the relation 
\begin{equation}\label{ya-z-21}
\ln\det\big(G_{0}/G_{0}|_{B=0}\big)\big|_{\mathrm{reg.}}\stackrel{\mathrm{s.p.}}{=}
-\frac{c_2L}{96\pi^2}W_{-1}[B].
\end{equation}
As expected, the contribution does not contain power-law parts $(\sim\Lambda)$, since the value is invariant with respect to the gauge transformations of the background field.\\

\noindent\textbf{Expansion.} Next, we note that it is not necessary to use the background field $B_\mu^a$ to introduce the last regularization, but another $r_\mu^a$ can be selected, which changes under the gauge transformations in the same way as the background field. Of course, when calculating singular parts, it is the specific choice of $r_\mu^a=B_\mu^a$ that is important for the quantum action. However, when trying to calculate the variation in the field from the deforming operator, it becomes necessary to distinguish the fields. For these purposes, for example, it is convenient to choose $r_\mu^a=B_\mu^a+\varepsilon_\mu^a$, then
\begin{equation*}
\frac{\delta}{\delta r_\mu^a(x)}\bigg|_{r=B}=\frac{\delta}{\delta \varepsilon_\mu^a(x)}\bigg|_{\varepsilon=0}.
\end{equation*}
In this regard, we expand the notation for the regularized quantum action by adding a third argument, which is responsible for the field in the regularizing operator. Thus, the strongly connected part of \eqref{ya-a-17}, regularized by the above procedure with $B_\mu^a\to r_\mu^a$, is further notated by the symbol
\begin{equation}\label{ya-z-20}
W_{\mathrm{reg}}^{\mathrm{sc}}[B,e,r],\,\,\,\mbox{as well as}\,\,\,
W_{\mathrm{reg}}^{\mathrm{sc}}[B]\equiv W_{\mathrm{reg}}^{\mathrm{sc}}[B,B,B].
\end{equation}

\subsection{Situation around diagonal}
\label{ym:sec:re-5}
\noindent\textbf{Problematic.} In the case of weak deformation, the variation of the quantum action on the diagonal with respect to the background field, see \eqref{ya-z-16}, becomes more complicated, since the deforming operator additionally depends on the auxiliary field $r_\mu^a(x)$. Thus, the first variation of the action on the diagonal, which in this case is characterized by the condition $r_\mu^a=e_\mu^a=B_\mu^a$, described by the formula
\begin{equation}\label{ya-zzz-1}
	\frac{\delta W^{\mathrm{sc}}_{\mathrm{reg}}[B]}{\delta B_{\mu}^a(x)}=
	\frac{\delta W^{\mathrm{sc}}_{\mathrm{reg}}[B,e,r]}{\delta B_\mu^a(x)}\Bigg|_{r=e=B}+
	\frac{\delta W^{\mathrm{sc}}_{\mathrm{reg}}[B,e,B]}{\delta e_\mu^a(x)}\Bigg|_{e=B}+
	\frac{\delta W^{\mathrm{sc}}_{\mathrm{reg}}[B,B,r]}{\delta r_\mu^a(x)}\Bigg|_{r=B}
	,
\end{equation}
in which only the first term is equal to the density of the regularized quantum equation of motion. Once again, note that since the action depends on the regularizing operator $\hat{\Omega}$, the inequality generally holds
\begin{equation*}
\frac{\delta W^{\mathrm{sc}}_{\mathrm{reg}}[B,e,B]}{\delta B_\mu^a(x)}\Bigg|_{e=B}\neq
\frac{\delta W^{\mathrm{sc}}_{\mathrm{reg}}[B,e,r]}{\delta B_\mu^a(x)}\Bigg|_{r=e=B}\equiv Q_\mu^a|_{\mathrm{reg.}}[B](x).
\end{equation*}
Thus, in order to recalculate the renormalization values from the quantum action to the density of the equation of motion, it is necessary to track not only the variation with respect to $e_\mu^a$, the singular part of which, in the case of strong deformation, determines the auxiliary current $j_\mu^a(x)$, see \eqref{ya-a-181}, but also a variation with respect to $r_\mu^a$. In this formulation, the task becomes significantly more complicated, so it is easier to calculate not both variations, but the singularities for the quantum equation directly. That is, in addition to renormalizing the quantum action, it is also necessary to study the renormalization of the object $Q_\mu^a|_{\mathrm{reg.}}[B](x)$ in parallel.\\ 

\noindent\textbf{Renormalization.} Taking into account the last remarks, we will clarify what exactly will be understood by the renormalization of the quantum action and the equation of motion. First, consider the fields $r_\mu^a$ and $e_\mu^a$, equal\footnote{In the general case, there is no equality. This choice is convenient for further calculation.} to each other and close to the background field, that is 
\begin{equation*}
r_\mu^a(x)=e_\mu^a(x)=B_\mu^a(x)+\varepsilon_\mu^a.
\end{equation*}
Then the renormalization of the pair $W^{\mathrm{sc}}_{\mathrm{reg}}[B]$ and $Q_\mu^a|_{\mathrm{reg.}}[B]$ consists in shifting the coupling constant $g\to g_\Lambda$, which is expressed as a series in terms of a renormalized constant
\begin{equation*}
	\frac{1}{g^2_\Lambda}=\frac{1}{g_{\mathrm{ren}}^2}+\sum_{k=0}^{+\infty}a_k^{\phantom{1}}(\Lambda)g_{\mathrm{ren}}^{2k},
\end{equation*}
and adding a set of auxiliary terms to the classical action, as was done in the case of strong deformation, see \eqref{ya-a-174}. However, given the invariance with respect to the gauge transformations of the background field, the terms can be written out in a simpler form
\begin{equation*}
	\bigg(\sum_{k=2}^{+\infty}g^{k}\hat{\Gamma}_k^0[B,a]\bigg)+
	\int_{\mathbb{R}^4}\mathrm{d}^4x\,\varepsilon_\mu^a(x)j_\mu^a(x)
	+\mathcal{O}(\varepsilon^2),
\end{equation*}
where the auxiliary current $j_\mu^a(x)$, taking into account the explicit form for the single vertex $\Gamma_1[a]$ from \eqref{ya-a-7}, is determined by the equality
\begin{equation*}
	\int_{\mathbb{R}^4}\mathrm{d}^4x\,\varepsilon_\mu^a(x)j_\mu^a(x)=
	\frac{\Gamma_1[\varepsilon]}{4}\sum_{k=1}^{+\infty}g^{2k-2}w_k(\Lambda)
	+\sum_{k=2}^{+\infty}g^{k}\hat{\Gamma}_k^1[B,a].
\end{equation*}
Let us pay attention to what values are found in this case after performing the renormalization:
\begin{align*}
W^{\mathrm{sc}}_{\mathrm{reg}}[B]&\longrightarrow
\{a_k,\hat{\Gamma}_{k+2}^0\}_{k=0}^{+\infty};
\\
\{a_k,\hat{\Gamma}_{k+2}^0\}_{k=0}^{+\infty}\,\,+\,\,
Q_\mu^a|_{\mathrm{reg.}}[B](x)&\longrightarrow
\{w_{k},\hat{\Gamma}_{k+1}^1\}_{k=1}^{+\infty},
\end{align*}
where, at the last step, a set of renormalization data is obtained from considering the difference between the two studied objects
\begin{equation*}
\frac{\delta W^{\mathrm{sc}}_{\mathrm{reg}}[B,B+\varepsilon,B+\varepsilon]}{\delta \varepsilon_\mu^a(x)}\Bigg|_{\varepsilon=0}
=
\frac{\delta W^{\mathrm{sc}}_{\mathrm{reg}}[B]}{\delta B_{\mu}^a(x)}-Q_\mu^a|_{\mathrm{reg.}}[B](x).
\end{equation*}
Moreover, the sequence is similar to the one in Section \ref{ym:sec:pr:gen3-5}, see formulas \eqref{ya-a-182} and \eqref{ya-a-183}. The ansatz looks much simpler than in formula \eqref{ya-a-174}, since the regularization is introduced covariantly. It is for this reason that the functionals $\hat{W}_k^1$ are explicitly written out as $w_k\Gamma_1[\varepsilon]$. Thus, it is necessary to look for the proportionality coefficient $w_k$, rather than the whole functional and its form.

\subsection{What do we calculate?}
\label{ym:sec:re-3}
Let us introduce the expansion coefficients for the renormalized quantum action on the diagonal and its first variation on the second argument, also on the diagonal. Formally, they are the same as in formulas \eqref{ya-z-17} and \eqref{ya-z-18}, so for convenience we will use the same notation. To clarify, in the "weak" case there is also a third argument, it is chosen as $r_\mu^a=B_\mu^a$. The coefficients themselves have the same form\footnote{Note that the operator $\mathbb{H}$ connects vertices with the regularized Green's functions. In the "strong" and "weak" cases, they are different, see for comparison \eqref{ya-a-39} and \eqref{ya-z-15}.}, as in \eqref{ya-a-49}, \eqref{ya-a-163}, and \eqref{ya-a-178}.

The main task of this section is to renormalize the quantum action on the diagonal, taking into account the first two corrections, as well as to demonstrate the fact that the variation with respect to the deforming operator within the framework of the first quantum correction does not contain singular parts. Thus, it is planned to calculate the following values:
\begin{align*}
	\mbox{1-st correction for}\,\,W_{\mathrm{ren}}^{\mathrm{sc}}&\,\,\longrightarrow\,\,\mbox{answer for}\,\,
	a_0;
	\\
	\mbox{1-st correction for}\,\,\dot{W}_{\mathrm{ren}}^{\mathrm{sc}}&\,\,\longrightarrow\,\,\mbox{answer for}\,\,w_1;
	\\
	\mbox{2-nd correction for}\,\,W_{\mathrm{ren}}^{\mathrm{sc}}&\,\,\longrightarrow\,\,\mbox{answers for}\,\,
	\hat{\Gamma}_2^0\,\,\mbox{and}\,\,a_1.
\end{align*}
And also additionally prove equality in the sense of formal relations by the coupling constant
\begin{equation}\label{ya-z-19}
	\frac{\delta W^{\mathrm{sc}}_{\mathrm{reg}}[B]}{\delta B_{\mu}^a(x)}-
	Q_\mu^a|_{\mathrm{reg.}}[B](x)-
	\frac{\delta W^{\mathrm{sc}}_{\mathrm{reg}}[B,e,B]}{\delta e_\mu^a(x)}\Bigg|_{e=B}
	\stackrel{\mathrm{s.p.}}{=}\mathcal{O}(g^2).
\end{equation}

\subsection{Results}
\label{ym:sec:re-4}

\begin{theorem}\label{ya-t2}
Let us assume that all the assumptions formulated above in Section \ref{ym:sec:pr:gen3} are fulfilled. Let also $L=\ln(\Lambda/\sigma)$, where $\Lambda$ is a regularization parameter and $\sigma>0$ is an auxiliary fixed parameter to make combinations dimensionless. We additionally define the functionals
\begin{equation*}
	S_2[B]=\int_{\mathbb{R}^4}\mathrm{d}^4x\,B_\mu^a(x)B_\mu^a(x),\,\,\,
	S_{\mathrm{f}}[a,e]=\int_{\mathbb{R}^4}\mathrm{d}^4x\,\Big(D_\mu^{ab}(x)a_\mu^b(x)\Big)\Big(\mathfrak{D}_\nu^{ac}(x)a_\nu^c(x)\Big),
\end{equation*}
and several auxiliary numbers depending on the regularized free fundamental solution $R_0^1(x)$, see \eqref{ya-a-37},
\begin{equation*}
	\frac{\rho_0}{4\pi^2}=R_0^1(0),\,\,\,
	\frac{\rho_3}{4\pi^2}=\int_{\mathrm{B}_{1}}\mathrm{d}^4x\,R_0^1(x)A_0^{\phantom{1}}(x)R_0^1(x),\,\,\,
	\frac{2\rho_4}{\pi^2}=\int_{\mathrm{B}_{1}}\mathrm{d}^4x\,R_0^1(x)|x|^2A_0^{\phantom{1}}(x)R_0^1(x),
\end{equation*}
\begin{equation*}
	\frac{\rho_2}{8\pi^2}=
	\int_{\mathrm{B}_{1}}\mathrm{d}^4x\,A_0^{\phantom{1}}(x)R_0^1(x)
	\int_{\mathbb{R}^4}\mathrm{d}^4y\,\Big(R_0^1(x-y)R_0^1(y)-R_0^1(y)R_0^1(y)\Big).
\end{equation*}
Let also the numbers $\tilde{a}$ and $\{\tilde{a}_i,\tilde{b}_i\}_{i=0,1}$ denote arbitrary finite elements from $\mathbb{R}$. Then the presented equalities \eqref{ya-a-184} and \eqref{ya-a-184-1} lead to the following coefficients and auxiliary vertices
	
	\begin{fleqn}
		\begin{equation*}
			1)\,\,W_0^{\Lambda}\stackrel{\mathrm{s.p.}}{=}0\,\,\Longrightarrow\,\,
			a_0(\Lambda)=\frac{11c_2L}{24\pi^2}+\tilde{a}_0.
		\end{equation*}
	\end{fleqn}
	\begin{fleqn}
		\begin{equation*}
			2)\,\,\dot{W}_0^{\Lambda}\stackrel{\mathrm{s.p.}}{=}0\,\,\Longrightarrow\,\,
			\omega_1(\Lambda)=\frac{c_2L}{2\pi^2}+\tilde{b}_0.
		\end{equation*}
	\end{fleqn}
	\begin{fleqn}
		\begin{align*}
			3)\,\,W_1^{\Lambda}\stackrel{\mathrm{s.p.}}{=}0\,\,\Longrightarrow\,\,
			\hat{\Gamma}_2^0[B]&=\frac{\Lambda^2c_2(2\rho_3-3\rho_0)}{8\pi^2}
			S_2[a]-\bigg(\frac{5Lc_2}{48\pi^2}+\tilde{b}_1\bigg)S_{\mathrm{f}}[a,B],\\
			a_1(\Lambda)&=\frac{c_2^2L}{(4\pi)^4}\Big(175/9+2^47\rho_2/3-2^7\rho_4\Big)+\tilde{a}_1.
		\end{align*}
	\end{fleqn}
\end{theorem}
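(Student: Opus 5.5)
The proof follows the three-part structure used for Theorem \ref{ya-t1}. The difference is that every line is now the covariantly deformed Green's function \eqref{ya-z-15}, \eqref{ya-a-38-3}. Because of this, each singular local contribution must be a gauge-invariant functional of $B$. The plan is to exploit this constantly: any local logarithmic term is forced to be proportional to $W_{-1}[B]$, and no local power-law term can survive. So I only need the coefficient of one invariant structure, and I may compute it with simplified near-diagonal ansatzes for $G_0^\Lambda$, $G_1^\Lambda$ (dropping terms that cannot produce singularities), as outlined in Section \ref{ym:sec:two1}.

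\emph{Part 1.} Since $\ln\det$ is not regularized by the weak deformation, I use relation \eqref{ya-z-21} for $G_0$. I also need its analogue for $G_1$, obtained from the perturbative formula \eqref{ya-a-185} with the usual heat-kernel coefficient, giving $-\tfrac{20c_2L}{96\pi^2}W_{-1}$ up to normalization. Inserting both into \eqref{ya-a-49}, the absence of an $S_2$ term makes $\hat W_1^0$ unnecessary, and one gets $-\tfrac{11c_2L}{96\pi^2}W_{-1}+\tfrac14 a_0W_{-1}\stackrel{\mathrm{s.p.}}{=}0$, hence $a_0$.

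\emph{Part 2.} I compute $\mathbb{H}_0^{\mathrm{sc}}(V_1)$ and $\mathbb{H}_0^{\mathrm{sc}}(V_2)$ with the deformed lines. They are one-loop tadpoles with a single covariant derivative, and their logarithmic parts do not depend on $\hat\Omega$. I would reuse \eqref{ya-a-67}, \eqref{ya-a-73}, noting that the $r$-variation contributes no singular part: varying $\hat\Omega^2(\cdot)|_{r^2=M_0}$ produces a difference $G_0^\Lambda-G_0$ that is smooth at coincident points up to $\Lambda$-suppressed terms. Matching against $\tfrac14 w_1\Gamma_1[\varepsilon]$ yields $w_1=c_2L/(2\pi^2)+\tilde b_0$. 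The same computation, combined with the one-loop part of the equation of motion from \cite{Ivanov-Kharuk-20222}, gives \eqref{ya-z-19}.

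\emph{Part 3.} This is the main obstacle. I evaluate the combination $-\tfrac12\mathbb{H}_0^{\mathrm{sc}}(\Gamma_3^2)+\tfrac14\mathbb{H}_0^{\mathrm{sc}}(\Gamma_4)-\tfrac12\mathbb{H}_0^{\mathrm{sc}}(\Omega_3^2)$ with the covariant lines. First, I isolate the nonlocal subdivergences: these are tadpole-type subloops giving $\Lambda^2 S_2[a]$-like and $L\,S_{\mathrm{f}}[a,B]$-like insertions. Their coefficients come from the same near-coincident expansions as in the strong case, because at leading order $\hat\Omega^2(M_0/\Lambda^2)$ reduces to the averaging defining $R_0^\Lambda$. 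This fixes $\hat\Gamma_2^0$ identically to Theorem \ref{ya-t1}.

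Second, the remaining local $L\,W_{-1}$ coefficient must be extracted. I would organize the diagrams into master integrals built from $R_0^1$, namely those defining $\rho_2,\rho_4$. I expect the integrals to deform while the combinatorial weights stay as in dimensional regularization. The key difference from the strong case is that terms involving $\rho_6$, $\rho_7$ and the splitting into $W_{-1}^i$ must cancel by gauge invariance. I would check this cancellation explicitly as a consistency test, and then read off $a_1$ with the stated constants. The delicate point is to expand $\hat\Omega^2$ of the covariant $M_0$ to second order in $F_{\mu\nu}$ near the diagonal, carefully enough to capture the $\rho$-independent rational part $175/9$.
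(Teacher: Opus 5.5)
Your Part 1 and the nonlocal half of Part 3 match the paper. One correction in Part 1: the vector determinant must enter with $+\frac{5c_2L}{24\pi^2}W_{-1}$, as in \eqref{ya-a-69}. With your minus sign, the equation $-\frac{11c_2L}{96\pi^2}W_{-1}+\frac14a_0W_{-1}\stackrel{\mathrm{s.p.}}{=}0$ does not follow.

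Part 2 takes a different route from the paper. You import the logarithms of \eqref{ya-a-67} and \eqref{ya-a-73} from the strong case. This is defensible if you argue two things: the $L$-coefficient of each one-loop tadpole is regularization independent (as the paper assumes for $\ln\det$ in \eqref{ya-z-21}), and covariance excludes power terms. The justification you actually give is not right. $G_0^\Lambda-G_0$ is as singular as $G_0$ on the diagonal, and the $r$-variation does not enter $\dot W_0^\Lambda$, which is taken at $r=B$.

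The paper needs no transfer argument. Inserting \eqref{ya-z-11} into $V_2$ gives $\mathbb{H}_0^{\mathrm{sc}}(V_2)\stackrel{\mathrm{s.p.}}{=}-\mathbb{H}_0^{\mathrm{sc}}(V_1)-c_2\theta(0)\int\varepsilon_\mu^aD_\nu^{ab}F_{\nu\mu}^b$. So $V_1$ cancels identically and only $\theta(0)=(L+\rho_1)/(8\pi^2)$ is needed. The determinant trick of Section \ref{ym:sec:one} is unavailable here because the deforming operator depends on $B$.

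The genuine gap is the local half of Part 3. Nothing in your proposal produces $175/9+112\rho_2/3-128\rho_4$; ``read off $a_1$ with the stated constants'' assumes it. The plan as written would also give the wrong value, because it looks only at the three main diagram groups. The counter-diagrams $\mathbb{H}_0^{\mathrm{sc}}(\hat\Gamma_2^0)$ have their own local singular parts, and these feed into $a_1$:

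By \eqref{ya-a-158}, $\Lambda^2\mathbb{H}_0^{\mathrm{sc}}(S_2)$ contains $\frac{3c_2L\rho_5}{2\pi^2}W_{-1}$. This is what removes the $96\rho_5(2\rho_3-3\rho_0)$ term of \eqref{ya-z-12}.

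By \eqref{ya-a-159} and \eqref{ya-a-160}, together with $\mathrm{J}_{\otimes}=-c_2W_{-1}/(48\cdot4\pi^2)$, the gauge-fixing counter-diagram contributes $-\mathcal{J}_7=L(5/36-10\rho_2/3)$ in units of $c_2^2W_{-1}/(4\pi)^4$ (Table \ref{ya:table:3}).

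Apart from the $\rho_5$ term, the main diagrams leave $L(-5-6\rho_2+32\rho_4)$. On its own this would give $a_1\propto20+24\rho_2-128\rho_4$; the stated value appears only after adding $-\mathcal{J}_7$. In particular, $-5/9$ of the rational part comes from the deformed value $e_1^{\mu\nu}(0)=\delta^{\mu\nu}/(48\cdot4\pi^2)$ inside this counter-diagram. It does not come from expanding $\hat\Omega^2(M_0)$ in the main diagrams. Also, ``weights as in dimensional regularization'' is an a posteriori observation about $2\mathcal{J}_1$--$2\mathcal{J}_5$ (Section \ref{ym:sec:two1-4}), not a method of computation.

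The concrete step you are missing is the one in Section \ref{ym:sec:two1-2}. Keep every strong-case contribution except the two $G_{\mathrm{loc}}G_{\mathrm{loc}}\mathcal{N}$ pieces. Redo those with the Wilson-line identity \eqref{ya-z-2}, which gives $\mathrm{J}_{\sigma\rho}[\mathcal{N}_{1\sigma\rho}]\stackrel{\mathrm{s.p.}}{=}\frac{c_2^2}{2}W_{-1}(\hat{\mathrm{I}}_6-\hat{\mathrm{I}}_5)$ and $\hat{\mathrm{J}}_{\sigma\rho}[\mathcal{N}_{1\sigma\rho}]\stackrel{\mathrm{s.p.}}{=}c_2^2W_{-1}(2\hat{\mathrm{I}}_6-\hat{\mathrm{I}}_5)$. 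Evaluate $\mathrm{J}_{\otimes}$, $\mathrm{J}_{\odot}$, $\mathrm{J}_{\ominus}$ from the explicit ansatz \eqref{ya-z-5}, which is fixed via the special field \eqref{ya-z-6} and gauge invariance. Then rewrite the result through the counter-diagram decompositions to obtain \eqref{ya-z-13}.
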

\vspace{3mm}
\noindent\textbf{Comment 1.} As in the case of the result from Theorem \ref{ya-t1}, the first coefficient $a_0$ has the standard value\footnote{See the links in the comments to Theorem \ref{ya-t1}.}. The second coefficient $a_1$ has a different value. In Section \ref{ym:sec:two1-4}, a detailed comparison is made with the case of dimensional regularization. It was shown that the answer for the coefficient is constructed from a finite set of master-integrals. They are listed in Table \ref{ya:table:2}. At the same time, Table \ref{ya:table:3} shows the results for individual diagrams and their linear combinations. It turned out that in the case of dimensional regularization, the coefficients with which the master-integrals enter are deformed, while in the case of cutoff, the integrals themselves are deformed. This observation is complemented by the fact that, in the case of cutoff, the singularity for each master-integral can be divided into the main part, which is transferred to the dimensional case, and the auxiliary correction part, which depends on the regularizing function.

\vspace{3mm}
\noindent\textbf{Comment 2.} The coefficient $w_1$ has the standard form, a more detailed comparison is made in a similar comment for Theorem \ref{ya-t1}.

\vspace{3mm}
\noindent\textbf{Comment 3.} The counter-vertex $S_{\mathrm{f}}[a,B]$ also has the standard form and is discussed in a similar comment for Theorem \ref{ya-t1}.

\vspace{3mm}
\noindent\textbf{Comment 4.} Note that the counter-term $S_2[B]$ with power singularity $\Lambda^2$ is missing, which is a consequence of the covariance of regularized objects. However, the counter-vertex $S_2[a]$ remains and has the same form as in Theorem \ref{ya-t1}. Such a term does not violate gauge invariance, since the field $a_\mu^a$ transforms according to a different law, see \eqref{ya-p-4}.

\vspace{3mm}
\noindent\textbf{Proof.} In many ways, the calculation repeats the process of verifying the statements of Theorem \ref{ya-t1}, so we give only auxiliary comments. Following the above description, we divide the explanation into three parts.\\

\noindent\underline{The first part.} The calculation of the singular part for the "first loop" is performed in two stages. First, it is necessary to take advantage of the fact that the singular contribution for the determinant of the scalar operator $M_0^{ab}$ under conditions of weak deformation is given by equation \eqref{ya-z-21}. Further, following the logic of counting for the "strong" case, it is convenient to represent the singular part for a vector operator as the sum of the part depending on the scalar operator and the contribution consisting of potentials, see \eqref{ya-a-68}. Considering the fact that, in the main order, the Green's functions for both regularizations coincide in construction, see \eqref{ya-a-39} and \eqref{ya-z-15}, then the answer is dictated by formula \eqref{ya-a-69} without a power singularity with $S_2[B]$. Considering the linear combination on the right hand side of \eqref{ya-a-49}, we get the stated answer.\\

\noindent\underline{The second part.} In this case, it is necessary to calculate the singular part for the diagrams $-\mathbb{H}_0^{\mathrm{sc}}(V_1)$ and $-\mathbb{H}_0^{\mathrm{sc}}(V_2)$, see the proof of Theorem \ref{ya-t1} and the definition \eqref{ya-a-178}. In this case, an approach with direct calculation of variations for determinants, as was done in Section \ref{ym:sec:one}, will not work, since the deforming operator also depends on the background field. Therefore, at first glance, an explicit calculation of both diagrams is necessary. However, note that the singular part for the second diagram, after using the expansion near the diagonal in the form \eqref{ya-z-11} and the definitions \eqref{ya-z-22} and \eqref{ya-z-23}, is written out as
\begin{equation*}
\mathbb{H}_0^{\mathrm{sc}}(V_2)\stackrel{\mathrm{s.p.}}{=}-\mathbb{H}_0^{\mathrm{sc}}(V_1)-c_2\theta(0)
\int_{\mathbb{R}^4}\mathrm{d}^4x\,
\varepsilon_\mu^{a}(x)D_\nu^{ab}(x)F_{\nu\mu}^{b}(x),
\end{equation*}
where notation \eqref{ya-a-76} was used for the special function $\theta(\cdot)$. Therefore, the combination of interest from \eqref{ya-a-178}, taking into account \eqref{ya-a-98}, is rewritten as follows
\begin{equation*}
-\mathbb{H}_0^{\mathrm{sc}}(V_1)-\mathbb{H}_0^{\mathrm{sc}}(V_2)
\stackrel{\mathrm{s.p.}}{=}-\frac{c_2L}{8\pi^2}\Gamma_1[\varepsilon]=-\frac{1}{4}\hat{W}_1^1[B,\varepsilon].
\end{equation*}
Further, taking into account equality $\hat{W}_1^1[B,\varepsilon]=w_1\Gamma_1[\varepsilon]$, we get the stated answer.\\

\noindent\underline{The third part.} In this case, the proof process is almost identical to the third point of the proof of Theorem \ref{ya-t1}. The main difference lies in the decomposition of the Green's functions near the diagonal, see Section \ref{ym:sec:two1-1}. This leads to the need to transform a number of small calculations, which is described in Section \ref{ym:sec:two1-2}, after which the linear combination of the main diagrams takes the form \eqref{ya-z-13}. An additional counter-diagram is given in Section \ref{ym:sec:two1-4}, while the final answers for the local part and counter-diagrams are written out in Tables \ref{ya:table:1} and \ref{ya:table:3}. The theorem has been proved.

\subsection{But what if so?}
\label{ym:sec:re-6}

From a computational point of view, the consistency of renormalization of the quantum action with renormalization of the density from the quantum equation of motion, see Sections \ref{ym:sec:pr:gen3-5} and \ref{ym:sec:re-5}, significantly complicates the entire process. Indeed, it is actually necessary to renormalize not one line, but two. And if we keep consistency with a large number of variational derivatives of the action, then the task becomes even more difficult. In this regard, there is a desire to simplify the renormalization algorithm, limiting itself to working only with the quantum action. But what happens in this case? Let us briefly describe the sequence of actions. We will consider the quantum action directly on the diagonal, that is
\begin{equation*}
W^{\mathrm{sc}}_{\mathrm{reg}}[B]\equiv W^{\mathrm{sc}}_{\mathrm{reg}}[B,B,B]
\,\,\,\mbox{Instead of}\,\,\,
W^{\mathrm{sc}}_{\mathrm{reg}}[B,e,r].
\end{equation*}
Next, we replace the standard strongly connected\footnote{This fact is verified using functional relations from Section \ref{ym:sec:pr:gen3-2}.} regularized Green's functions\footnote{It is worth noting here that we are talking about functions with cutted external lines off (or "imputed legs").}
\begin{equation*}
\Gamma_{\mu_1\ldots\mu_k}^{a_1\ldots a_k}(x_1,\ldots,x_k)=
\bigg(
\frac{\delta}{\delta B_{\mu_k}^{a_k}(x_k)}\cdot\ldots\cdot
\frac{\delta}{\delta B_{\mu_1}^{a_1}(x_1)}W^{\mathrm{sc}}_{\mathrm{reg}}[B,e,r]
\bigg)\Bigg|_{r=e=B},
\end{equation*}
where $k\in\mathbb{N}$, with deformed analogues of the form
\begin{equation*}
\hat{\Gamma}_{\mu_1\ldots\mu_k}^{a_1\ldots a_k}(x_1,\ldots,x_k)=
\frac{\delta}{\delta B_{\mu_k}^{a_k}(x_k)}\cdot\ldots\cdot
\frac{\delta}{\delta B_{\mu_1}^{a_1}(x_1)}W^{\mathrm{sc}}_{\mathrm{reg}}[B].
\end{equation*}
For example, in this case, the density of the quantum equation of motion is replaced by the total derivative
\begin{equation*}
	Q_\mu^a|_{\mathrm{reg.}}[B](x)=
	\frac{\delta W^{\mathrm{sc}}_{\mathrm{reg}}[B,e,r]}{\delta B_\mu^a(x)}\Bigg|_{r=e=B}
	\longrightarrow\,\,
	\frac{\delta W^{\mathrm{sc}}_{\mathrm{reg}}[B]}{\delta B_\mu^a(x)}.
\end{equation*}
Next, let us pay attention to the fact that the connected quantum action can be constructed as a sum in which the first term is the strongly connected action, and the next parts are constructed by density from the quantum equation and tree diagrams, at the nodes of which strongly connected Green's functions sit, see for example \eqref{ya-a-46}. Thus, $W^{\mathrm{c}}_{\mathrm{reg}}[B]$ is the functional that depends on the set $\{\Gamma_{\ldots}^{\ldots}\}$. Let us define the deformed connected action $\hat{W}^{\mathrm{c}}_{\mathrm{reg}}[B]$ as the functional $W^{\mathrm{c}}_{\mathrm{reg}}[B]$ in which the substitution was performed $\{\Gamma_{\ldots}^{\ldots}\}\to\{\hat{\Gamma}_{\ldots}^{\ldots}\}$. As a result, a deformed connected action is constructed by the strongly connected action on the diagonal $W^{\mathrm{sc}}_{\mathrm{reg}}[B]$. In this case, the solution $B_{d,\mu}^a$ of the deformed quantum equation of motion 
\begin{equation*}
\frac{\delta W^{\mathrm{sc}}_{\mathrm{reg}}[B]}{\delta B_\mu^a(x)}\Bigg|_{B=B_d}=0,
\end{equation*}
generally speaking, does not coincide with the solution $B_{f,\mu}^a$ of the standard quantum equation. Nevertheless, the following relations are valid in both situations
\begin{align*}
W^{\mathrm{c}}_{\mathrm{reg}}[B_f]&=W^{\mathrm{sc}}_{\mathrm{reg}}[B_f],\\
\hat{W}^{\mathrm{c}}_{\mathrm{reg}}[B_d]&=W^{\mathrm{sc}}_{\mathrm{reg}}[B_d].
\end{align*}
At the same time, there is no equality of strongly connected parts in both situations. It should be noted here that from the point of view of the procedure for finding the extremum, the solution $B_{d,\mu}^a$ is more natural, since it minimizes the functional rather than its part. Thus, forgetting about particular variations, the issue of renormalization narrows down to the functional $W^{\mathrm{sc}}_{\mathrm{reg}}[B]$, which transforms to $W^{\mathrm{sc}}_{\mathrm{ren}}[B]$, and according to which $\hat{W}^{\mathrm{c}}_{\mathrm{ren}}[B]$ is constructed, taking into account the rules described above.

Despite the obvious gain in the complexity of renormalization, there are some open questions along the way. First, are the solutions of $B_{f,\mu}^a$ and $B_{d,\mu}^a$ very different? After all, in the absence of regularization, they must match. Therefore, it is unclear whether their difference is small in any sense (according to $\Lambda$). Secondly, it is not at all clear whether the new renormalized connected action has a formulation using a functional integral. In other words, is it possible to supplement (extend) the regularized classical action \eqref{ya-z-25} in such a way that it reproduces $\hat{W}^{\mathrm{c}}_{\mathrm{ren}}[B]$?

\section{One loop: strong case}
\label{ym:sec:one}
\subsection{Scalar operator}
Let us start with the determinant for the operator $\mathfrak{G}_{0}$. The perturbative formula for such an object is written out as, for reference, see \cite{Vas-98} and the scalar case in Section 4 of \cite{Iv-2024-1},
\begin{align}\nonumber
	\ln\det\big(\mathfrak{G}_{0}/\mathfrak{G}_{0}|_{e=B=0}\big)\big|_{\mathrm{reg.}}=
	\sum_{k=1}^{+\infty}\frac{(-1)^k}{k}\int_{\mathbb{R}^{4\times k}}\mathrm{d}^4&z_1\ldots\mathrm{d}^4z_k\,
	\big(\mathfrak{M}_{0}^{ac_1}(z_1)+\delta^{ac_1}\partial_{z_1^\nu}\partial_{z_1^\nu}\big)R_0^{\Lambda}(z_1-z_2)\times\ldots\\\label{ya-a-48}\times&
	\big(\mathfrak{M}_{0}^{c_ka}(z_k)+\delta^{c_kb}\partial_{z_k^\nu}\partial_{z_k^\nu}\big)R_0^{\Lambda}(z_k-y)\big|_{y=z_1}.
\end{align}
The main task of this section consists of two calculations.
\begin{enumerate}
	\item Find the singular component for the case $\mathfrak{G}_0=G_0$.
	\item Find the linear contribution with respect to the field $e_\mu^a$ to the singular component for $\mathfrak{G}_0$.
\end{enumerate}
\underline{\textbf{The first part.}} It is convenient to divide this computational task into several parts. First, we write out all possible combinations appearing in expansion \eqref{ya-a-48} for $\mathfrak{G}_0=G_0$, according to the degrees of the background field. To do this, we define an auxiliary function of the form
\begin{equation}\label{ya-a-51}
\hat{R}_0^{\Lambda}(x)=R_0^{\Lambda}(x)\eta(|x|,\sigma),
\end{equation}
where $\eta(\,\cdot\,,\sigma)$ is a monotonic smooth function that is equal to one on $[0,1/\sigma]$ and equal to zero on the semi-infinite interval $[1/\sigma+1/\Lambda,+\infty)$. Next, we perform calculations for each degree of the background field.\\

\noindent\textbf{Fourth degree.} Let us write out all the possible coefficients for the integral
\begin{equation}\label{ya-a-53}
\frac{1}{4}\int_{\mathbb{R}^4}\mathrm{d}^4z_1\,B_\mu^{ab}(z_1)B_\nu^{bc}(z_1)B_\sigma^{cd}(z_1)B_\rho^{da}(z_1).
\end{equation}
First, we note that the singular component can only be given by parts with the functions $\hat{R}_0^{\Lambda}(x)$. Next, after replacing $R_0^{\Lambda}(x)\to\hat{R}_0^{\Lambda}(x)$, it is necessary to reexpand the smooth fields near the point $z_1$, and then perform a shift of $z_i\to z_i+z_1$ for all other variables. Then the singular contributions of interest are factorized by the product of the finite functional \eqref{ya-a-53}, which depends on the background field, by a singular coefficient constructed using the functions $\hat{R}_0^{\Lambda}(x)$. By direct calculation, we can verify that it is possible for five combinations to appear as singular coefficients
\begin{align}
2^4&\int_{\mathbb{R}^{4\times3}}\mathrm{d}^4z_2\mathrm{d}^4z_3\mathrm{d}^4z_4\,
\hat{R}_0^{\Lambda}(z_2)\hat{R}_0^{\Lambda}(z_2-z_3)\hat{R}_0^{\Lambda}(z_3-z_4)
\partial_{z_4^{\mu}}\partial_{z_4^{\nu}}\partial_{z_4^{\sigma}}\partial_{z_4^{\rho}}\hat{R}_0^{\Lambda}(z_4), \\
-\delta_{\mu\nu}2^43^{-1}&\int_{\mathbb{R}^{4\times2}}\mathrm{d}^4z_2\mathrm{d}^4z_3\,
\hat{R}_0^{\Lambda}(z_2)\hat{R}_0^{\Lambda}(z_2-z_3)\partial_{z_3^{\sigma}}\partial_{z_3^{\rho}}\hat{R}_0^{\Lambda}(z_3), \\
-\delta_{\nu\sigma}2^43^{-1}&\int_{\mathbb{R}^{4\times2}}\mathrm{d}^4z_2\mathrm{d}^4z_3\,
\hat{R}_0^{\Lambda}(z_2)\hat{R}_0^{\Lambda}(z_2-z_3)\partial_{z_3^{\mu}}\partial_{z_3^{\rho}}\hat{R}_0^{\Lambda}(z_3), \\
-\delta_{\sigma\rho}2^43^{-1}&\int_{\mathbb{R}^{4\times2}}\mathrm{d}^4z_2\mathrm{d}^4z_3\,
\hat{R}_0^{\Lambda}(z_2)\hat{R}_0^{\Lambda}(z_2-z_3)\partial_{z_3^{\mu}}\partial_{z_3^{\nu}}\hat{R}_0^{\Lambda}(z_3), \\
\delta_{\mu\nu}\delta_{\sigma\rho}2\,&\int_{\mathbb{R}^4}\mathrm{d}^4z_2\, 
\hat{R}_0^{\Lambda}(z_2)\hat{R}_0^{\Lambda}(z_2).
\end{align}
Considering the fact that the function from \eqref{ya-a-51} is smooth and has a compact support, the derivative flips performed above are quite reasonable. Let us use spherical\footnote{Next, the symbol $\mathbb{S}^3$ denotes a sphere of unit radius in $\mathbb{R}^4$ centered at the origin.} averaging formulas in four-dimensional space in the last equations 
\begin{equation}\label{ya-a-52}
\int_{\mathbb{S}^3}\mathrm{d}^3\sigma(\hat{x})\,
\hat{x}^\mu\hat{x}^\nu=\frac{S_3}{4}\delta^{\mu\nu}
,\,\,\,
\int_{\mathbb{S}^3}\mathrm{d}^3\sigma(\hat{x})\,
\hat{x}^\mu\hat{x}^\nu\hat{x}^\sigma\hat{x}^\rho=\frac{S_3}{4!}\Big(\delta^{\mu\nu}\delta^{\sigma\rho}+\delta^{\mu\sigma}\delta^{\nu\rho}+\delta^{\mu\rho}\delta^{\nu\sigma}\Big),
\end{equation}
where $\hat{x}^\mu=x^\mu/|x|$, as well as asymptotic decompositions\footnote{The main order can be found using the following set of transformations: scaling variables by $1/\Lambda$, differentiation by $\Lambda$, reverse change of variables, transition to the limit of $\Lambda\to+\infty$, as well as a final explicit calculation of remaining simplified integral.} for the following integrals
\begin{align}
&\int_{\mathbb{R}^{4\times3}}\mathrm{d}^4z_2\mathrm{d}^4z_3\mathrm{d}^4z_4\,
\hat{R}_0^{\Lambda}(z_2)\hat{R}_0^{\Lambda}(z_2-z_3)A_0(z_3)\hat{R}_0^{\Lambda}(z_3-z_4)
A_0(z_4)\hat{R}_0^{\Lambda}(z_4)\stackrel{\mathrm{s.p.}}{=}\frac{L}{8\pi^2}, \\
&\int_{\mathbb{R}^{4\times2}}\mathrm{d}^4z_2\mathrm{d}^4z_3\,
\hat{R}_0^{\Lambda}(z_2)\hat{R}_0^{\Lambda}(z_2-z_3)A_0(z_3)\hat{R}_0^{\Lambda}(z_3)\stackrel{\mathrm{s.p.}}{=}\frac{L}{8\pi^2}, \\
&\int_{\mathbb{R}^4}\mathrm{d}^4z_2\, 
\hat{R}_0^{\Lambda}(z_2)\hat{R}_0^{\Lambda}(z_2)\stackrel{\mathrm{s.p.}}{=}\frac{L}{8\pi^2},
\end{align}
where $L=\ln(\Lambda/\sigma)$. Then, after substitutions and summation of all contributions, the final combination is written out as
\begin{equation}\label{ya-a-54}
-\frac{c_2L}{96\pi^2}\int_{\mathbb{R}^4}\mathrm{d}^4z\,
f^{aec}B_\mu^{e}(z)B_\nu^{c}(z)
f^{adb}B_\mu^{d}(z)B_\nu^{b}(z).
\end{equation}

\noindent\textbf{Third degree.} Let us do similar calculations for the situation with the third degree of the background field. It is clear that in this case only logarithmic singularities can occur, and instead of the fourth degree of the background field, a derivative appears as well. The main functionality is
\begin{equation}\label{ya-a-55}
\frac{1}{3}\int_{\mathbb{R}^4}\mathrm{d}^4z_1\,\big(\partial_{z_1^\mu}B_\nu^{ac}(z_1)\big)B_\sigma^{cd}(z_1)B_\rho^{da}(z_1).
\end{equation}
In this case, the following integrals play the role of corresponding coefficients
\begin{align}
	-2^2&\int_{\mathbb{R}^{4\times2}}\mathrm{d}^4z_2\mathrm{d}^4z_3\,
\partial_{z_2^{\rho}}\hat{R}_0^{\Lambda}(z_2)
\Big(2(z_2+z_1)^\mu\partial_{z_2^{\nu}}+\delta^{\mu\nu}\Big)\hat{R}_0^{\Lambda}(z_2-z_3)
\partial_{z_3^{\sigma}}\hat{R}_0^{\Lambda}(z_3),\\
	-2^2&\int_{\mathbb{R}^{4\times2}}\mathrm{d}^4z_2\mathrm{d}^4z_3\,
\partial_{z_2^{\sigma}}\hat{R}_0^{\Lambda}(z_2)
\partial_{z_2^{\rho}}\hat{R}_0^{\Lambda}(z_2-z_3)
\Big(2(z_3+z_1)^\mu\partial_{z_3^{\nu}}+\delta^{\mu\nu}\Big)\hat{R}_0^{\Lambda}(z_3).
\end{align}
Note that during the writing out, some terms containing $B_\mu^{cd}(z_1)B_\mu^{da}(z_1)$ were omitted, since they would not lead to a singular contribution in the final expression due to the zeroing of the trace
\begin{equation}\label{ya-a-56}
\big(\partial_{z_1^\nu}B_\nu^{ac}(z_1)\big)B_\mu^{cd}(z_1)B_\mu^{da}(z_1)=0.
\end{equation}
For the same reasons, we can remove the parts with $\delta^{\mu\nu}$-symbol. Additionally, all terms containing the multiplier $z_1$ will also not contribute due to the presence of spherical symmetry in the form
\begin{equation}\label{ya-a-57}
\int_{\mathbb{S}^3}\mathrm{d}^3\sigma(\hat{x})\,
\hat{x}^\mu=0
,\,\,\,
\int_{\mathbb{S}^3}\mathrm{d}^3\sigma(\hat{x})\,
\hat{x}^\mu\hat{x}^\nu\hat{x}^\sigma=0.
\end{equation}
Next, using relations \eqref{ya-a-52} and noting that the following combination leads to zero
\begin{equation}\label{ya-a-58}
	\big(\partial_{z_1^\mu}B_\nu^{ac}(z_1)\big)B_\sigma^{cd}(z_1)B_\rho^{da}(z_1)
	\Big(\delta^{\mu\nu}\delta^{\sigma\rho}+\delta^{\mu\sigma}\delta^{\nu\rho}+\delta^{\mu\rho}\delta^{\nu\sigma}\Big)=0,
\end{equation}
we get only one non-zero coefficient
\begin{equation}\label{ya-a-59}
\delta^{\mu\rho}2^3\int_{\mathbb{R}^{4\times2}}\mathrm{d}^4z_2\mathrm{d}^4z_3\,
\partial_{z_2^{\rho}}\hat{R}_0^{\Lambda}(z_2)
\hat{R}_0^{\Lambda}(z_2-z_3)
\partial_{z_3^{\nu}}\partial_{z_3^{\sigma}}\hat{R}_0^{\Lambda}(z_3)\stackrel{\mathrm{s.p.}}{=}-\delta^{\mu\rho}\delta^{\nu\sigma}\frac{L}{4\pi^2}.
\end{equation}
As a result, we come to the final expression of the form
\begin{equation}\label{ya-a-60}
	-\frac{c_2L}{24\pi^2}
	\int_{\mathbb{R}^4}\mathrm{d}^4z\,\big(\partial_{z^\mu}B_\nu^{a}(z)\big)
	f^{adb}B_\mu^{d}(z)B_\nu^{b}(z).
\end{equation}

\noindent\textbf{Second degree.} In this case, not only logarithmic singularities are possible when two background fields and two derivatives appear, but also power-law contributions containing only two background fields without derivatives. Consider the first case, then for a combination of the form
\begin{equation}\label{ya-a-61}
\frac{1}{2}\int_{\mathbb{R}^4}\mathrm{d}^4z_1\,\big(\partial_{z_1^\mu}B_\nu^{ac}(z_1)\big)
\big(\partial_{z_1^\sigma}B_\rho^{ac}(z_1)\big)
\end{equation}
only the following coefficients are possible
\begin{align}
	\frac{1}{2}&\int_{\mathbb{R}^{4}}\mathrm{d}^4z_2\,
	\partial_{z_2^{\nu}}\hat{R}_0^{\Lambda}(z_2)
	\Big(\partial_{z_2^{\rho}}z_2^\mu z_2^\sigma+z_2^\mu z_2^\sigma\partial_{z_2^{\rho}}\Big)\hat{R}_0^{\Lambda}(z_2),\\ 
	-\frac{1}{2}&\int_{\mathbb{R}^{4}}\mathrm{d}^4z_2\,
	\hat{R}_0^{\Lambda}(z_2)
	\Big(\partial_{z_2^{\rho}}z_2^\mu z_2^\sigma+z_2^\mu z_2^\sigma\partial_{z_2^{\rho}}\Big)\partial_{z_2^{\nu}}\hat{R}_0^{\Lambda}(z_2).
\end{align}
After integration by parts, we come to the representation
\begin{equation}
\frac{1}{2}\int_{\mathbb{R}^{4}}\mathrm{d}^4z_2\,
	\partial_{z_2^{\nu}}\hat{R}_0^{\Lambda}(z_2)
	\Big(
	2\delta^{\rho\mu}z_2^\sigma+
	2\delta^{\rho\sigma}z_2^\mu+
	4z_2^\mu z_2^\sigma\partial_{z_2^{\rho}}\Big)
	\hat{R}_0^{\Lambda}(z_2),
\end{equation}
in which we can already use the averaging formulas from \eqref{ya-a-52}. Let us introduce the notation $C(x)$ for the operator $x^\mu\partial_{x^\mu}$, then using the asymptotic expressions\footnote{They can be obtained by the same method that was used in the analysis of integrals in the case of the fourth degree of the background field.} for auxiliary integrals
\begin{align}
\int_{\mathbb{R}^{4}}\mathrm{d}^4z\,\Big(C(z)\hat{R}_0^{\Lambda}(z)\Big)\hat{R}_0^{\Lambda}(z)
&\stackrel{\mathrm{s.p.}}{=}-\frac{L}{4\pi^2},
\\
\int_{\mathbb{R}^{4}}\mathrm{d}^4z\,\Big(C(z)\hat{R}_0^{\Lambda}(z)\Big)\Big(C(z)\hat{R}_0^{\Lambda}(z_2)\Big)
&\stackrel{\mathrm{s.p.}}{=}+\frac{L}{2\pi^2},
\\
\int_{\mathbb{R}^{4}}\mathrm{d}^4z\,\Big(\partial_{z^\mu}\hat{R}_0^{\Lambda}(z)\Big)|z|^2\Big(\partial_{z^\mu}\hat{R}_0^{\Lambda}(z)\Big)
&\stackrel{\mathrm{s.p.}}{=}+\frac{L}{2\pi^2},
\end{align}
we obtain a singular coefficient in the form
\begin{equation}
\frac{L}{48\pi^2}\Big(-\delta^{\rho\mu}\delta^{\nu\sigma}-
\delta^{\rho\sigma}\delta^{\nu\mu}+2
\delta^{\mu\sigma}\delta^{\nu\rho}\Big).
\end{equation}
Thus, the logarithmic contribution with functional \eqref{ya-a-61} is written as follows
\begin{equation}\label{ya-a-62}
-\frac{c_2L}{48\pi^2}\int_{\mathbb{R}^4}\mathrm{d}^4z\,\big(\partial_{z^\mu}B_\nu^{a}(z)\big)
\big(\partial_{z^\mu}B_\nu^{a}(z)-\partial_{z^\nu}B_\mu^{a}(z)\big).
\end{equation}
Returning to the contribution with a power-law behavior, we get the answer in the form
\begin{equation}\label{ya-a-63}
	\frac{c_2\Lambda^2}{2}\bigg(\int_{\mathbb{R}^4}\mathrm{d}^4z\,B_\nu^{a}(z)
	B_\nu^{a}(z)\bigg)\bigg(\int_{\mathrm{B}_1}\mathrm{d}^4x\,
\hat{R}_0^{1}(x)A_0(x)\hat{R}_0^{1}(x)-2\hat{R}_0^{1}(0)\bigg).
\end{equation}
Next, we use the auxiliary notation from Section \ref{ym:sec:two-vs}
\begin{equation}
\frac{\rho_0}{4\pi^2}=R_0^1(0),\,\,\,
\frac{\rho_3}{4\pi^2}=\int_{\mathrm{B}_{1}}\mathrm{d}^4x\,R_0^1(x)A_0^{\phantom{1}}(x)R_0^1(x).
\end{equation}
Then, performing the final substitutions and summing up all the answers outlined above, we obtain a relation of the form
\begin{equation}\label{ya-a-64}
\ln\det\big(G_{0}/G_{0}|_{B=0}\big)\big|_{\mathrm{reg.}}\stackrel{\mathrm{s.p.}}{=}
	\frac{c_2\Lambda^2(\rho_3-2\rho_0)}{8\pi^2}S_2[B]
		-\frac{c_2L}{96\pi^2}W_{-1}[B].
\end{equation}
\underline{\textbf{The second part.}} The calculation of a contribution linear with respect to the field from the gauge condition can be greatly simplified. To do this, we first write out functional derivatives with respect to the background field $B_\mu^a$ and to the field $e_\mu^a$ for the corresponding determinant at the point $e_\mu^a=B_\mu^a$. Using formula \eqref{ya-a-30} and performing a similar calculation, we obtain
\begin{align}
\frac{\delta}{\delta B_\mu^g(z)}\ln\det(\mathfrak{G}_0|_{\mathrm{reg.}})\bigg|_{e=B}&=-
D_\mu^{cd}(z)G_0^{ad}|_{\mathrm{reg.}}(x,z)\big|_{x=z}
f^{cga},\\
\frac{\delta}{\delta e_\mu^g(z)}\ln\det(\mathfrak{G}_0|_{\mathrm{reg.}})\bigg|_{e=B}&=+
D_\mu^{cd}(z)G_0^{da}|_{\mathrm{reg.}}(z,x)\big|_{x=z}
f^{agc}.
\end{align}
Next, we make two important observations.
\begin{enumerate}
	\item First, their sum is equal to the variation $\ln\det(G_0|_{\mathrm{reg.}})$ with respect to $B_\mu^g(z)$.
	\item Secondly, the functions are equal because of the symmetry of the kernel $G_0^{ab}(x,y)=G_0^{ba}(y,x)$.
\end{enumerate}
Thus, calculating the variation for \eqref{ya-a-64} and multiplying by $1/2$, we get the final answer in the form
\begin{equation}\label{ya-a-67}
\int_{\mathbb{R}^{4}}\mathrm{d}^4z\,e_\mu^g(z)\Bigg(\frac{\delta}{\delta e_\mu^g(z)}\ln\det(\mathfrak{G}_0|_{\mathrm{reg.}})\bigg|_{e=B}\Bigg)\stackrel{\mathrm{s.p.}}{=}
	\frac{c_2\Lambda^2(\rho_3-2\rho_0)}{8\pi^2}\bigg(\int_{\mathbb{R}^4}\mathrm{d}^4z\,e_\nu^{a}(z)
	B_\nu^{a}(z)\bigg)
-\frac{c_2L}{48\pi^2}\Gamma_1[e].
\end{equation}

\subsection{Vector operator}
Let us now consider the determinant for the regularized operator $\mathfrak{G}_{1}$. The perturbative formula for such an object is written out in the same way as
\begin{align}\nonumber
	\ln\det\big(\mathfrak{G}_{1}/\mathfrak{G}_{1}|_{e=B=0}\big)\big|_{\mathrm{reg.}}=
	\sum_{k=1}^{+\infty}\frac{(-1)^k}{k}\int_{\mathbb{R}^{4\times k}}\mathrm{d}^4&z_1\ldots\mathrm{d}^4z_k\,
	\big(\mathfrak{M}_{1\nu\mu_1}^{\,\,ac_1}(z_1)+\delta^{ac_1}\delta_{\nu\mu_1}\partial_{z_1^\nu}\partial_{z_1^\nu}\big)R_0^{\Lambda}(z_1-z_2)\times\\\label{ya-a-185}\ldots\times&
	\big(\mathfrak{M}_{1\mu_k\nu}^{\,\,c_ka}(z_k)+\delta^{c_kb}\delta_{\mu_k\nu}\partial_{z_k^\nu}\partial_{z_k^\nu}\big)R_0^{\Lambda}(z_k-y)\big|_{y=z_1}.
\end{align}
The main task of this section also consists of two calculations.
\begin{enumerate}
	\item Find the singular component for the case $\mathfrak{G}_1=G_1$.
	\item Find the linear contribution with respect to the field $e_\mu^a$ to a singular component for $\mathfrak{G}_1$.
\end{enumerate}
\underline{\textbf{The first part.}} To calculate it, it is convenient to use the previously obtained answer \eqref{ya-a-64} for the scalar operator. Indeed, given the relation $F_{\mu\mu}^{a}(z)=0$, we obtain the following decomposition
\begin{align}\label{ya-a-68}
\ln\det\big(G_{1}/G_{1}|_{B=0}\big)\big|_{\mathrm{reg.}}
\stackrel{\mathrm{s.p.}}{=}&4
\ln\det\big(G_{0}/G_{0}|_{B=0}\big)\big|_{\mathrm{reg.}}\\
+&2\int_{\mathbb{R}^{4\times 2}}\mathrm{d}^4z_1\mathrm{d}^4z_2\,
F_{\mu\nu}^{ac}(z_1)
R_0^{\Lambda}(z_1-z_2)
F_{\nu\mu}^{ca}(z_2)
R_0^{\Lambda}(z_2-z_1).
\end{align}
By shifting the variable in the second term, leaving only the singular part of the coefficient and using the result \eqref{ya-a-64}, we get the answer in the form
\begin{equation}\label{ya-a-69}
\ln\det\big(G_{1}/G_{1}|_{B=0}\big)\big|_{\mathrm{reg.}}\stackrel{\mathrm{s.p.}}{=}
\frac{c_2\Lambda^2(\rho_3-2\rho_0)}{2\pi^2}S_2[B]
+\frac{5c_2L}{24\pi^2}W_{-1}.
\end{equation}
\underline{\textbf{The second part.}} Using the definition of the operator from \eqref{ya-a-33}, we can write the formula for the first functional derivative
\begin{equation}\label{ya-a-70}
\frac{\delta}{\delta e_\mu^g(z)}\ln\det(\mathfrak{G}_1|_{\mathrm{reg.}})\bigg|_{e=B}=
2f^{agc}D_{\nu}^{cd}(z)G_{1\nu\mu}^{\,\,da}(z,x)\big|_{x=z}.
\end{equation}
Next, we note that it is convenient to use the decomposition of the Green's function by degrees of the field strength tensor. In this case, only the following five terms can provide a singular contribution to the right hand side of \eqref{ya-a-70}
\begin{align}
&2f^{agc}D_{\mu}^{cd}(z)G_{0}^{da}(z,x)\big|_{x=z}\\
+\,&8f^{agc}\partial_{z^\nu}\int_{\mathbb{R}^{4\times2}}\mathrm{d}^4z_2\mathrm{d}^4z_3\,
\hat{R}_0^{\Lambda}(z-z_2)
B_\sigma^{cd}(z_2)\partial_{z_2^\sigma}
\hat{R}_0^{\Lambda}(z_2-z_3)
F_{\nu\mu}^{da}(z_3)
\hat{R}_0^{\Lambda}(z_3-x)\big|_{x=z}\\
+\,&8f^{agc}\partial_{z^\nu}\int_{\mathbb{R}^{4\times2}}\mathrm{d}^4z_2\mathrm{d}^4z_3\,
\hat{R}_0^{\Lambda}(z-z_2)
F_{\nu\mu}^{cd}(z_2)
\hat{R}_0^{\Lambda}(z_2-z_3)
B_\sigma^{da}(z_3)\partial_{z_3^\sigma}
\hat{R}_0^{\Lambda}(z_3-x)\big|_{x=z}\\
+\,&4f^{agc}\partial_{z^\nu}\int_{\mathbb{R}^{4}}\mathrm{d}^4z_2\,
\hat{R}_0^{\Lambda}(z-z_2)
F_{\nu\mu}^{ca}(z_2)
\hat{R}_0^{\Lambda}(z_2-x)\big|_{x=z}\\
+\,&4f^{agc}B_\nu^{cd}(z)\int_{\mathbb{R}^{4}}\mathrm{d}^4z_2\,
\hat{R}_0^{\Lambda}(z-z_2)
F_{\nu\mu}^{da}(z_2)
\hat{R}_0^{\Lambda}(z_2-x)\big|_{x=z}.
\end{align}
Note that the first part is equal to the doubled variation of \eqref{ya-a-67} with respect to the field $e_\mu^a$, while the remaining terms are calculated using previously used formulas, see auxiliary integrals for calculating coefficients with the functional \eqref{ya-a-53}. Putting all the parts together, we get the result
\begin{equation}\label{ya-a-72}
\frac{c_2\Lambda^2(\rho_3-2\rho_0)}{4\pi^2}B_\mu^{a}(z)
-\frac{5c_2L}{24\pi^2}D_\nu^{ga}(z)F_{\nu\mu}^a(z),
\end{equation}
which leads to a final answer of the form
\begin{equation}\label{ya-a-73}
\int_{\mathbb{R}^{4}}\mathrm{d}^4z\,e_\mu^g(z)\Bigg(\frac{\delta}{\delta e_\mu^g(z)}\ln\det(\mathfrak{G}_1|_{\mathrm{reg.}})\bigg|_{e=B}\Bigg)\stackrel{\mathrm{s.p.}}{=}
\frac{c_2\Lambda^2(\rho_3-2\rho_0)}{4\pi^2}\bigg(\int_{\mathbb{R}^4}\mathrm{d}^4z\,e_\nu^{a}(z)
B_\nu^{a}(z)\bigg)
+\frac{5c_2L}{24\pi^2}\Gamma_1[e].
\end{equation}

\section{Two loops: strong case}
\label{ym:sec:two}
\subsection{Decomposition of Green's functions}
\label{ym:sec:two-1}

\textbf{Vector function.} For further calculations, it is convenient to use auxiliary decompositions for the Green's function, both for the scalar $\mathfrak{G}_0^\Lambda$ and for the vector $\mathfrak{G}_1^\Lambda$. Hereafter, the superscript $\Lambda$ indicates the presence of regularization. Note that the field $e_\mu^a$, responsible for fixing the gauge condition, is chosen equal to the background field $B_\mu^a$. Therefore, $\mathfrak{G}_0^\Lambda\to G_0^\Lambda$ and $\mathfrak{G}_1^\Lambda\to G_1^\Lambda$, see \eqref{ya-a-34} and \eqref{ya-a-35}. So, we decompose the function $G_{1\mu\nu}^{\Lambda ab}(x,y)$ with respect to the degrees of the field strength tensor with an allocation of singular components in a symmetrical form. Then we get
\begin{equation}\label{ya-a-74}
G_{1\mu\nu}^{\Lambda ab}(x,y)=\delta_{\mu\nu}G_{0}^{\Lambda ab}(x,y)+
\mathcal{N}_{\mu\nu}^{ab}(x,y)+
\mathcal{L}_{\mu\nu}^{ab}(x,y)+
PS_{1\mu\nu}^{\,\,ab}(x,y),
\end{equation}
where each part is determined based on a value of singularity and the degree of the field strength tensor. Let us explain the notation.\\

\noindent1) The contribution of $\delta_{\mu\nu}G_{0}^{\Lambda ab}$ corresponds to the summand proportional to the zero degree of the field strength tensor $F_{\mu\nu}^{a}$, and thus is the Green's function for the scalar operator.\\

\noindent2) The contribution of $\mathcal{N}_{\mu\nu}^{ab}$ corresponds to the summand proportional to the first degree of the field strength tensor $F_{\mu\nu}^{a}$ and containing the first three singular orders of the functional 
\begin{equation}
2\int_{\mathbb{R}^4}\mathrm{d}^4z\,
G_{0}^{\Lambda ac}(x,z)
F_{\mu\nu}^{cd}(z)
G_{0}^{\Lambda db}(z,y).
\end{equation}
It is convenient to fix its explicit form as follows
\begin{align}\label{ya-a-125}
\mathcal{N}_{\mu\nu}^{ab}(x,y)=&\,
\big(F_{\mu\nu}^{ab}(x)+F_{\mu\nu}^{ab}(y)\big)\theta(x-y)\\\nonumber
+&\big(B_\sigma^{ac}(x)F_{\mu\nu}^{cb}(x)+F_{\mu\nu}^{ac}(y)B_\sigma^{cb}(y)\big)2\partial_{x^\sigma}\tau(x-y)+n_{\mu\nu}^{ab}(x,y),
\end{align}
where auxiliary symmetric functions are defined by the equalities
\begin{equation}\label{ya-a-76}
\theta(x)=\int_{\mathrm{B}_{1/\sigma}}\mathrm{d}^4z\,R_0^{\Lambda}(x-z)R_0^{\Lambda}(z),
\end{equation}
\begin{equation}\label{ya-a-78}
	\tau(x)=\int_{\mathrm{B}_{1/\sigma}}\mathrm{d}^4z\int_{\mathrm{B}_{1/\sigma}}\mathrm{d}^4y\,
	\Big(R_0^{\Lambda}(x-z)R_0^{\Lambda}(z-y)R_0^{\Lambda}(y)-R_0^{\phantom{1}}(z)R_0^{\phantom{1}}(z-y)R_0^{\phantom{1}}(y)\Big).
\end{equation}
In this case, the symmetric function $n_{\mu\nu}^{ab}(x,y)$ is constructed in a similar way using $F_{\mu\nu}^a$, as well as either two background fields or a derivative of the background field. In this case, it may be required that for $y\to x$ in the main order, the function $n_{\mu\nu}^{ab}(x,y)$ should have the order $\ln(\Lambda/\sigma)/\Lambda^2$. Its explicit form is not important, since it does not lead to singular contributions due to the property $n_{\mu\mu}^{ab}(x,y)=0$.\\

\noindent3) The contribution of $\mathcal{L}_{\mu\nu}^{ab}$ is equal to the main singular part of the term proportional to the second degree of the field strength tensor, and is equal to
\begin{equation}\label{ya-a-77}
\mathcal{L}_{\mu\nu}^{ab}(x,y)=2
\big(F_{\mu\sigma}^{ac}(x)F_{\sigma\nu}^{cb}(x)+F_{\mu\sigma}^{ac}(y)F_{\sigma\nu}^{cb}(y)\big)\tau(x-y).
\end{equation}
4) The contribution $PS_{1\mu\nu}^{\,\,ab}$ contains all remaining parts. It has two finite derivatives and also depends on the boundary conditions. Considering the symmetry properties of all the previous parts, it can be argued that it has the property $PS_{1\mu\nu}^{\,\,ab}(x,y)=PS_{1\nu\mu}^{\,\,ba}(y,x)$.\\

\noindent{\textbf{Scalar function.}} Similarly, we decompose the Green's function for the ghost fields. However, only that part\footnote{Next decomposition orders must be additionally calculated.} of a singularity decomposition will be presented here, which is necessary for the tasks of this paper, see Section \ref{ym:sec:int}. First, we note that the function $G_0^{\Lambda ab}(x,y)$ can be divided into two parts
\begin{equation}\label{ya-a-128}
G_0^{\Lambda ab}(x,y)=G_{\mathrm{loc}}^{ab}(x,y)+PS_0^{ab}(x,y),
\end{equation}
where the non-linear part of $PS_0$ has two finite derivatives and is an analog of the function $PS_1$, while the first term has a local character, which can be described as follows: the function $G_{\mathrm{loc}}^{ab}(x,y)$ is a finite sum, each term of which is equal to the product of a special function that does not depend on the background field and has the argument $x-y$ by $p(x)+p(y)$, where $p$ is a polynomial of finite degree with respect to the background field and its first three derivatives. It can be shown that in the first orders with respect to singularities, the function $G_{\mathrm{loc}}^{ab}(x,y)$ can be decomposed as follows
\begin{align}\label{ya-a-127}
G_{\mathrm{loc}}^{ab}(x,y)=\delta^{ab}R_0^\Lambda(x-y)&+2B_\mu^{ab}(y)\partial_{x^\mu}\theta(x-y)\\
&+\partial_{y^\nu}B_\mu^{ab}(y)\big(\delta_{\mu\nu}\theta(x-y)+2\kappa_{\nu\mu}(x-y)\big)\\
&+B_\mu^{ac}(y)B_\nu^{cb}(y)\big(\delta_{\mu\nu}\theta(x-y)+4\partial_{x^\mu}\partial_{x^\nu}\tau(x-y)\big)+\ldots,
\end{align}
where an additional auxiliary function has the form
\begin{equation}\label{ya-a-79}
\kappa_{\nu\mu}(x)=\int_{\mathrm{B}_{1/\sigma}}\mathrm{d}^4z\,R_0^{\Lambda}(x-z)
z^\nu\partial_{z^\mu}R_0^{\Lambda}(z).
\end{equation}
Further, considering the symmetric combination, we obtain the relation
\begin{align}\label{ya-a-74-1}
	G_{\mathrm{loc}}^{ab}(x,y)=\delta^{ab}R_0^\Lambda(x-y)&+\big(B_\mu^{ab}(x)
	+B_\mu^{ab}(y)\big)\partial_{x^\mu}\theta(x-y)\\\nonumber
	&+B_\mu^{ac}(y)B_\nu^{cb}(y)\big(\delta_{\mu\nu}\theta(x-y)+4\partial_{x^\mu}\partial_{x^\nu}\tau(x-y)\big)+\ldots
\end{align}
In fact, within the framework of the decomposition, using symmetry was equivalent to replacing one type of singularity with another, that is,
\begin{equation}\label{ya-a-80}
\delta_{\mu\nu}\theta(x)+2\kappa_{\nu\mu}(x)\longleftrightarrow x^\nu\partial_{x^\mu}\theta(x),
\end{equation}
which can be verified by integration by parts. Note that the term proportional to the first degree of the background field has a symmetrical appearance, while the fields for the third term are written out at only one point. This is due to the fact that the main purpose of this work is to study singular components, for which the proposed decomposition is sufficient. When trying to search for explicit\footnote{Recall that this paper describes a deviation from the covariant case. At the same time, a structure of singularities is proposed for a number of counter-terms without calculating explicit coefficients.} coefficients for the vertices of renormalization it also be necessary to represent the terms for the third and fourth degrees of the background field.

Let us perform\footnote{Such transformations are standard in the analysis of diagrams with one and two integration operators. They allow us to express the singularity explicitly. For more information, see \cite{34,Iv-2024-1} for an example.} the shift of the variable $x\to x+y$ and factor the individual terms into a singular part depending on the variable $x$ and a part depending on the background fields at the point $y$. For convenience, we write out in the form of a table the decomposition by "values" of the singularities of the function $G_{\mathrm{loc}}^{ab}(x+y,y)$.
\begin{equation}\label{ya-a-104}
\renewcommand{\arraystretch}{1.2}
\begin{tabular}{|c|c|c|c|c|c|}
\hline
\multicolumn{2}{|c|}{$\sim r^{-2}$}&\multicolumn{2}{|c|}{$\sim r^{-1}$}&\multicolumn{2}{|c|}{$\sim\log|r|+\sim r^0$}\\
\hline
$1$&$R_0^\Lambda(x)$&$B_\mu(y)$&$2\partial_{x^\mu}\theta(x)$&$\partial_{y^\nu}B_\mu(y)$&$x^\nu\partial_{x^\mu}\theta(x)$\\
&&&&$B_\mu(y)B_\nu(y)$&$\delta_{\mu\nu}\theta(x)+4\partial_{x^\mu}\partial_{x^\nu}\tau(x)$\\
\hline
\end{tabular}
\renewcommand{\arraystretch}{1}
\end{equation}
Note that the group indices in the table above are omitted. A similar decomposition must also be written out for the second derivative of the Green's function. To do this, we first note that the covariant derivative can be rewritten as
\begin{equation}\label{ya-a-81}
D_\sigma(x)=\partial_{x^\sigma}+B_\sigma(y)+(x-y)^\eta\partial_{y^\eta}B_\sigma(y)+\ldots,
\end{equation}
where the group indices have been omitted again for convenience. After a series of auxiliary calculations, we obtain the following summary table.

\begin{equation}\label{ya-a-105}
	\renewcommand{\arraystretch}{1.3}
	\begin{tabular}{|c|c|c|c|c|c|}
		\hline
		\multicolumn{6}{|c|}{$D_{\sigma}^{ca}(x)D_\rho^{db}(y)G_{\mathrm{loc}}^{ab}(x,y)|_{x\to x+y}$ }\\
		\hline
		\multicolumn{2}{|c|}{$\sim r^{-4}$}&\multicolumn{2}{|c|}{$\sim r^{-3}$}&\multicolumn{2}{|c|}{$\sim r^{-2}$}\\
		\hline
		$1$&$-\partial_{x^\sigma}\partial_{x^\rho}R_0^\Lambda(x)$&$B_\mu(y)$
		&$-2\partial_{x^\sigma}\partial_{x^\rho}\partial_{x^\mu}\theta(x)$&$\partial_{y^\nu}B_\mu(y)$
		&$-\partial_{x^\sigma}\partial_{x^\rho}x^\nu\partial_{x^\mu}\theta(x)$\\
		&&$B_\sigma(y)$&$-\partial_{x^\rho}R_0^\Lambda(x)$&$B_\mu(y)B_\nu(y)$&$-\delta_{\mu\nu}\partial_{x^\sigma}\partial_{x^\rho}\theta(x)$\\
		&&$B_\rho(y)$&$-\partial_{x^\sigma}R_0^\Lambda(x)$&$B_\mu(y)B_\nu(y)$&$-4\partial_{x^\sigma}\partial_{x^\rho}\partial_{x^\mu}\partial_{x^\nu}\tau(x)$\\
		&&&&$\partial_{y^\rho}B_\mu(y)$&$2\partial_{x^\sigma}\partial_{x^\mu}\theta(x)$\\
		&&&&$B_\sigma(y)B_\mu(y)$&$-\partial_{x^\rho}\partial_{x^\mu}\theta(x)$\\
		&&&&$B_\mu(y)B_\rho(y)$&$-\partial_{x^\sigma}\partial_{x^\mu}\theta(x)$\\
		&&&&$\partial_{y^\eta}B_\sigma(y)$&$-x^\eta\partial_{x^\rho}R_0^\Lambda(x)$\\
		&&&&$B_\sigma(y)B_\rho(y)$&$-R_0^\Lambda(x)$\\
		\hline
	\end{tabular}
	\renewcommand{\arraystretch}{1}
\end{equation}

\subsection{Singular operators}
\label{ym:sec:two-sin}
For example, consider an arbitrary smooth matrix-valued function of two variables $f^{ab}(x,y)$. We assume that it is symmetric, that is, $f^{ab}(x,y)=f^{ba}(y,x)$. The main task of this section is to calculate the singular part of the expansion with respect to the regularizing parameter $\Lambda$ for the values
\begin{equation}\label{ya-a-100}
\mathrm{J}_{\sigma\rho}^{\phantom{1}}[f]=\int_{\mathbb{R}^4}\mathrm{d}^4x\int_{\mathbb{R}^4}\mathrm{d}^4y\,
\mathcal{R}_{\sigma\rho}^{ab}(x,y)f^{ba}(y,x),
\end{equation}
\begin{equation}\label{ya-a-100-1}
	\mathrm{\hat{J}}_{\sigma\rho}[f]=\int_{\mathbb{R}^4}\mathrm{d}^4x\int_{\mathbb{R}^4}\mathrm{d}^4y\,
	\mathcal{\hat{R}}_{\sigma\rho}^{ab}(x,y)f^{ba}(y,x),
\end{equation}
where the kernels of integro-differential operators are given by the equalities
\begin{equation}\label{ya-a-101}
\mathcal{R}_{\sigma\rho}^{ab}(x,y)=f^{acd}\Big(D_\sigma^{ch}(x)D_\rho^{gf}(y)G_{\mathrm{loc}}^{hf}(x,y)\Big)G_{\mathrm{loc}}^{de}(x,y)f^{egb},
\end{equation}
\begin{equation}\label{ya-a-101-1}
	\mathcal{\hat{R}}_{\sigma\rho}^{ab}(x,y)=f^{hcd}G_{\mathrm{loc}}^{cg}(x,y)G_{\mathrm{loc}}^{de}(x,y)f^{egf}
	D_\sigma^{ha}(x)D_\rho^{fb}(y).
\end{equation}
Taking into account the definition for the auxiliary number $\rho_3$ from Section \eqref{ya-a-94}, the answers are written as follows
\begin{align}\label{ya-a-102}
\mathrm{J}_{\sigma\rho}[f]
\stackrel{\mathrm{s.p.}}{=}&-\frac{\delta_{\sigma\rho}\Lambda^2c_2\rho_3}{16\pi^2}
\int_{\mathbb{R}^4}\mathrm{d}^4x\,f^{aa}(x,x)\\\nonumber
&+\frac{Lc_2}{96\pi^2}
\int_{\mathbb{R}^4}\mathrm{d}^4x\,\Big(\big(M_{1\rho\sigma}^{ba}(x)+4D_\rho^{bc}(x)D_\sigma^{ca}(x)\big)f^{ab}(x,y)\Big)\Big|_{y=x},\\\label{ya-a-103-1}
\mathrm{\hat{J}}_{\sigma\rho}[f]
\stackrel{\mathrm{s.p.}}{=}&+
\frac{Lc_2}{8\pi^2}
\int_{\mathbb{R}^4}\mathrm{d}^4x\,\Big(\big(D_\rho^{bc}(x)D_\sigma^{ca}(x)\big)f^{ab}(x,y)\Big)\Big|_{y=x}.
\end{align}
Let us first prove the first formulated relation. To do this, in integral \eqref{ya-a-100}, we shift the variable $x\to x+y$ and replace the integration area as follows $\mathbb{R}^4\times\mathbb{R}^4\to\mathrm{B}_{1/\sigma}\times\mathbb{R}^4$. Next, we decompose the densities in such a way that all the parts that depend on the background field have the variable $y$, and the parts containing a singular density contain only $x$. In this way, the desired factorization can be obtained for each term. Note that the Green's functions have already been redefined as necessary, see the tables in \eqref{ya-a-104} and \eqref{ya-a-105} in Section \ref{ym:sec:two-1}. In tern, the test function has a standard Taylor series expansion of the form
\begin{equation}\label{ya-a-106}
f^{ba}(y,x+y)=
f^{ba}(y,y)+
x^\nu \partial_{z^\nu}f^{ba}(y,z)\big|_{z=y}+\frac{x^\nu x^\mu}{2}
\partial_{z^\nu}\partial_{z^\mu}f^{ba}(y,z)\big|_{z=y}+\ldots
\end{equation}
Next, the analysis method is reduced to sorting through all possible combinations. Here are some supporting comments. For convenience, we will use the "value of singularity" shown in tables \eqref{ya-a-104} and \eqref{ya-a-105}.\\

\noindent 1) The Green's functions $DDG_{\mathrm{loc}}$ and $G_{\mathrm{loc}}$ have a total value of singularity of $r^{-6}$. There is only one such option, see the left columns of tables \eqref{ya-a-104} and \eqref{ya-a-105}. In this case, the operator from \eqref{ya-a-101} is converted to the form
\begin{equation}\label{ya-a-107}
\mathcal{R}_{\sigma\rho}^{ab}(x+y,y)\to f^{acd}\Big(-\delta^{cg}\partial_{x^\sigma}^{\phantom{1}}\partial_{x^\rho}^{\phantom{1}}R_0^\Lambda(x)\Big)\delta^{de}R_0^\Lambda(x)f^{egb}=c_2\delta^{ab}R_0^\Lambda(x)\partial_{x^\sigma}^{\phantom{1}}\partial_{x^\rho}^{\phantom{1}}R_0^\Lambda(x).
\end{equation}
Then we write out the interesting parts of the test function. It is clear that the degree of the variable $x$ must be even, otherwise the integral will be zeroed due to symmetry properties, see \eqref{ya-a-57}. Considering the fact that the total value of the singularity for the density should be $r^{-4}$ or stronger, we choose only two terms
\begin{equation}\label{ya-a-108}
f^{ba}(y,x+y)\to
f^{ba}(y,y)+\frac{x^\nu x^\mu}{2}
\partial_{z^\nu}\partial_{z^\mu}f^{ba}(y,z)\big|_{z=y}.
\end{equation}
Next, after substituting the obtained decompositions and using auxiliary formulas from the first part of Section \ref{ym:sec:two-vs}, we come to the answer
\begin{equation}\label{ya-a-109}
-\frac{\delta_{\sigma\rho}\Lambda^2c_2\rho_3}{16\pi^2}
\int_{\mathbb{R}^4}\mathrm{d}^4x\,f^{aa}(x,x)
+\frac{Lc_2}{96\pi^2}
\int_{\mathbb{R}^4}\mathrm{d}^4x\,\Big(\big(\delta_{\sigma\rho}A_0(x)+4\partial_{x^\rho}\partial_{x^\sigma}\big)f^{aa}(x,y)\Big)\Big|_{y=x}.
\end{equation}
\noindent 2) The Green's functions $DDG_{\mathrm{loc}}$ and $G_{\mathrm{loc}}$ have a total value of singularity of $r^{-5}$. According to the table, there can be four such options. In this case, the operator from \eqref{ya-a-101} can be converted to the following form
\begin{align}
\mathcal{R}_{\sigma\rho}^{ab}(x+y,y)\to&
\frac{c_2B_\mu^{ba}(y)}{2}\bigg[\Big(-\partial_{x^\sigma}^{\phantom{1}}\partial_{x^\rho}^{\phantom{1}}R_0^\Lambda(x)\Big)\Big(2\partial_{x^\mu}^{\phantom{1}}\theta(x)\Big)+\Big(-2\partial_{x^\sigma}^{\phantom{1}}\partial_{x^\rho}^{\phantom{1}}\partial_{x^\mu}^{\phantom{1}}\theta(x)\Big)R_0^\Lambda(x)\bigg]\\
-&\frac{c_2B_\sigma^{ba}(y)}{2}R_0^\Lambda(x)\partial_{x^\rho}^{\phantom{1}}R_0^\Lambda(x)
-\frac{c_2B_\rho^{ba}(y)}{2}R_0^\Lambda(x)\partial_{x^\sigma}^{\phantom{1}}R_0^\Lambda(x).
\end{align}
In this case, it is necessary to take only one term of the first degree from the test function
\begin{equation}\label{ya-a-111}
f^{ba}(y,x+y)\to
x^\nu \partial_{z^\nu}f^{ba}(y,z)\big|_{z=y}.
\end{equation}
After substituting and using the formulas from Section \ref{ym:sec:two-vs}, we arrive at an answer of the form
\begin{equation}\label{ya-a-112}
\frac{Lc_2}{96\pi^2}
\int_{\mathbb{R}^4}\mathrm{d}^4x\,\Big(\big(-2\delta_{\sigma\rho}^{\phantom{1}}B_\mu^{ba}(x)\partial_{x^\mu}^{\phantom{1}}+4B_\sigma^{ba}(x)\partial_{x^\rho}^{\phantom{1}}+4B_\rho^{ba}(x)\partial_{x^\sigma}^{\phantom{1}}\big)f^{ab}(x,y)\Big)\Big|_{y=x}.
\end{equation}
\noindent 3) The Green's functions $DDG_{\mathrm{loc}}$ and $G_{\mathrm{loc}}$ have a total value of singularity of $r^{-4}$. Taking into account the decompositions from the tables, a total of 13 terms are obtained. At the same time, only the main term must be saved for the test function. Omitting the routine calculations that are obtained taking into account the auxiliary integrals from Section \ref{ym:sec:two-vs}, we write only the final answer
\begin{equation}\label{ya-a-113}
\frac{Lc_2}{96\pi^2}
\int_{\mathbb{R}^4}\mathrm{d}^4x\,\Big(-\delta_{\sigma\rho}B_\mu^{bc}(x)B_\mu^{ca}(x)+2B_\rho^{bc}(x)B_\sigma^{ca}(x)+2B_\sigma^{bc}(x)B_\rho^{ca}(x)\Big)f^{ab}(x,x).
\end{equation}
Next, note that we can add an arbitrary number of derivatives of the background fields, since the convolution of symmetric $f^{ab}$ with antisymmetric $\partial B^{ba}$ leads to zero. Summing up all the answers and adding the missing number of $\partial B$, we get the stated answer \eqref{ya-a-102}. Relation \eqref{ya-a-103-1} is obtained after preserving the main order for all functions and integrating them by parts. Both formulas are proven.

\subsection{Nonlocal part}
\label{ym:sec:two-nl}

Let us study the singular nonlocal part appearing in the two-loop approximation of the quantum action. The main method is to sort through all possible combinations, taking into account the auxiliary decompositions for the Green's functions from \eqref{ya-a-74} and \eqref{ya-a-74-1}. In this case, the subsections will be named taking into account the parts of the Green's functions involved in the decomposition. For example, the two-loop part of $\mathbb{H}_0^{\mathrm{sc}}(\Gamma_3^2)$ consists of diagrams with three Green's functions. Then, using the expansion from \eqref{ya-a-74}, the contribution of $G_{\mathrm{loc}}G_{\mathrm{loc}}PS_1$ will be called all singular terms in diagrams obtained from $\mathbb{H}_0^{\mathrm{sc}}(\Gamma_3^2)$ by replacing two Green's functions with $G_{\mathrm{loc}}$ and one Green's function with $PS_1$ in all possible ways. Similar designations can be used for the rest of the parts. Here are all combinations that can lead to nonlocal singularities:
\begin{align*}
\mathbb{H}_0^{\mathrm{sc}}\big(\Gamma_3^2\big)&\to G_{\mathrm{loc}}G_{\mathrm{loc}}PS_1,\,\,\,G_{\mathrm{loc}}G_{\mathrm{loc}}PS_0,\,\,\,G_{\mathrm{loc}}\mathcal{N}PS_1
,\\
\mathbb{H}_0^{\mathrm{sc}}\big(\Omega_3^2\big)&\to G_{\mathrm{loc}}G_{\mathrm{loc}}PS_1,\,\,\,G_{\mathrm{loc}}G_{\mathrm{loc}}PS_0
,\\
\mathbb{H}_0^{\mathrm{sc}}\big(\Gamma_4\big)&\to G_{\mathrm{loc}}PS_1,\,\,\,G_{\mathrm{loc}}PS_0,\,\,\,\mathcal{N}PS_1
.
\end{align*}
The remaining combinations do not contain singularities from ultraviolet divergences. This fact is easily verified by counting "degrees" near the diagonal.

\subsubsection{Part $G_{\mathrm{loc}}G_{\mathrm{loc}}PS_i$ for $\mathbb{H}_0^{\mathrm{sc}}(\Gamma_3^2)$}
\label{ym:sec:two-nl-1}
The calculations in this section are based on the use of singular integrals from Section \ref{ym:sec:two-sin}. Let us start with the situation $i=1$, since $i=0$ is a special case. To do this, we recall that by construction, the part $PS_{1\mu\nu}^{\,\, ab}$ has two finite derivatives and therefore can be used as a test function. However, it is necessary to explain how the search should be performed. Suppose there is a two-loop diagram, then we need to cut one of the three lines and then connect the function $PS_{1\mu\nu}^{\,\, ab}$ to the cut point. In this case, the remaining two Green's functions should be replaced by $G_{\mathrm{loc}}^{ab}$. Thus, one diagram will turn into three, and $\mathbb{H}_0^{\mathrm{sc}}(\Gamma_3^2)$, consisting of 6 diagrams, will turn into 18 components. 

Mathematically, this can be rewritten as follows. First, we write out all strongly connected vertices that can be obtained by connecting two vertices $\Gamma_3$,
\begin{align*}
\mathbb{H}_2^{\mathrm{sc}}\big(\Gamma_3^2\big)=
&{\centering\adjincludegraphics[width = 1.5 cm, valign=c]{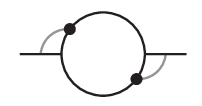}}
-{\centering\adjincludegraphics[width = 1.5 cm, valign=c]{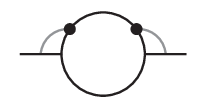}}
+2{\centering\adjincludegraphics[width = 1.5 cm, valign=c]{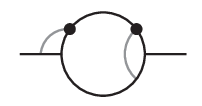}}
-2{\centering\adjincludegraphics[width = 1.5 cm, valign=c]{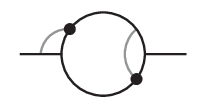}}
\\+&
{\centering\adjincludegraphics[width = 1.5 cm, valign=c]{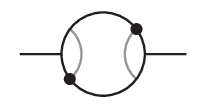}}
-{\centering\adjincludegraphics[width = 1.5 cm, valign=c]{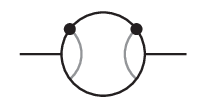}}
+2{\centering\adjincludegraphics[width = 1.5 cm, valign=c]{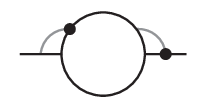}}
-2{\centering\adjincludegraphics[width = 1.5 cm, valign=c]{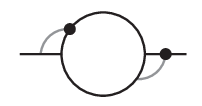}}
\\+&
{\centering\adjincludegraphics[width = 1.5 cm, valign=c]{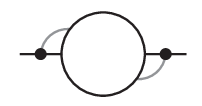}}
-{\centering\adjincludegraphics[width = 1.5 cm, valign=c]{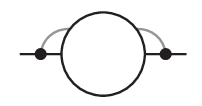}}
+2{\centering\adjincludegraphics[width = 1.5 cm, valign=c]{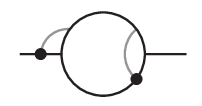}}
-2{\centering\adjincludegraphics[width = 1.5 cm, valign=c]{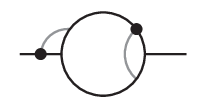}}.
\end{align*}
Next, replace $G_{1\mu\nu}^{\Lambda ab}$ with $\delta_{\mu\nu}G_{\mathrm{loc}}^{ab}$ and the resulting differential operator in the kernel notate by $r_{\sigma\rho}^{\,ab}(x,y)$, so we have
\begin{equation*}
\mathbb{H}_2^{\mathrm{sc}}\big(\Gamma_3^2\big)\Big|_{G_1^\Lambda\to G_{\mathrm{loc}}}=
\int_{\mathbb{R}^4}\mathrm{d}^4x\int_{\mathbb{R}^4}\mathrm{d}^4y\,
r_{\sigma\rho}^{\,ab}(x,y)a_{\sigma}^a(x)a_\rho^b(y).
\end{equation*}
Then a contribution in $G_{\mathrm{loc}}G_{\mathrm{loc}}PS_1$ we are interested is the singular part of the integral
\begin{equation}\label{ya-a-120}
	\int_{\mathbb{R}^4}\mathrm{d}^4x\int_{\mathbb{R}^4}\mathrm{d}^4y\,
	r_{\sigma\rho}^{\,ab}(x,y)PS_{1\sigma\rho}^{\,\,ab}(x,y).
\end{equation}
For convenience, we write out the density for each individual part after the appropriate replacement. We get 12 transitions in total: 
\begin{fleqn}
\begin{equation}\label{ya-a-114}
{\centering\adjincludegraphics[width = 1.5 cm, valign=c]{fig/ya16.eps}}\to
-4\mathcal{R}_{\sigma\rho}^{ab}(x,y)+2\mathcal{\hat{R}}_{\sigma\rho}^{ab}(x,y),\,\,\,\,\,\,
{\centering\adjincludegraphics[width = 1.5 cm, valign=c]{fig/ya22.eps}}\to
-\mathcal{R}_{\rho\sigma}^{ab}(x,y)+\frac{1}{2}\mathcal{\hat{R}}_{\rho\sigma}^{ab}(x,y),
\end{equation}
\end{fleqn}
\begin{fleqn}
	\begin{equation}\label{ya-a-115}
		{\centering\adjincludegraphics[width = 1.5 cm, valign=c]{fig/ya17.eps}}\to
	4\mathcal{R}_{\sigma\rho}^{ab}(x,y),\phantom{-+2\mathcal{\hat{R}}_{\sigma\rho}^{ab}(x,y)}\,\,\,\,\,\,
			{\centering\adjincludegraphics[width = 1.5 cm, valign=c]{fig/ya23.eps}}\to
			\delta_{\sigma\rho}^{\phantom{1}}\mathcal{R}_{\mu\mu}^{ab}(x,y),
	\end{equation}
\end{fleqn}
\begin{fleqn}
	\begin{equation}\label{ya-a-116}
		{\centering\adjincludegraphics[width = 1.5 cm, valign=c]{fig/ya18.eps}}\to
		\mathcal{R}_{\sigma\rho}^{ab}(x,y),\phantom{-4+2\mathcal{\hat{R}}_{\sigma\rho}^{ab}(x,y)}\,\,\,\,\,\,
		{\centering\adjincludegraphics[width = 1.5 cm, valign=c]{fig/ya24.eps}}\to
		\mathcal{\hat{R}}_{\rho\sigma}^{ab}(x,y),
	\end{equation}
\end{fleqn}
\begin{fleqn}
	\begin{equation}\label{ya-a-117}
		{\centering\adjincludegraphics[width = 1.5 cm, valign=c]{fig/ya19.eps}}\to
		-\mathcal{R}_{\sigma\rho}^{ab}(x,y)+\frac{1}{2}\mathcal{\hat{R}}_{\sigma\rho}^{ab}(x,y),
		\,\,\,\,\,\,\,\,
		{\centering\adjincludegraphics[width = 1.5 cm, valign=c]{fig/ya25.eps}}\to
		\delta_{\sigma\rho}^{\phantom{1}}\mathcal{\hat{R}}_{\mu\mu}^{ab}(x,y),
	\end{equation}
\end{fleqn}
\begin{fleqn}
	\begin{equation}\label{ya-a-118}
		{\centering\adjincludegraphics[width = 1.5 cm, valign=c]{fig/ya20.eps}}\to
		-\frac{1}{2}\mathcal{\hat{R}}_{\sigma\rho}^{ab}(x,y),\phantom{-\mathcal{R}_{\sigma\rho}^{ab}(x,y)}\,\,\,\,\,\,\,\,\,\,\,
		{\centering\adjincludegraphics[width = 1.5 cm, valign=c]{fig/ya26.eps}}\to
		-\frac{1}{2}\delta_{\sigma\rho}^{\phantom{1}}\mathcal{\hat{R}}_{\mu\mu}^{ab}(x,y),
	\end{equation}
\end{fleqn}
\begin{fleqn}
	\begin{equation}\label{ya-a-119}
		{\centering\adjincludegraphics[width = 1.5 cm, valign=c]{fig/ya21.eps}}\to
		-\frac{1}{2}\mathcal{\hat{R}}_{\sigma\rho}^{ab}(x,y),\phantom{-\mathcal{R}_{\sigma\rho}^{ab}(x,y)}\,\,\,\,\,\,\,\,\,\,\,
				{\centering\adjincludegraphics[width = 1.5 cm, valign=c]{fig/ya27.eps}}\to
				-\frac{1}{2}\mathcal{\hat{R}}_{\rho\sigma}^{ab}(x,y).
	\end{equation}
\end{fleqn}
Summarizing the expressions obtained, we arrive at the relation for the operator kernel of interest
\begin{align*}
r_{\sigma\rho}^{\,ab}(x,y)=&-4\mathcal{R}_{\sigma\rho}^{ab}(x,y)-\mathcal{R}_{\rho\sigma}^{ab}(x,y)-
\delta_{\sigma\rho}^{\phantom{1}}\mathcal{R}_{\mu\mu}^{ab}(x,y)\\
&+\mathcal{\hat{R}}_{\sigma\rho}^{ab}(x,y)+\frac{5}{2}\mathcal{\hat{R}}_{\rho\sigma}^{ab}(x,y)-2
\delta_{\sigma\rho}^{\phantom{1}}\mathcal{\hat{R}}_{\mu\mu}^{ab}(x,y).
\end{align*}
Therefore, taking into account the definitions of \eqref{ya-a-100} and \eqref{ya-a-100-1}, the integral \eqref{ya-a-120} is rewritten as the sum of six functionals
\begin{equation}\label{ya-a-135}
-4\mathrm{J}_{\sigma\rho}^{\phantom{1}}[PS_{1\sigma\rho}]-
\mathrm{J}_{\sigma\rho}^{\phantom{1}}[PS_{1\rho\sigma}]-
\mathrm{J}_{\sigma\sigma}^{\phantom{1}}[PS_{1\rho\rho}]+
\mathrm{\hat{J}}_{\sigma\rho}^{\phantom{1}}[PS_{1\sigma\rho}]+\frac{5}{2}
\mathrm{\hat{J}}_{\sigma\rho}^{\phantom{1}}[PS_{1\rho\sigma}]-2
\mathrm{\hat{J}}_{\sigma\sigma}^{\phantom{1}}[PS_{1\rho\rho}].
\end{equation}
Additionally, we note that similar expressions can be written out for each individual diagram. The representations for the corresponding integrals have the form:
\begin{align*}
	{\centering\adjincludegraphics[width = 1 cm, valign=c]{fig/ya8.eps}}&\longrightarrow
-\mathrm{J}_{\sigma\rho}^{\phantom{1}}[PS_{1\rho\sigma}]-
\mathrm{\hat{J}}_{\sigma\rho}^{\phantom{1}}[PS_{1\sigma\rho}]+\frac{1}{2}
\mathrm{\hat{J}}_{\sigma\rho}^{\phantom{1}}[PS_{1\rho\sigma}],\\
	-{\centering\adjincludegraphics[width = 1 cm, valign=c]{fig/ya7.eps}}&\longrightarrow
-4\mathrm{J}_{\sigma\rho}^{\phantom{1}}[PS_{1\sigma\rho}]-
\mathrm{J}_{\sigma\sigma}^{\phantom{1}}[PS_{1\rho\rho}]-
\mathrm{\hat{J}}_{\sigma\sigma}^{\phantom{1}}[PS_{1\rho\rho}],\\
	{\centering\adjincludegraphics[width = 1 cm, valign=c]{fig/ya9.eps}}&\longrightarrow
-4\mathrm{J}_{\sigma\rho}^{\phantom{1}}[PS_{1\sigma\rho}]+2
\mathrm{\hat{J}}_{\sigma\rho}^{\phantom{1}}[PS_{1\sigma\rho}]-
\mathrm{\hat{J}}_{\sigma\sigma}^{\phantom{1}}[PS_{1\rho\rho}],\\
	-2{\centering\adjincludegraphics[width = 1 cm, valign=c]{fig/ya10.eps}}&\longrightarrow
2\mathrm{J}_{\sigma\rho}^{\phantom{1}}[PS_{1\sigma\rho}]+
\mathrm{\hat{J}}_{\sigma\rho}^{\phantom{1}}[PS_{1\rho\sigma}],\\
	{\centering\adjincludegraphics[width = 1 cm, valign=c]{fig/ya11.eps}}&\longrightarrow
2\mathrm{J}_{\sigma\rho}^{\phantom{1}}[PS_{1\sigma\rho}]+
\mathrm{\hat{J}}_{\sigma\rho}^{\phantom{1}}[PS_{1\rho\sigma}].
\end{align*}
Summing up the last relations, we can check the consistency with the representation of \eqref{ya-a-135}. Let us use the obtained values \eqref{ya-a-102} and \eqref{ya-a-103-1} for singular parts, then
\begin{align}\label{ya-a-121}
\mathbb{H}_0^{\mathrm{sc}}\big(\Gamma_3^2\big)\Big|_{G_{\mathrm{loc}}G_{\mathrm{loc}}PS_1}\stackrel{\mathrm{s.p.}}{=}&\,\frac{9\Lambda^2c_2\rho_3}{16\pi^2}
	\int_{\mathbb{R}^4}\mathrm{d}^4x\,PS_{1\rho\rho}^{\,\,aa}(x,x)\\\nonumber
	+&\,\frac{Lc_2}{2\pi^2}
	\int_{\mathbb{R}^4}\mathrm{d}^4x\,
	\bigg(\bigg[-\frac{5}{48}M_{1\rho\sigma}^{\,\,ba}(x)+\frac{\delta_{\sigma\rho}}{2}M_{0}^{ba}(x)\\\nonumber
	&\,\,\,\,\,\,\,\,\,\,\,\,\,\,\,\,\,\,\,\,\,\,\,\,\,\,\,\,\,\,
	\,\,\,\,\,\,\,\,\,\,\,+\frac{11}{24}D_\rho^{bc}(x)D_\sigma^{ca}(x)
	-\frac{5}{8}F_{\rho\sigma}^{ba}(x)\bigg]PS_{1\sigma\rho}^{\,\,ab}(x,y)\bigg)\bigg|_{y=x}.
\end{align}
By construction, $PS_0$ has the same smoothness properties as the function $PS_1$. Therefore, we can use the resulting formula \eqref{ya-a-121} by substituting the function $\delta_{\sigma\rho}^{\phantom{1}}PS_{0}^{ab}(x,y)$ instead of $PS_{1\sigma\rho}^{\,\,ab}(x,y)$. Thus, we have
\begin{align}\label{ya-a-129}
	\mathbb{H}_0^{\mathrm{sc}}\big(\Gamma_3^2\big)\Big|_{G_{\mathrm{loc}}G_{\mathrm{loc}}PS_0}\stackrel{\mathrm{s.p.}}{=}&\,\frac{9\Lambda^2c_2\rho_3}{4\pi^2}
	\int_{\mathbb{R}^4}\mathrm{d}^4x\,PS_{0}^{aa}(x,x)\\\nonumber
	+&\,\frac{27Lc_2}{48\pi^2}
	\int_{\mathbb{R}^4}\mathrm{d}^4x\,
	\Big(M_{0}^{ba}(x)PS_{0}^{ab}(x,y)\Big)\Big|_{y=x}.
\end{align}

\subsubsection{Part $G_{\mathrm{loc}}\mathcal{N}PS_1$ for $\mathbb{H}_0^{\mathrm{sc}}(\Gamma_3^2)$}
\label{ym:sec:two-nl-2}
In this case, the counting process is similar to the previous one. Let us explain the differences and complete the calculations. Note that now, after the stage of cutting the line in a two-loop diagram from $\mathbb{H}_0^{\mathrm{sc}}\big(\Gamma_3^2\big)$, one of the two remaining Green's functions $G_1^\Lambda$ is replaced by $G_{\mathrm{loc}}$, and the second one is for $\mathcal{N}$, and vice versa. So the result is not 18 diagrams, but 36.

Next, note that in the main order near the diagonal, when $x\sim y$, the function $\mathcal{N}_{\mu\nu}^{ab}(x,y)$ is proportional to $2F_{\mu\nu}^{ab}(y)\theta(x-y)$, therefore, for a singular contribution to occur, both derivatives must be inside the loop. Thus, of the sum of $\mathbb{H}_2^{\mathrm{sc}}\big(\Gamma_3^2\big)$, only the following terms can eventually give a non-zero contribution
\begin{equation*}
{\centering\adjincludegraphics[width = 1.5 cm, valign=c]{fig/ya16.eps}}
	-{\centering\adjincludegraphics[width = 1.5 cm, valign=c]{fig/ya17.eps}}
	+2{\centering\adjincludegraphics[width = 1.5 cm, valign=c]{fig/ya18.eps}}
	-2{\centering\adjincludegraphics[width = 1.5 cm, valign=c]{fig/ya19.eps}}
	+{\centering\adjincludegraphics[width = 1.5 cm, valign=c]{fig/ya22.eps}}
	-{\centering\adjincludegraphics[width = 1.5 cm, valign=c]{fig/ya23.eps}}.
\end{equation*}
Note that the derivative can only be shifted inside the loop, since a derivative acting on the tail leads to a finite contribution. Therefore, we proceed to
\begin{equation*}
	-2{\centering\adjincludegraphics[width = 1.5 cm, valign=c]{fig/ya17.eps}}
	+4{\centering\adjincludegraphics[width = 1.5 cm, valign=c]{fig/ya18.eps}}
	+{\centering\adjincludegraphics[width = 1.5 cm, valign=c]{fig/ya22.eps}}
	-{\centering\adjincludegraphics[width = 1.5 cm, valign=c]{fig/ya23.eps}}.
\end{equation*}
Now, replacing as follows
\begin{equation*}
G_{1\mu\nu}^{\,\,ab}(x,y)\to\delta_{\mu\nu}^{\phantom{1}}\delta^{ab}R_0^\Lambda(x-y),\,\,\,
\mathcal{N}_{\mu\nu}^{ab}(x,y)\to2F_{\mu\nu}^{ab}(y)\theta(x-y),
\end{equation*}
note that the first diagram leads to zero due to $F_{\mu\mu}^{ab}(y)=0$. The remaining parts, after shifting the variable $x\to x+y$, add up to a density of the form
\begin{equation*}
-2c_2F_{\mu\rho}^{ba}(y)\Big(R_0^\Lambda(x)\partial_{x^\sigma}\partial_{x^\mu}\theta(x)\Big)
-4c_2F_{\rho\mu}^{ba}(y)\Big(R_0^\Lambda(x)\partial_{x^\sigma}\partial_{x^\mu}\theta(x)\Big)
+c_2F_{\sigma\rho}^{ba}(y)\Big(R_0^\Lambda(x)\partial_{x^\mu}\partial_{x^\mu}\theta(x)\Big).
\end{equation*}
Next, multiplying by $PS_{1\sigma\rho}^{\,\,ab}(y,y)$ and using the auxiliary relations from Section \ref{ym:sec:two-vs}, we get the following answer
\begin{equation}\label{ya-a-122}
	\mathbb{H}_0^{\mathrm{sc}}\big(\Gamma_3^2\big)\Big|_{G_{\mathrm{loc}}\mathcal{N}PS_1}\stackrel{\mathrm{s.p.}}{=}\frac{3Lc_2}{16\pi^2}
	\int_{\mathbb{R}^4}\mathrm{d}^4y\,
	\Big(F_{\rho\sigma}^{ba}(y)PS_{1\sigma\rho}^{\,\,ab}(y,y)\Big).
\end{equation}
To complete the picture, we note that the expressions for each individual diagram are written as follows:
\begin{equation*}
3{\centering\adjincludegraphics[width = 1 cm, valign=c]{fig/ya8.eps}}\Big|_{G_{\mathrm{loc}}\mathcal{N}PS_1}
\stackrel{\mathrm{s.p.}}{=}
-\frac{3}{2}{\centering\adjincludegraphics[width = 1 cm, valign=c]{fig/ya7.eps}}\Big|_{G_{\mathrm{loc}}\mathcal{N}PS_1}
\stackrel{\mathrm{s.p.}}{=}\mathbb{H}_0^{\mathrm{sc}}\big(\Gamma_3^2\big)\Big|_{G_{\mathrm{loc}}\mathcal{N}PS_1},
\end{equation*}
\begin{equation*}
{\centering\adjincludegraphics[width = 1 cm, valign=c]{fig/ya9.eps}}\Big|_{G_{\mathrm{loc}}\mathcal{N}PS_1}
\stackrel{\mathrm{s.p.}}{=}
{\centering\adjincludegraphics[width = 1 cm, valign=c]{fig/ya10.eps}}\Big|_{G_{\mathrm{loc}}\mathcal{N}PS_1}
\stackrel{\mathrm{s.p.}}{=}
{\centering\adjincludegraphics[width = 1 cm, valign=c]{fig/ya11.eps}}\Big|_{G_{\mathrm{loc}}\mathcal{N}PS_1}
\stackrel{\mathrm{s.p.}}{=}0.
\end{equation*}

\subsubsection{Parts for $\mathbb{H}_0^{\mathrm{sc}}(\Omega_3^2)$}
\label{ym:sec:two-nl-3}
Let us consider all three variants of the emerging nonlocal singularities in turn. First, let us recall the explicit form of the two-loop diagram
\begin{equation*}
	\mathbb{H}_0^{\mathrm{sc}}\big(\Omega_3^2\big)=
	-{\centering\adjincludegraphics[width = 1 cm, valign=c]{fig/ya12.eps}}.
\end{equation*}
Its special feature is that it contains only one Green's function $G_1^\Lambda$. The other two are equal to $G_0^\Lambda$. This fact leads to some changes in the computing process. \\

\noindent1) Consider the case of $G_{\mathrm{loc}}G_{\mathrm{loc}}PS_1$. Since the function $PS_1$ is present only in the middle line, it is it that should be cut. The resulting diagram is similar to the one already studied and is reproduced as follows
\begin{equation}\label{ya-a-153}
-\frac{1}{4}{\centering\adjincludegraphics[width = 1.5 cm, valign=c]{fig/ya16.eps}}\to
\mathcal{R}_{\sigma\rho}^{ab}(x,y)-\frac{1}{2}\mathcal{\hat{R}}_{\sigma\rho}^{ab}(x,y).
\end{equation}
Therefore, using \eqref{ya-a-102} and \eqref{ya-a-103-1}, we get the answer
\begin{align}\label{ya-a-130}
	\mathbb{H}_0^{\mathrm{sc}}\big(\Omega_3^2\big)\Big|_{G_{\mathrm{loc}}G_{\mathrm{loc}}PS_1}\stackrel{\mathrm{s.p.}}{=}&-\frac{\Lambda^2c_2\rho_3}{16\pi^2}
	\int_{\mathbb{R}^4}\mathrm{d}^4x\,PS_{1\rho\rho}^{\,\,aa}(x,x)\\\nonumber
	&+\frac{Lc_2}{2\pi^2}
	\int_{\mathbb{R}^4}\mathrm{d}^4x\,
	\bigg(\bigg[\frac{1}{48}M_{1\rho\sigma}^{\,\,ba}(x)-\frac{1}{24}D_\rho^{bc}(x)D_\sigma^{ca}(x)
\bigg]PS_{1\sigma\rho}^{\,\,ab}(x,y)\bigg)\bigg|_{y=x}.
\end{align}
2) Consider the case of $G_{\mathrm{loc}}G_{\mathrm{loc}}PS_0$. It is divided into two parts, since $PS_0$ is contained in both the midline and Green's scalar functions. In the first case, the answer is obtained by replacing the function $PS_{1\sigma\rho}$ in \eqref{ya-a-130} with $\delta_{\sigma\rho}PS_0$, that is
\begin{equation*}
-\frac{\Lambda^2c_2\rho_3}{4\pi^2}
\int_{\mathbb{R}^4}\mathrm{d}^4x\,PS_{0}^{aa}(x,x)
+\frac{Lc_2}{16\pi^2}
\int_{\mathbb{R}^4}\mathrm{d}^4x\,
\Big(M_{0}^{ba}(x)PS_{0}^{ab}(x,y)\Big)\bigg|_{y=x}.
\end{equation*}
In the second case, the side line is cut, and the resulting diagram is reduced to the one already studied, for which the transition is valid
\begin{equation*}
\frac{1}{4}{\centering\adjincludegraphics[width = 1.5 cm, valign=c]{fig/ya25.eps}}\bigg|_{\sigma=\rho}\to
\mathcal{\hat{R}}_{\mu\mu}^{ab}(x,y).
\end{equation*}
Then, integrating with function $PS_0^{ab}(x,y)$, we get the result in the form
\begin{equation*}
-\frac{Lc_2}{8\pi^2}
\int_{\mathbb{R}^4}\mathrm{d}^4x\,
\Big(M_{0}^{ba}(x)PS_{0}^{ab}(x,y)\Big)\bigg|_{y=x}.
\end{equation*}
Considering that there are two side lines, after summing the results, we get
\begin{equation}\label{ya-a-131}
	\mathbb{H}_0^{\mathrm{sc}}\big(\Omega_3^2\big)\Big|_{G_{\mathrm{loc}}G_{\mathrm{loc}}PS_0}\stackrel{\mathrm{s.p.}}{=}-\frac{\Lambda^2c_2\rho_3}{4\pi^2}
	\int_{\mathbb{R}^4}\mathrm{d}^4x\,PS_{0}^{aa}(x,x)
	-\frac{3Lc_2}{16\pi^2}
	\int_{\mathbb{R}^4}\mathrm{d}^4x\,
	\Big(M_{0}^{ba}(x)PS_{0}^{ab}(x,y)\Big)\bigg|_{y=x}.
\end{equation}

\subsubsection{Parts for $\mathbb{H}_0^{\mathrm{sc}}(\Gamma_4)$}
\label{ym:sec:two-nl-4}
In this case, the contribution from the quartic vertex consists of three diagrams. Let us recall their explicit form
\begin{equation*}
{\centering\adjincludegraphics[width = 1 cm, valign=c]{fig/ya13.eps}}-
{\centering\adjincludegraphics[width = 1 cm, valign=c]{fig/ya14.eps}}+
{\centering\adjincludegraphics[width = 2 cm, valign=c]{fig/ya15.eps}}.
\end{equation*}
They consist of two Green's functions on the diagonal, so first it is convenient to figure out exactly which singularities occur in the function $G_{\mathrm{loc}}^{ab}(x,x)$. Using the representation form \eqref{ya-a-74-1}, we see that only the main order $\delta^{ab}\Lambda^2R_0^1(0)$ is singular on the diagonal, since the remaining terms, which previously led to singularities during integration, in the case of $y=x$ are reduced due to the properties
\begin{equation}\label{ya-a-156}
\partial_{x^\mu}\theta(x-y)\big|_{y=x}=0,\,\,\,
\delta_{\mu\nu}\theta(0)+4\partial_{x^\mu}\partial_{x^\nu}\tau(x-y)\big|_{y=x}\stackrel{\mathrm{s.p.}}{=}0.
\end{equation}
Therefore, when calculating nonlocal components, we can use the substitution
\begin{equation*}
G_{\mathrm{loc}}^{ab}(x,x)\to\delta^{ab}\Lambda^2R_0^1(0).
\end{equation*}
1) Consider the cases with $G_{\mathrm{loc}}PS_i$. Let us immediately note that due to the symmetry of $\delta^{ab}$, the third diagram in this case gives zero contribution. Further, summing up the spatial indexes, we note that 
\begin{equation*}
\frac{1}{4}
{\centering\adjincludegraphics[width = 1 cm, valign=c]{fig/ya14.eps}}\bigg|_{G_{\mathrm{loc}}PS_1}=
{\centering\adjincludegraphics[width = 1 cm, valign=c]{fig/ya13.eps}}\bigg|_{G_{\mathrm{loc}}PS_1}=
-2c_2\Lambda^2R_0^1(0)
\int_{\mathbb{R}^4}\mathrm{d}^4x\,PS_{1\rho\rho}^{\,\,aa}(x,x).
\end{equation*}
Thus, the relation is valid
\begin{equation}\label{ya-a-132}
	\mathbb{H}_0^{\mathrm{sc}}\big(\Gamma_4\big)\Big|_{G_{\mathrm{loc}}PS_1}\stackrel{\mathrm{s.p.}}{=}6c_2\Lambda^2R_0^1(0)
	\int_{\mathbb{R}^4}\mathrm{d}^4x\,PS_{1\rho\rho}^{\,\,aa}(x,x).
\end{equation}
Next, replacing $PS_{1\mu\nu}$ with $\delta_{\mu\nu}PS_0$, we get a similar equality
\begin{equation}\label{ya-a-133}
\mathbb{H}_0^{\mathrm{sc}}\big(\Gamma_4\big)\Big|_{G_{\mathrm{loc}}PS_0}\stackrel{\mathrm{s.p.}}{=}24c_2\Lambda^2R_0^1(0)
\int_{\mathbb{R}^4}\mathrm{d}^4x\,PS_{0}^{aa}(x,x).
\end{equation}
2) Consider the case of $\mathcal{N}PS_1$. It can be immediately simplified by replacing $\mathcal{N}_{\mu\nu}^{ab}(x,x)$ with the main order $2F_{\mu\nu}^{ab}(x)\theta(0)$, and also by throwing out the second diagram, since the trace of the field strength tensor is zero. Next, recall that $\theta(0)\stackrel{\mathrm{s.p.}}{=}L/(8\pi^2)$, see Section \ref{ym:sec:two-vs}, then we get
\begin{align*}
{\centering\adjincludegraphics[width = 1 cm, valign=c]{fig/ya13.eps}}&\to
\frac{Lc_2}{4\pi^2}
	\int_{\mathbb{R}^4}\mathrm{d}^4x\,
F_{\rho\sigma}^{ba}(x)PS_{1\sigma\rho}^{\,\,ab}(x,x),
\\
{\centering\adjincludegraphics[width = 2 cm, valign=c]{fig/ya15.eps}}&\to
\frac{Lc_2}{2\pi^2}
\int_{\mathbb{R}^4}\mathrm{d}^4x\,
F_{\rho\sigma}^{ba}(x)PS_{1\sigma\rho}^{\,\,ab}(x,x).
\end{align*}
Thus, after summing up, we get an answer of the form
\begin{equation}\label{ya-a-134}
	\mathbb{H}_0^{\mathrm{sc}}\big(\Gamma_4\big)\Big|_{\mathcal{N}PS_1}\stackrel{\mathrm{s.p.}}{=}\frac{3Lc_2}{4\pi^2}
	\int_{\mathbb{R}^4}\mathrm{d}^4x\,
	F_{\rho\sigma}^{ba}(x)PS_{1\sigma\rho}^{\,\,ab}(x,x).
\end{equation}

\subsection{Local part}
\label{ym:sec:two-l}
Once again, we use the general idea of counting singular components, proposed in Section \ref{ym:sec:two-nl}. To do this, first, using the decompositions from \eqref{ya-a-74} and \eqref{ya-a-127}, we write out a list of combinations that can lead to nonzero singular components:
\begin{align*}
	\mathbb{H}_0^{\mathrm{sc}}\big(\Gamma_3^2\big)&\to G_{\mathrm{loc}}\mathcal{N}\mathcal{N},\,\,\,G_{\mathrm{loc}}G_{\mathrm{loc}}\mathcal{N},\,\,\,G_{\mathrm{loc}}G_{\mathrm{loc}}\mathcal{L},\,\,\,G_{\mathrm{loc}}G_{\mathrm{loc}}G_{\mathrm{loc}}
	,\\
	\mathbb{H}_0^{\mathrm{sc}}\big(\Omega_3^2\big)&\to
	G_{\mathrm{loc}}G_{\mathrm{loc}}\mathcal{N},\,\,\,
	 G_{\mathrm{loc}}G_{\mathrm{loc}}\mathcal{L},\,\,\,G_{\mathrm{loc}}G_{\mathrm{loc}}G_{\mathrm{loc}}
	,\\
	\mathbb{H}_0^{\mathrm{sc}}\big(\Gamma_4\big)&\to \mathcal{N}\mathcal{N},\,\,\,G_{\mathrm{loc}}\mathcal{L},\,\,\,G_{\mathrm{loc}}G_{\mathrm{loc}}
	.
\end{align*}
The remaining combinations do not lead to singularities of "ultraviolet" nature.

\subsubsection{Part $G_{\mathrm{loc}}\mathcal{N}\mathcal{N}$ for $\mathbb{H}_0^{\mathrm{sc}}(\Gamma_3^2)$}
\label{ym:sec:two-l-1}
Let us immediately note that the main order of the Green's function $G_{0}^\Lambda$ has the order of the singularity proportional $r^{-2}$, which is due to the behavior near the diagonal. Further, the main order of the function $\mathcal{N}$ has a logarithmic form, so it can be understood as $\ln|r|$. Thus, the maximum order of singularity for the combination $G_{\mathrm{loc}}\mathcal{N}\mathcal{N}$ does not exceed $\ln^2|r|/r^2$. Therefore, to search for singular parts in all diagrams from $\mathbb{H}_0^{\mathrm{sc}}\big(\Gamma_3^2\big)$, see \eqref{ya-a-19}, it is necessary to replace covariant derivatives with ordinary ones. Then it all comes down to a simple search of possible combinations. Let us explain the example of calculations in the first diagram from \eqref{ya-a-19}, for the rest we will write out only final answers in the form of a summary table.

As before, first we need to shift the variable $x\to x+y$, proceed to integration over the variables $(x,y)$ over the domain $\mathrm{B}_{1/\sigma}\times\mathbb{R}^4$, and then replace the densities into the main parts of the decompositions, that is 
\begin{equation}\label{ya-a-124}
\delta_{\mu\nu}^{\phantom{1}}G_{\mathrm{loc}}^{ab}(x+y,y)\to\delta_{\mu\nu}^{\phantom{1}}\delta^{ab}R_0^\Lambda(x),\,\,\,
\mathcal{N}_{\mu\nu}^{ab}(x+y,y)\to 2F_{\mu\nu}^{ab}(y)\theta(x).
\end{equation}
Note that only the main decomposition order is important for $\mathcal{N}_{\mu\nu}^{ab}(x+y,y)$, that is, only the first line from \eqref{ya-a-125}. The remaining parts will lead to contributions without singularities. Next, note that there are three possible non-zero combinations for each diagram, depending on what exactly each line is replaced with. Let the first case correspond to the situation when the bottom line is replaced by $G_{\mathrm{loc}}$, the second is the case when the middle line is replaced by $G_{\mathrm{loc}}$, and finally the third corresponds to the upper line. In this case, taking into account the auxiliary asymptotic expansions from Section \ref{ym:sec:two-vs}, we can check the following relation
\begin{equation*}
{\centering\adjincludegraphics[width = 1 cm, valign=c]{fig/ya8.eps}}\Big|_{G_{\mathrm{loc}}\mathcal{N}\mathcal{N}}
\stackrel{\mathrm{s.p.}}{=}
-\frac{c_2^2}{2}W_{-1}\bigg(\Big[-\hat{\mathrm{I}}_2/2\Big]+\Big[-\hat{\mathrm{I}}_2/2\Big]+\Big[\hat{\mathrm{I}}_1-\hat{\mathrm{I}}_2/2\Big]\bigg),
\end{equation*}
where each part in square brackets corresponds to a different situation. The integrals $\hat{\mathrm{I}}_i$ are defined in Section \ref{ym:sec:two-vs}. It turns out that for all diagrams, the answers can be written in the same way, replacing only the integral in square brackets with a suitable combination of $a\hat{\mathrm{I}}_1+b\hat{\mathrm{I}}_2$. Here are the answers:
\begin{align*}
-{\centering\adjincludegraphics[width = 1 cm, valign=c]{fig/ya7.eps}}\Big|_{G_{\mathrm{loc}}\mathcal{N}\mathcal{N}}&
	\stackrel{\mathrm{s.p.}}{=}
	-\frac{c_2^2}{2}W_{-1}\bigg(\Big[0\Big]+\Big[-4\hat{\mathrm{I}}_1\Big]+\Big[0\Big]\bigg),\\
{\centering\adjincludegraphics[width = 1 cm, valign=c]{fig/ya9.eps}}\Big|_{G_{\mathrm{loc}}\mathcal{N}\mathcal{N}}&
\stackrel{\mathrm{s.p.}}{=}
-\frac{c_2^2}{2}W_{-1}\bigg(\Big[0\Big]+\Big[-4\hat{\mathrm{I}}_1+2\hat{\mathrm{I}}_2\Big]+\Big[0\Big]\bigg),\\
-2{\centering\adjincludegraphics[width = 1 cm, valign=c]{fig/ya10.eps}}\Big|_{G_{\mathrm{loc}}\mathcal{N}\mathcal{N}}&
\stackrel{\mathrm{s.p.}}{=}
-\frac{c_2^2}{2}W_{-1}\bigg(\Big[-\hat{\mathrm{I}}_2\Big]+\Big[\hat{\mathrm{I}}_2\Big]+\Big[2\hat{\mathrm{I}}_1-\hat{\mathrm{I}}_2\Big]\bigg),\\
{\centering\adjincludegraphics[width = 1 cm, valign=c]{fig/ya11.eps}}\Big|_{G_{\mathrm{loc}}\mathcal{N}\mathcal{N}}&
\stackrel{\mathrm{s.p.}}{=}
-\frac{c_2^2}{2}W_{-1}\bigg(\Big[-\hat{\mathrm{I}}_2\Big]+\Big[\hat{\mathrm{I}}_1\Big]+\Big[\hat{\mathrm{I}}_1\Big]\bigg).
\end{align*}
Summing up the expressions, we get the following result for the combination of two-loop diagrams
\begin{equation}\label{ya-a-126}
\mathbb{H}_0^{\mathrm{sc}}\big(\Gamma_3^2\big)\Big|_{G_{\mathrm{loc}}\mathcal{N}\mathcal{N}}\stackrel{\mathrm{s.p.}}{=}
-\frac{c_2^2}{2}W_{-1}\bigg(-3\hat{\mathrm{I}}_1-\frac{3}{2}\hat{\mathrm{I}}_2\bigg)
\stackrel{\mathrm{s.p.}}{=}\frac{3c_2^2W_{-1}}{16(4\pi^2)^2}\Big(2L^2+L(1+4\rho_1+4\rho_2)\Big).
\end{equation}

\subsubsection{Part $G_{\mathrm{loc}}G_{\mathrm{loc}}\mathcal{N}$ for $\mathbb{H}_0^{\mathrm{sc}}(\Gamma_3^2)$}
\label{ym:sec:two-l-2}

In this case, two functions in the diagram remain $G_{\mathrm{loc}}$, so the general idea of counting is similar to the one from Section \ref{ym:sec:two-nl-1}. In this case, the third line is replaced by the function $\mathcal{N}$, and it is this fact that makes significant differences in the subsequent calculation, since $\mathcal{N}$ does not have two non-singular derivatives. Moreover, on the diagonal it behaves like $\theta(0)\sim L$. This means that it is not possible to use the results for singular operators from Section \ref{ym:sec:two-sin}. 

However, we can write out general answers. To do this, use the ready-made representations after formula \eqref{ya-a-135} and replace $PS_{1\sigma\rho}$ with $\mathcal{N}_{\sigma\rho}$ in them. Using the properties $\mathcal{N}_{\sigma\rho}=-\mathcal{N}_{\rho\sigma}$ and $\mathcal{N}_{\sigma\sigma}=0$ and the relations for auxiliary integrals from Section \ref{ym:sec:two-nl-1}, the intermediate results for individual diagrams can be represented as follows
\begin{align*}
{\centering\adjincludegraphics[width = 1 cm, valign=c]{fig/ya8.eps}}\Big|_{G_{\mathrm{loc}}G_{\mathrm{loc}}\mathcal{N}}&
\stackrel{\mathrm{s.p.}}{=}
\mathrm{J}_{\sigma\rho}^{\phantom{1}}[\mathcal{N}_{1\sigma\rho}]-\frac{3}{2}
\mathrm{\hat{J}}_{\sigma\rho}^{\phantom{1}}[\mathcal{N}_{1\sigma\rho}],\\
	-{\centering\adjincludegraphics[width = 1 cm, valign=c]{fig/ya7.eps}}\Big|_{G_{\mathrm{loc}}G_{\mathrm{loc}}\mathcal{N}}&
		\stackrel{\mathrm{s.p.}}{=}
-4\mathrm{J}_{\sigma\rho}^{\phantom{1}}[\mathcal{N}_{1\sigma\rho}],\\
		{\centering\adjincludegraphics[width = 1 cm, valign=c]{fig/ya9.eps}}\Big|_{G_{\mathrm{loc}}G_{\mathrm{loc}}\mathcal{N}}&
		\stackrel{\mathrm{s.p.}}{=}
-4\mathrm{J}_{\sigma\rho}^{\phantom{1}}[\mathcal{N}_{1\sigma\rho}]+2
\mathrm{\hat{J}}_{\sigma\rho}^{\phantom{1}}[\mathcal{N}_{1\sigma\rho}],\\
		-2{\centering\adjincludegraphics[width = 1 cm, valign=c]{fig/ya10.eps}}\Big|_{G_{\mathrm{loc}}G_{\mathrm{loc}}\mathcal{N}}&
		\stackrel{\mathrm{s.p.}}{=}
2\mathrm{J}_{\sigma\rho}^{\phantom{1}}[\mathcal{N}_{1\sigma\rho}]-
\mathrm{\hat{J}}_{\sigma\rho}^{\phantom{1}}[\mathcal{N}_{1\sigma\rho}],\\
		{\centering\adjincludegraphics[width = 1 cm, valign=c]{fig/ya11.eps}}\Big|_{G_{\mathrm{loc}}G_{\mathrm{loc}}\mathcal{N}}&
		\stackrel{\mathrm{s.p.}}{=}
2\mathrm{J}_{\sigma\rho}^{\phantom{1}}[\mathcal{N}_{1\sigma\rho}]-
\mathrm{\hat{J}}_{\sigma\rho}^{\phantom{1}}[\mathcal{N}_{1\sigma\rho}].
	\end{align*}
In turn, for their sum \eqref{ya-a-135}, we can write out the answer in the form
\begin{equation}\label{ya-a-138}
\mathbb{H}_0^{\mathrm{sc}}\big(\Gamma_3^2\big)\Big|_{G_{\mathrm{loc}}G_{\mathrm{loc}}\mathcal{N}}\stackrel{\mathrm{s.p.}}{=}
-3\mathrm{J}_{\sigma\rho}^{\phantom{1}}[\mathcal{N}_{1\sigma\rho}]-\frac{3}{2}
\mathrm{\hat{J}}_{\sigma\rho}^{\phantom{1}}[\mathcal{N}_{1\sigma\rho}].
\end{equation}
The last integrals need to be analyzed anew. Let us explain the first case, and for the second we will present only the final answer. To do this, we use the definition of integral \eqref{ya-a-100} and kernel \eqref{ya-a-101}. Next, we substitute the decompositions for the Green's functions $G_{\mathrm{loc}}$ from Tables \eqref{ya-a-104} and \eqref{ya-a-105}, as well as the decomposition for the function $\mathcal{N}$ from \eqref{ya-a-125}. Then we remove all terms that do not lead to singular contributions for reasons of zero trace or because of the convolution of a symmetric object with an antisymmetric one. As a result, after shifting $x\to x+y$ and transition to integration by $\mathrm{B}_{1/\sigma}\times\mathbb{R}^4$, the remaining parts are obtained by substitutions:
\begin{align*}
	G_{\mathrm{loc}}^{ab}(x,y)\big|_{x\to x+y}&\to\delta^{ab}R_0^\Lambda(x),
\\
\mathcal{N}_{\sigma\rho}^{ab}(x,y)\big|_{x\to x+y}&\to2F_{\sigma\rho}^{ab}(y)\theta(x),
\\
D_{\sigma}^{ca}(x)D_\rho^{db}(y)G_{\mathrm{loc}}^{ab}(x,y)\big|_{x\to x+y}&\to
-B_\sigma^{ca}(y)B_\rho^{ad}(y)R_0^\Lambda(x)
-4B_\sigma^{ca}(y)B_\mu^{ad}(y)\partial_{x^\rho}\partial_{x^\mu}\theta(x)\\
&\phantom{\to}\,\,
+2\partial_{y^\rho}B_\mu^{cd}(y)\partial_{x^\sigma}\partial_{x^\mu}\theta(x)
-\partial_{y^\eta}B_\mu^{cd}(y)x^\eta\partial_{x^\rho}R_0^\Lambda(x).
\end{align*}
Substituting the obtained transitions and using auxiliary integrals from Section \ref{ym:sec:two-vs}, we obtain the equality
\begin{align}\label{ya-a-151}
\mathrm{J}_{\sigma\rho}^{\phantom{1}}[\mathcal{N}_{1\sigma\rho}]\stackrel{\mathrm{s.p.}}{=}&
-\frac{c_2^2}{2}\Big(\hat{\mathrm{I}}_1-\hat{\mathrm{I}}_5\Big)\int_{\mathbb{R}^4}\mathrm{d}^4y\,f^{abc}
B_\sigma^a(y)B_\rho^b(y)
F_{\sigma\rho}^{c}(y)\\\nonumber&-
\frac{c_2^2}{2}\Big(\hat{\mathrm{I}}_1-\hat{\mathrm{I}}_6\Big)
\int_{\mathbb{R}^4}\mathrm{d}^4y\,
\Big(\partial_{y^\sigma}B_\rho^c(y)\Big)
F_{\sigma\rho}^{c}(y).
\end{align}
Further, repeating all the calculations performed for the second functional, we obtain an asymptotic decomposition for it in the form
\begin{align}\label{ya-a-152}
\mathrm{\hat{J}}_{\sigma\rho}^{\phantom{1}}[\mathcal{N}_{1\sigma\rho}]\stackrel{\mathrm{s.p.}}{=}&
-\frac{c_2^2}{2}\Big(2\hat{\mathrm{I}}_1-\hat{\mathrm{I}}_2-2\hat{\mathrm{I}}_5\Big)\int_{\mathbb{R}^4}\mathrm{d}^4y\,f^{abc}
B_\sigma^a(y)B_\rho^b(y)
F_{\sigma\rho}^{c}(y)\\\nonumber&-
\frac{c_2^2}{2}\Big(2\hat{\mathrm{I}}_1-\hat{\mathrm{I}}_2-4\hat{\mathrm{I}}_6\Big)
\int_{\mathbb{R}^4}\mathrm{d}^4y\,
\Big(\partial_{y^\sigma}B_\rho^c(y)\Big)
F_{\sigma\rho}^{c}(y).
\end{align}
Finally, substituting the results obtained in \eqref{ya-a-138} and using asymptotic expansions for integrals from Section \ref{ym:sec:two-vs}, we obtain the final expression in the form
\begin{align}\label{ya-a-139}
	\mathbb{H}_0^{\mathrm{sc}}\big(\Gamma_3^2\big)\Big|_{G_{\mathrm{loc}}G_{\mathrm{loc}}\mathcal{N}}\stackrel{\mathrm{s.p.}}{=}&
-\frac{3c_2^2}{16(4\pi^2)^2}\Big(L^2+L(2\rho_1-2\rho_2)\Big)\int_{\mathbb{R}^4}\mathrm{d}^4y\,f^{abc}
B_\sigma^a(y)B_\rho^b(y)
F_{\sigma\rho}^{c}(y)\\\nonumber&
-\frac{6c_2^2}{16(4\pi^2)^2}\Big(L^2+L(-1/4+2\rho_1-\rho_2)\Big)
\int_{\mathbb{R}^4}\mathrm{d}^4y\,
\Big(\partial_{y^\sigma}B_\rho^c(y)\Big)
F_{\sigma\rho}^{c}(y).
\end{align}
Note that the answer is not invariant with respect to the gauge transformations of the background field \eqref{ya-p-1} at the level of non-leading "logarithms". Thus, the classical action is split into two parts, each of which has its own correction. This fact is a consequence of the non-covariance of the deformation, see Section \ref{ym:sec:pr:gen3-2}, and leads to the need to calculate renormalization constants for individual parts of the classical action. Moreover, in the case of contributions proportional to $\sim L^2$, the invariance is preserved.

\subsubsection{Part $G_{\mathrm{loc}}G_{\mathrm{loc}}\mathcal{L}$ for $\mathbb{H}_0^{\mathrm{sc}}(\Gamma_3^2)$}
\label{ym:sec:two-l-3}

In this case, it is necessary to perform similar calculations after substituting $\mathcal{L}$ instead of $PS_1$ into the representation from \eqref{ya-a-135}, however, the replacements of the integral kernels will be different. Analyzing the orders of the singularity for all parts, we conclude that the covariant derivatives can be replaced by the ordinary ones, while the kernels can be replaced by the main parts. In other words, after shifting the variable, the following substitutions are valid for densities
\begin{equation*}
	\delta_{\mu\nu}^{\phantom{1}}G_{\mathrm{loc}}^{ab}(x+y,y)\to\delta_{\mu\nu}^{\phantom{1}}\delta^{ab}R_0^\Lambda(x),\,\,\,
	\mathcal{L}_{\mu\nu}^{ab}(x+y,y)\to 4F_{\mu\sigma}^{ac}(y)F_{\sigma\nu}^{cb}(y)\theta(x).
\end{equation*}
In this case, for example, the functional $\mathrm{J}_{\sigma\rho}^{\phantom{1}}[\mathcal{L}_{\sigma\rho}]$ after using the definition of \eqref{ya-a-101} for the operator kernel and appropriate simplification
\begin{equation*}
	\mathcal{R}_{\sigma\rho}^{ab}(x+y,y)\to c_2
	\Big(\partial_{x^\sigma}\partial_{x^\rho}R_0^\Lambda(x)\Big)R_0^\Lambda(x),
\end{equation*}
taking into account the auxiliary integral from Section \eqref{ym:sec:two-vs}, it is represented by the formula
\begin{equation*}
\mathrm{J}_{\sigma\rho}^{\phantom{1}}[\mathcal{L}_{\sigma\rho}]\stackrel{\mathrm{s.p.}}{=}
-c_2^2\hat{\mathrm{I}}_3^{\phantom{1}}W_{-1}^{\phantom{1}}.
\end{equation*}
In a similar manner, the relations are proved.
\begin{equation}\label{ya-a-154}
\mathrm{J}_{\sigma\rho}^{\phantom{1}}[\mathcal{L}_{\sigma\rho}]\stackrel{\mathrm{s.p.}}{=}
\mathrm{J}_{\sigma\rho}^{\phantom{1}}[\mathcal{L}_{\rho\sigma}]\stackrel{\mathrm{s.p.}}{=}\frac{1}{4}
\mathrm{J}_{\sigma\sigma}^{\phantom{1}}[\mathcal{L}_{\rho\rho}],
\end{equation}
\begin{equation}\label{ya-a-155}
	\mathrm{\hat{J}}_{\sigma\rho}^{\phantom{1}}[\mathcal{L}_{\sigma\rho}]\stackrel{\mathrm{s.p.}}{=}
	\mathrm{\hat{J}}_{\sigma\rho}^{\phantom{1}}[\mathcal{L}_{\rho\sigma}]\stackrel{\mathrm{s.p.}}{=}\frac{1}{4}
	\mathrm{\hat{J}}_{\sigma\sigma}^{\phantom{1}}[\mathcal{L}_{\rho\rho}]\stackrel{\mathrm{s.p.}}{=}
	-c_2^2\hat{\mathrm{I}}_4^{\phantom{1}}W_{-1}^{\phantom{1}}.
\end{equation}
Therefore, by summing up the results obtained above, we arrive at the answers for each individual diagram
\begin{align*}
	{\centering\adjincludegraphics[width = 1 cm, valign=c]{fig/ya8.eps}}\Big|_{G_{\mathrm{loc}}G_{\mathrm{loc}}\mathcal{L}}&\stackrel{\mathrm{s.p.}}{=}
	c_2^2W_{-1}^{\phantom{1}}\Big(2\hat{\mathrm{I}}_3^{\phantom{1}}+\hat{\mathrm{I}}_4^{\phantom{1}}\Big)\times\frac{1}{2},\\
	-{\centering\adjincludegraphics[width = 1 cm, valign=c]{fig/ya7.eps}}\Big|_{G_{\mathrm{loc}}G_{\mathrm{loc}}\mathcal{L}}&\stackrel{\mathrm{s.p.}}{=}
	c_2^2W_{-1}^{\phantom{1}}\Big(2\hat{\mathrm{I}}_3^{\phantom{1}}+\hat{\mathrm{I}}_4^{\phantom{1}}\Big)\times4,\\
	{\centering\adjincludegraphics[width = 1 cm, valign=c]{fig/ya9.eps}}\Big|_{G_{\mathrm{loc}}G_{\mathrm{loc}}\mathcal{L}}&\stackrel{\mathrm{s.p.}}{=}
	c_2^2W_{-1}^{\phantom{1}}\Big(2\hat{\mathrm{I}}_3^{\phantom{1}}+\hat{\mathrm{I}}_4^{\phantom{1}}\Big)\times2,\\
	-2{\centering\adjincludegraphics[width = 1 cm, valign=c]{fig/ya10.eps}}\Big|_{G_{\mathrm{loc}}G_{\mathrm{loc}}\mathcal{L}}&\stackrel{\mathrm{s.p.}}{=}
	c_2^2W_{-1}^{\phantom{1}}\Big(2\hat{\mathrm{I}}_3^{\phantom{1}}+\hat{\mathrm{I}}_4^{\phantom{1}}\Big)\times(-1),\\
	{\centering\adjincludegraphics[width = 1 cm, valign=c]{fig/ya11.eps}}\Big|_{G_{\mathrm{loc}}G_{\mathrm{loc}}\mathcal{L}}&\stackrel{\mathrm{s.p.}}{=}
	c_2^2W_{-1}^{\phantom{1}}\Big(2\hat{\mathrm{I}}_3^{\phantom{1}}+\hat{\mathrm{I}}_4^{\phantom{1}}\Big)\times(-1),
\end{align*}
and also for their specific sum in the form
\begin{equation}\label{ya-a-140}
\mathbb{H}_0^{\mathrm{sc}}\big(\Gamma_3^2\big)\Big|_{G_{\mathrm{loc}}G_{\mathrm{loc}}\mathcal{L}}\stackrel{\mathrm{s.p.}}{=}\frac{9c_2^2W_{-1}}{2}\Big(2\hat{\mathrm{I}}_3+\hat{\mathrm{I}}_4\Big)
\stackrel{\mathrm{s.p.}}{=}\frac{9c_2^2W_{-1}}{16(4\pi^2)^2}
\Big(L^2+L(1+2\rho_1+24\rho_3\rho_5-8\rho_4)\Big).
\end{equation}

\subsubsection{Part $G_{\mathrm{loc}}G_{\mathrm{loc}}G_{\mathrm{loc}}$ for $\mathbb{H}_0^{\mathrm{sc}}(\Gamma_3^2)$}
\label{ym:sec:two-l-4}

The last contribution of the local type, appearing in diagrams from $\mathbb{H}_0^{\mathrm{sc}}\big(\Gamma_3^2\big)$, can actually be reduced to the following single integral
\begin{equation}\label{ya-a-141}
\mathrm{J}_{\ominus}[B]=
\int_{\mathrm{B}_{1/\sigma}}\mathrm{d}^4x
\int_{\mathbb{R}^4}\mathrm{d}^4y\,\Big(
f^{ceg}
\Big(D_{\sigma}^{ca}(x)D_\sigma^{db}(y)G_{\mathrm{loc}}^{ab}(x,y)\Big)
G_{\mathrm{loc}}^{eh}(x,y)
G_{\mathrm{loc}}^{gf}(x,y)
f^{fhd}-\hat{\kappa}\Big)\Big|_{x\to x+y},
\end{equation}
where the constant $\hat{\kappa}$ subtracts the part from the first term that does not depend on the background field. To show this, it is necessary to replace the Green's function $G_{1\mu\nu}^\Lambda$ in each of the five diagrams in \eqref{ya-a-19} with the local component $\delta_{\mu\nu}G_{\mathrm{loc}}$, integrate by parts so that both derivatives act on the same local function, then shift the variable $x\to x+y$ so that each term is factorized into a singular part and some functional depending on the background field, and finally replace the integration domain $\mathbb{R}^{4\times2}$ with $\mathrm{B}_{1/\sigma}\times\mathbb{R}^4$. This set of manipulations transforms each diagram as follows
\begin{equation}\label{ya-a-142}
{\centering\adjincludegraphics[width = 1 cm, valign=c]{fig/ya8.eps}}\to-\mathrm{J}_{\ominus}/2
,\,\,\,
{\centering\adjincludegraphics[width = 1 cm, valign=c]{fig/ya7.eps}}\to4\mathrm{J}_{\ominus}
,\,\,\,
{\centering\adjincludegraphics[width = 1 cm, valign=c]{fig/ya9.eps}}\to-2\mathrm{J}_{\ominus}
,\,\,\,
{\centering\adjincludegraphics[width = 1 cm, valign=c]{fig/ya10.eps}}\to-\mathrm{J}_{\ominus}/2
,\,\,\,
{\centering\adjincludegraphics[width = 1 cm, valign=c]{fig/ya11.eps}}\to\mathrm{J}_{\ominus}.
\end{equation}
Therefore, substituting the transformed parts into \eqref{ya-a-19}, we arrive at the relation
\begin{equation}\label{ya-a-143}
\mathbb{H}_0^{\mathrm{sc}}\big(\Gamma_3^2\big)\Big|_{G_{\mathrm{loc}}G_{\mathrm{loc}}G_{\mathrm{loc}}}\stackrel{\mathrm{s.p.}}{=}-\frac{9\mathrm{J}_{\ominus}}{2}.
\end{equation}

\subsubsection{Parts for $\mathbb{H}_0^{\mathrm{sc}}(\Omega_3^2)$}
\label{ym:sec:two-l-5}
In this case, we can use calculations for the diagram group $\mathbb{H}_0^{\mathrm{sc}}\big(\Gamma_3^2\big)$, adapting some calculations as necessary. First, a local contribution of the type $G_{\mathrm{loc}}\mathcal{N}\mathcal{N}$ will be absent due to the presence of only one the Green's function for the vector operator. Let us comment on the remaining contributions in more detail.\\

\noindent1) The contribution of $G_{\mathrm{loc}}G_{\mathrm{loc}}\mathcal{N}$ can be found by cutting the midline in diagram \eqref{ya-a-20} and further moving to the prepared auxiliary integrals $\mathrm{J}_{\sigma\rho}[\,\cdot\,]$ and $\mathrm{\hat{J}}_{\sigma\rho}[\,\cdot\,]$. To do this, we can use the already obtained combination for the integral kernel \eqref{ya-a-153} from Section \ref{ym:sec:two-nl-3}. Only instead of the test function $PS_1$, it is necessary to substitute $\mathcal{N}$, then the desired singular combination is contained in the difference
\begin{equation}\label{ya-a-144}
\mathrm{J}_{\sigma\rho}[\mathcal{N}_{\sigma\rho}]-\frac{1}{2}\mathrm{\hat{J}}_{\sigma\rho}[\mathcal{N}_{\sigma\rho}].
\end{equation}
Next, using the answers from \eqref{ya-a-151} and \eqref{ya-a-152} and auxiliary asymptotic decompositions from Section \ref{ym:sec:two-vs}, we arrive at the answer
\begin{align}\label{ya-a-145}
	\mathbb{H}_0^{\mathrm{sc}}\big(\Omega_3^2\big)\Big|_{G_{\mathrm{loc}}G_{\mathrm{loc}}\mathcal{N}}\stackrel{\mathrm{s.p.}}{=}&
	-\frac{c_2^2}{16(4\pi^2)^2}\Big(L^2+L(2\rho_1+2\rho_2)\Big)\int_{\mathbb{R}^4}\mathrm{d}^4y\,f^{abc}
	B_\sigma^a(y)B_\rho^b(y)
	F_{\sigma\rho}^{c}(y)\\\nonumber&
	-\frac{2c_2^2}{16(4\pi^2)^2}\Big(L^2+L(1/4+2\rho_1+\rho_2)\Big)
	\int_{\mathbb{R}^4}\mathrm{d}^4y\,
	\Big(\partial_{y^\sigma}B_\rho^c(y)\Big)
	F_{\sigma\rho}^{c}(y).
\end{align}
\noindent2) Contribution $G_{\mathrm{loc}}G_{\mathrm{loc}}\mathcal{L}$ is searched in a similar way. It is enough to use representation \eqref{ya-a-144}, replace the function $\mathcal{N}_{\mu\nu}$ with $\mathcal{L}_{\mu\nu}$, and then apply the already known relations \eqref{ya-a-154} and \eqref{ya-a-155}. As a result, we get
\begin{align}\label{ya-a-146}
\mathbb{H}_0^{\mathrm{sc}}\big(\Omega_3^2\big)\Big|_{G_{\mathrm{loc}}G_{\mathrm{loc}}\mathcal{L}}&\stackrel{\mathrm{s.p.}}{=}
\mathrm{J}_{\sigma\rho}[\mathcal{L}_{\sigma\rho}]-\frac{1}{2}\mathrm{\hat{J}}_{\sigma\rho}[\mathcal{L}_{\sigma\rho}]\stackrel{\mathrm{s.p.}}{=}-
\frac{c_2^2W_{-1}^{\phantom{1}}}{2}\Big(2\hat{\mathrm{I}}_3-\hat{\mathrm{I}}_4\Big)\\\nonumber
&\stackrel{\mathrm{s.p.}}{=}\frac{c_2^2W_{-1}}{16(4\pi^2)^2}
\Big(L^2+L(1+2\rho_1-24\rho_3\rho_5+8\rho_4)\Big).
\end{align}
\noindent3) Contribution $G_{\mathrm{loc}}G_{\mathrm{loc}}G_{\mathrm{loc}}$ can be found using the procedure described in Section \ref{ym:sec:two-l-4}. It is clear that the diagram of \eqref{ya-a-20} is equal to $1/4$ of the middle diagram of \eqref{ya-a-142}. Therefore, we have
\begin{equation}\label{ya-a-147}
\mathbb{H}_0^{\mathrm{sc}}\big(\Omega_3^2\big)\Big|_{G_{\mathrm{loc}}G_{\mathrm{loc}}G_{\mathrm{loc}}}\stackrel{\mathrm{s.p.}}{=}\frac{\mathrm{J}_{\ominus}}{2}.
\end{equation}

\subsubsection{Parts for $\mathbb{H}_0^{\mathrm{sc}}(\Gamma_4)$}
\label{ym:sec:two-l-6}
As before, let us split the calculation into several parts. At the beginning of Section \ref{ym:sec:two-l}, contributions that may contain a non-zero singular part were highlighted. The remaining parts are not considered due to obvious reasons: either the singular part is missing in principle, or it is reduced due to the convolution of a symmetric object with an antisymmetric one.\\

\noindent1) Consider the contribution of $\mathcal{N}\mathcal{N}$. In this case, the function $\mathcal{N}_{\mu\nu}^{ab}(x,x)$ can be replaced by the main part of the expansion of $2F_{\mu\nu}^{ab}(x)\theta(0)$, since the remaining parts either vanish or are of the order $L/\Lambda^2$. In this case, after summing up the group indexes, we arrive at the following values
\begin{equation*}
{\centering\adjincludegraphics[width = 1 cm, valign=c]{fig/ya13.eps}}
\to2c_2^2W_{-1}^{\phantom{1}}\theta^2(0)
,\,\,\,
-
{\centering\adjincludegraphics[width = 1 cm, valign=c]{fig/ya14.eps}}
\to0
,\,\,\,
{\centering\adjincludegraphics[width = 2 cm, valign=c]{fig/ya15.eps}}
\to4c_2^2W_{-1}^{\phantom{1}}\theta^2(0)
.
\end{equation*}
Therefore, in total we get
\begin{equation}\label{ya-a-149}
\mathbb{H}_0^{\mathrm{sc}}\big(\Gamma_4\big)\Big|_{\mathcal{N}\mathcal{N}}\stackrel{\mathrm{s.p.}}{=}\frac{3c_2^2W_{-1}^{\phantom{1}}}{2(4\pi^2)^2}\Big(L^2+2L\rho_1\Big).
\end{equation}
\noindent2) Consider the case $G_{\mathrm{loc}}\mathcal{L}$. Note that due to the validity of the relations from \eqref{ya-a-156}, the local function $G_{\mathrm{loc}}^{ab}(x,x)$ can be replaced by the main part of the expansion $\delta^{ab}R_0^\Lambda(0)$. Substituting the function $\mathcal{L}$ from \eqref{ya-a-77}, we obtain the relations
\begin{equation*}
	{\centering\adjincludegraphics[width = 1 cm, valign=c]{fig/ya13.eps}}
	\to-8c_2^2W_{-1}^{\phantom{1}}R_0^\Lambda(0)\tau(0)
	,\,\,\,
	-
	{\centering\adjincludegraphics[width = 1 cm, valign=c]{fig/ya14.eps}}
	\to32c_2^2W_{-1}^{\phantom{1}}R_0^\Lambda(0)\tau(0)
	,\,\,\,
	{\centering\adjincludegraphics[width = 2 cm, valign=c]{fig/ya15.eps}}
	\to0
	,
\end{equation*}
from which, after summing up and using the decompositions from Section \ref{ym:sec:two-vs}, the answer follows
\begin{equation}\label{ya-a-148}
	\mathbb{H}_0^{\mathrm{sc}}\big(\Gamma_4\big)\Big|_{G_{\mathrm{loc}}\mathcal{L}}\stackrel{\mathrm{s.p.}}{=}\frac{9c_2^2W_{-1}^{\phantom{1}}}{\pi^2}LR_0^1(0)\rho_5^{\phantom{1}}.
\end{equation}
\noindent3) The latter case corresponds to the combination $G_{\mathrm{loc}}G_{\mathrm{loc}}$. Note that due to the same relations from \eqref{ya-a-156}, one of the functions can be replaced by the main part of the decomposition, and the result doubled. Let us introduce the auxiliary functionality
\begin{equation}\label{ya-a-150}
\mathrm{J}_{\odot}[B]=\int_{\mathbb{R}^4}\mathrm{d}^4x\,
\Big(G_{\mathrm{loc}}^{aa}(x,x)-\dot{\kappa}\Big),
\end{equation}
where the constant $\dot{\kappa}$ subtracts the part of the density that does not depend on the background field. Then, by direct calculation, we can verify the validity of the transitions
\begin{equation*}
	{\centering\adjincludegraphics[width = 1 cm, valign=c]{fig/ya13.eps}}
	\to-8c_2\mathrm{J}_{\odot}R_0^\Lambda(0)
	,\,\,\,
	-
	{\centering\adjincludegraphics[width = 1 cm, valign=c]{fig/ya14.eps}}
	\to32c_2\mathrm{J}_{\odot}R_0^\Lambda(0)
	,\,\,\,
	{\centering\adjincludegraphics[width = 2 cm, valign=c]{fig/ya15.eps}}
	\to0.
\end{equation*}
Summing up, we get the answer in the form
\begin{equation}\label{ya-a-157}
\mathbb{H}_0^{\mathrm{sc}}\big(\Gamma_4\big)\Big|_{G_{\mathrm{loc}}G_{\mathrm{loc}}}\stackrel{\mathrm{s.p.}}{=}24c_2\mathrm{J}_{\odot}R_0^\Lambda(0).
\end{equation}

\subsection{Counter-diagrams}
\label{ym:sec:two-co}
In this section, we study decompositions into local and nonlocal parts for emerging counter-diagrams and some additional auxiliary parts of the Green's functions.\\

\noindent\textbf{"Massive" counter-vertex.} When researching the first correction, it became necessary to introduce a massive\footnote{A more detailed analysis of the extension of the classical action is given in Section \ref{ym:sec:mas}.} term in the classical action of the Yang--Mills theory due to the appearance of a power singularity $\Lambda^2$ proportional to the second degree of the background field. This fact led to the appearance of the $S_2$ counter-vertex with two external lines. Since the introduced vertex is proportional to the second power of the coupling constant, a counter-diagram appears in the two-loop approximation equal to the trace of the Green's function for the vector operator. Let us break it down into parts using the decompositions from \eqref{ya-a-74} and \eqref{ya-a-128}. After substitution, we get
\begin{equation}\label{ya-a-158}
\Lambda^2\int_{\mathbb{R}^4}\mathrm{d}^4x\,\Big(G_{1\mu\mu}^{\Lambda aa}(x,x)-4G_{\mathrm{loc}}^{aa}(x,x)\Big)\stackrel{\mathrm{s.p.}}{=}
\Lambda^2\int_{\mathbb{R}^4}\mathrm{d}^4x\,\Big(PS_{1\mu\mu}^{\,\, aa}(x,x)+4PS_{0}^{aa}(x,x)\Big)
+\frac{3c_2L\rho_5}{2\pi^2}W_{-1},
\end{equation}
where a relation from Section \ref{ym:sec:two-vs} was used to calculate $\Lambda^2\tau(0)$.\\

\noindent\textbf{"Gauge" counter-vertex.} When studying the non-local parts for two-loop diagrams, in Section \ref{ym:sec:two-nl-1}, a term with density $D_\rho^{bc}(x)D_\sigma^{ca}(x)PS_{1\sigma\rho}^{\,\,ab}(x,y)|_{y=x}$ appeared, multiplied by $L=\ln(\Lambda/\sigma)$. This fact means that it is necessary to introduce a renormalization constant for term \eqref{ya-a-11}, which fixes the gauge condition. In particular, this leads to the appearance of an additional counter-diagram in the two-loop order. Let us represent decomposition for it
\begin{align}\label{ya-a-159}
	L\int_{\mathbb{R}^4}\mathrm{d}^4x\,\Big(D_\rho^{bc}(x)D_\sigma^{ca}(x)&G_{1\sigma\rho}^{\Lambda ab}(x,y)+M_0^{ab}(x)G_{\mathrm{loc}}^{ba}(x,y)\Big)\Big|_{y=x}\stackrel{\mathrm{s.p.}}{=}-\frac{c_2L\rho_2}{8\pi^2}W_{-1}\\
	&+
L\int_{\mathbb{R}^4}\mathrm{d}^4x\,\Big(D_\rho^{bc}(x)D_\sigma^{ca}(x)PS_{1\sigma\rho}^{\,\,ab}(x,y)+M_0^{ab}(x)PS_{0}^{ba}(x,y)\Big)\Big|_{y=x},
\end{align}
where again, taking into account the definitions from Section \ref{ym:sec:two-vs}, the auxiliary equality was used
\begin{equation}\label{ya-a-160}
\theta(0)-A_0(x)\tau(x)\big|_{x=0}=-\rho_2/(8\pi^2),
\end{equation}
which is true for all $\Lambda/\sigma>2$. Additionally, we introduce a definition for the local part
\begin{equation}\label{ya-a-187}
	\mathrm{J}_{\otimes}[B]=\int_{\mathbb{R}^4}\mathrm{d}^4x\,
	\Big(M_0^{ab}(x)G_{\mathrm{loc}}^{ba}(x,y)\Big|_{y=x}-\tilde{\kappa}\Big),
\end{equation}
where $\tilde{\kappa}$ subtracts the density independent of the background field.\\

\noindent\textbf{Decomposition for $LM_1PS_{1}$.} It can be noted that after summing all the nonlocal terms from Section \eqref{ym:sec:two-nl}, the operator $M_1$ is formed, acting on the part of $PS_1$ and multiplied by the logarithmic singularity $L$. Let us calculate the trace of this value. We use the definition of \eqref{ya-a-38} for $e_\mu^a=B_\mu^a$ and the definition for $PS_1$ from \eqref{ya-a-74}. Further, taking advantage of the antisymmetry of the field stress tensor, we note that the contribution to the integral
\begin{equation*}
L\int_{\mathbb{R}^4}\mathrm{d}^4x\,\Big(M_{1\mu\nu}^{\,\,ab}(x)PS_{1\nu\mu}^{\,\,ba}(x,y)\big|_{y=x}-\kappa_{ps_1}\Big)
\end{equation*}
is nonzero and depends on the background field, where $\kappa_{ps_1}$ subtracts a singular density, independent of the background field, and can be represented as three terms
\begin{align*}
4L&\int_{\mathbb{R}^4}\mathrm{d}^4x\int_{\mathbb{R}^4}\mathrm{d}^4y\int_{\mathbb{R}^4}\mathrm{d}^4z\,
\Big(A_0(x)R_0^\Lambda(x-y)\Big)F_{\mu\nu}^{ab}(y)R_0^\Lambda(y-z)F_{\nu\mu}^{ba}(z)R_0^\Lambda(z-x)\\
-4L&\int_{\mathbb{R}^4}\mathrm{d}^4x\int_{\mathbb{R}^4}\mathrm{d}^4y\,
F_{\mu\nu}^{ab}(x)R_0^\Lambda(x-y)F_{\nu\mu}^{ba}(y)R_0^\Lambda(y-x)
+4c_2LW_{-1}\Big(\theta(0)-A_0(x)\tau(x)\big|_{x=0}\Big).
\end{align*}
The remaining parts lead to finite correction terms due to the properties of the Green's function, see \eqref{ya-a-35}. Using formula \eqref{ya-a-160}, we see that the last part is proportional to $\rho_2$. Further, by adding, subtracting, and shifting a variable, it can be shown that the first two terms in the main order are also proportional to $\rho_2$, but with the opposite sign. Moreover, the correction parts are small with respect to the logarithm of $L$. Therefore, the final relation has the form
\begin{equation}\label{ya-a-161}
	L\int_{\mathbb{R}^4}\mathrm{d}^4x\,\Big(M_{1\mu\nu}^{\,\,ab}(x)PS_{1\nu\mu}^{\,\,ba}(x,y)\big|_{y=x}-\kappa_{ps_1}\Big)\stackrel{\mathrm{s.p.}}{=}0.
\end{equation}

\noindent\textbf{Decomposition for $LM_0PS_{0}$.} This statement is similar to the previous one and can be mathematically formulated as follows
\begin{equation}\label{ya-a-162}
	L\int_{\mathbb{R}^4}\mathrm{d}^4x\,\Big(M_{0}^{ab}(x)PS_{0}^{ba}(x,y)\big|_{y=x}-\kappa_{ps_0}\Big)\stackrel{\mathrm{s.p.}}{=}0,
\end{equation}
where $\kappa_{ps_0}$ subtracts a singular density independent of the background field. The main method of its proof is based on explicit application of the operator and further asymptotic expansion.


\subsection{Decomposition of local contributions}
\label{ym:sec:two-llo}
As the calculations progressed, three auxiliary functions appeared: $\mathrm{J}_{\otimes}$, $\mathrm{J}_{\odot}$, and $\mathrm{J}_{\ominus}$, which are constructed using the local component $G_{\mathrm{loc}}$ from expansion \eqref{ya-a-74-1} near the diagonal for the Green's function. Let us study them in more detail. To do this, we use the set of parts of the classical action $W_{-1}^i$, where $i\in\{\pm,1,\ldots,6\}$, from Theorem \ref{ya-t1} from Section \ref{ym:sec:pr:gen3-7}.\\

\noindent\textbf{Decomposition for $\mathrm{J}_{\otimes}$.} The definition for $\mathrm{J}_{\otimes}$ is given in formula \eqref{ya-a-187}. The basis here is the combination of $M_0^{ab}(x)G_{\mathrm{loc}}^{ba}(x,y)$ for $x=y$. Next, let us pay attention to two facts. First, the component $G_{\mathrm{loc}}$ in expansion \eqref{ya-a-128} was selected in such a way that $M_0^{ab}(x)PS_{0}^{ba}(x,y)$ for $x=y$ had behavior no worse than $L^{-1}$. Secondly, applying the operator $M_0^{ab}(x)$ to the unregularized scalar Green's function zeros (reduces) all nonzero orders in the background field. Therefore, the final part for $M_0^{ab}(x)G_{\mathrm{loc}}^{ba}(x,y)$ for $x=y$ should be represented as a linear combination of $S_2[B]$ and $W_{-1}^i[B]$ with coefficients that are differences depending on the deformed function $A_0^{\phantom{1}}(u)R_0^1(u-z)$. Using \eqref{ya-a-127} to calculate the exact coefficient for $S_2[B]$, the answer can be represented as follows
\begin{equation}\label{ya-a-188}
L\mathrm{J}_{\otimes}[B]\stackrel{\mathrm{s.p.}}{=}
\frac{c_2L\Lambda^2}{4\pi^2}S_2[B](\rho_0+\rho_6-2\rho_3)
+L\sum_{i=1}^6c_{\otimes}^iW_{-1}^i[B],
\end{equation}
where the notation from Section \eqref{ym:sec:two-vs} was used. Note that the coefficients $c_{\otimes}^i$ are finite and real, they depend on the function $\mathbf{f}(\cdot)$ from \eqref{ya-a-37}, and show a deviation from the covariant weak deformation from Section \ref{ym:sec:two1}.\\

\noindent\textbf{Decomposition for $\mathrm{J}_{\odot}$.} The definition for $\mathrm{J}_{\odot}$ is given in formula \eqref{ya-a-150}. It is important to pay attention to the fact that in the unregularized case, the logarithm of $\ln(|x-y|)$ near the degree of $|x-y|^2$ is actually absent. To understand this, it is enough to look at the structure of the Seeley--DeWitt coefficients near the diagonal, see for reference the decomposition of Green's functions near the diagonal \cite{29,30-1-1 } and the heat kernel method \cite{28,vas1,vas2}. Thus, the main contribution will be proportional to the linear combination of $S_2[B]$ and $W_{-1}^i[B]$. In this case, the coefficients will be represented by differences depending on the deformed function $A_0(u)R_0^1(u-z)$. Again, explicitly calculating the contribution proportional to $S[B]$, we get the answer in the form
\begin{equation}\label{ya-a-190}
\Lambda^2\mathrm{J}_{\odot}[B]\stackrel{\mathrm{s.p.}}{=}
	\frac{c_2\rho_2\Lambda^2}{8\pi^2}S_2[B]
	+L\sum_{i=1}^6c_{\odot}^iW_{-1}^i[B].
\end{equation}
As in the previous situation \eqref{ya-a-188}, the coefficients $c_{\odot}^i$ appeared due to "deviation" from the covariant case.\\

\noindent\textbf{Decomposition for $\mathrm{J}_{\ominus}$.} The definition for $\mathrm{J}_{\ominus}$ is given in formula \eqref{ya-a-141}. In this case, it is necessary to pay attention to the fact that the functional $\mathrm{J}_{\ominus}$ can contain either a power-law singularity proportional to $S_2[B]$, or a linear combination of the classical action $W_{-1}[B]$ and its parts $W_{-1}^i[B]$ multiplied by the first power of the logarithm of $L$. At the same time, the appearance of the second degree of the logarithm $L^2$ is impossible, since such a singularity does not depend on the type of regularization, which means that, for example, regularization for the covariant "weak" case can be used to determine it, see Section \ref{ym:sec:two1-2}. Moreover, it follows from the same section that the part proportional to the first power of the logarithm is represented as
\begin{equation*}
\frac{c_2^2L}{(4\pi)^4}W_{-1}[B]+L\sum_{i=1}^6c_{\ominus}^iW_{-1}^i[B].
\end{equation*}
It is equal to the sum of a covariant part and a set of additives containing coefficients $c_{\ominus}^i$, which should be considered as a deviation from the covariant case in violation of the gauge invariance, see \eqref{ya-p-1}. In the case of the term with $S_2[B]$, it is enough to sort through all possible variants using the decomposition from \eqref{ya-a-74-1}. Thus, taking into account the notation from Section \ref{ym:sec:two-si}, we get
\begin{equation}\label{ya-a-191}
	\mathrm{J}_{\ominus}[B]\stackrel{\mathrm{s.p.}}{=}
	\frac{c_2^2\rho_7\Lambda^2}{(4\pi^2)^2}S_2[B]
	+\frac{c_2^2L}{(4\pi)^4}W_{-1}[B]+L\sum_{i=1}^6c_{\ominus}^iW_{-1}^i[B].
\end{equation}

\subsection{Singular part: sum of contributions}
\label{ym:sec:two-si}
As the main result obtained in Section \ref{ym:sec:two}, we present the singular part for the first regularized non-renormalized quantum correction $W_1^{\Lambda}[B]$, the formula for which is given in \eqref{ya-a-163}. To do this, we use the formulas obtained for the singular parts for each individual component:
\begin{align*}
\mathbb{H}_0^{\mathrm{sc}}\big(\Gamma_3^2\big):\,\,\mbox{nonlocal part}&\longrightarrow\,
\mbox{see}\,\,\eqref{ya-a-121},\,\,\eqref{ya-a-129},\,\,\mbox{and}\,\,\eqref{ya-a-122};\\
\mbox{local part}&\longrightarrow\,
\mbox{see}\,\,\eqref{ya-a-126},\,\,\eqref{ya-a-139},\,\,\eqref{ya-a-140},\,\,\mbox{and}\,\,\eqref{ya-a-143};\\
\mathbb{H}_0^{\mathrm{sc}}\big(\Gamma_4^{\phantom{1}}\big):\,\,\mbox{nonlocal part}&\longrightarrow\,
\mbox{see}\,\,\eqref{ya-a-132},\,\,\eqref{ya-a-133},\,\,\mbox{and}\,\,\eqref{ya-a-134};\\
\mbox{local part}&\longrightarrow\,
\mbox{see}\,\,\eqref{ya-a-149},\,\,\eqref{ya-a-148},\,\,\mbox{and}\,\,\eqref{ya-a-157};\\
\mathbb{H}_0^{\mathrm{sc}}\big(\Omega_3^2\big):\,\,\mbox{nonlocal part}&\longrightarrow\,
\mbox{see}\,\,\eqref{ya-a-130},\,\,\mbox{and}\,\,\eqref{ya-a-131};\\
\mbox{local part}&\longrightarrow\,
\mbox{see}\,\,\eqref{ya-a-145},\,\,\eqref{ya-a-146},\,\,\mbox{and}\,\,\eqref{ya-a-147}.
\end{align*}
Additionally, we use the relations from \eqref{ya-a-161} and \eqref{ya-a-162}. Then, after summing up all the results obtained, taking into account the coefficients from \eqref{ya-a-163}, we get the answer in the form
\begin{align}\label{ya-a-166}
	-\frac{1}{2}\mathbb{H}_0^{\mathrm{sc}}\big(\Gamma_3^2\big)
+\frac{1}{4}\mathbb{H}_0^{\mathrm{sc}}\big(\Gamma_4^{\phantom{1}}\big)-\frac{1}{2}\mathbb{H}_0^{\mathrm{sc}}\big(\Omega_3^2\big)\stackrel{\mathrm{s.p.}}{=}&
-\frac{\Lambda^2c_2(2\rho_3-3\rho_0)}{8\pi^2}
\int_{\mathbb{R}^4}\mathrm{d}^4x\,\Big(PS_{1\rho\rho}^{\,\,aa}(x,x)+4PS_{0}^{aa}(x,x)\Big)\,\,\,\,\,\,\,\,\,\\\nonumber
&-\frac{5Lc_2}{48\pi^2}
\int_{\mathbb{R}^4}\mathrm{d}^4x\,
\Big(D_\rho^{bc}(x)D_\sigma^{ca}(x)PS_{1\sigma\rho}^{\,\,ab}(x,y)\Big)\Big|_{y=x}
\\\nonumber
&-\frac{c_2^2W_{-1}L}{32(4\pi^2)^2}\Big(13+16\rho_2\Big)
+\frac{c_2^2W_{-1}^{+}L}{32(4\pi^2)^2}\Big(2\rho_2-1/2\Big)\\\nonumber
&-\frac{c_2^2W_{-1}L}{(4\pi^2)^2}
\Big(3\rho_5(2\rho_3-3\rho_0)-2\rho_4\Big)+2\mathrm{J}_{\ominus}+\frac{3c_2\rho_0\Lambda^2\mathrm{J}_{\odot}}{2\pi^2}.
\end{align}
Auxiliary designations were used here for the parts of the classical action
\begin{equation}\label{ya-c-1}
W_{-1}^{+}=2
\int_{\mathbb{R}^4}\mathrm{d}^4y\,
\Big(\partial_{y^\sigma}B_\rho^c(y)\Big)
F_{\sigma\rho}^{c}(y)
,\,\,\,
W_{-1}^{-}=
\int_{\mathbb{R}^4}\mathrm{d}^4y\,f^{abc}
B_\sigma^a(y)B_\rho^b(y)
F_{\sigma\rho}^{c}(y).
\end{equation}
Note that the latter values obey the equality $W_{-1}^-+W_{-1}^+=W_{-1}^{\phantom{1}}$.

\subsection{Auxiliary relations}
\label{ym:sec:two-vs}

\noindent\textbf{Integrals for nonlocal part.} All the integrals listed below, when asymptotically decomposed with respect to the variable $\Lambda$, are proportional in the main order to $L=\ln(\Lambda/\sigma)$. This can be confirmed by direct differentiation, that is, using the operator $\Lambda\partial_\Lambda$, and further moving to the limit of $\Lambda\to+\infty$. For all integrals, such limits exist and are finite. Thus, the answer is to take the product of $L$ by the limit value. We have

\begin{fleqn}
	\begin{equation}\label{ya-a-84}
		\int_{\mathrm{B}_{1/\sigma}}\mathrm{d}^4x\,
		\Big(-\partial_{x^\sigma}\partial_{x^\rho}R_0^\Lambda(x)\Big)\partial_{x^\mu}\partial_{x^\nu}\tau(x)\stackrel{\mathrm{s.p.}}{=}-\frac{\delta_{\rho\nu}\delta_{\sigma\mu}+\delta_{\sigma\nu}\delta_{\mu\rho}+\delta_{\sigma\rho}\delta_{\nu\mu}}{24}\frac{L}{8\pi^2},
	\end{equation}
\end{fleqn}
\begin{fleqn}
	\begin{equation}\label{ya-a-83}
\int_{\mathrm{B}_{1/\sigma}}\mathrm{d}^4x\,
\Big(-\partial_{x^\sigma}\partial_{x^\rho}R_0^\Lambda(x)\Big)x^\nu\partial_{x^\mu}\theta(x)\stackrel{\mathrm{s.p.}}{=}\frac{2\delta_{\rho\nu}\delta_{\sigma\mu}+2\delta_{\sigma\nu}\delta_{\mu\rho}-\delta_{\sigma\rho}\delta_{\nu\mu}}{12}\frac{L}{8\pi^2},
	\end{equation}
\end{fleqn}
\begin{fleqn}
	\begin{equation}\label{ya-a-86-1}
		\int_{\mathrm{B}_{1/\sigma}}\mathrm{d}^4x\,
		\Big(-\partial_{x^\sigma}\partial_{x^\rho}R_0^\Lambda(x)\Big)x^\nu x^\mu R_0^\Lambda(x)\stackrel{\mathrm{s.p.}}{=}-
		\frac{2\delta_{\rho\nu}\delta_{\sigma\mu}2\delta_{\sigma\nu}\delta_{\mu\rho}-\delta_{\sigma\rho}\delta_{\nu\mu}}{6}\frac{L}{8\pi^2},
	\end{equation}
\end{fleqn}
\begin{fleqn}
	\begin{equation}\label{ya-a-851}
		\int_{\mathrm{B}_{1/\sigma}}\mathrm{d}^4x\,
		\Big(-\partial_{x^\sigma}\partial_{x^\rho}\partial_{x^\mu}\theta(x)\Big)x^\nu R_0^\Lambda(x)\stackrel{\mathrm{s.p.}}{=}-\frac{\delta_{\rho\nu}\delta_{\sigma\mu}+\delta_{\sigma\nu}\delta_{\mu\rho}+\delta_{\sigma\rho}\delta_{\nu\mu}}{12}\frac{L}{8\pi^2},
	\end{equation}
\end{fleqn}
\begin{fleqn}
	\begin{equation}\label{ya-a-85}
\int_{\mathrm{B}_{1/\sigma}}\mathrm{d}^4x\,
\Big(\partial_{x^\sigma}\partial_{x^\rho}\theta(x)\Big)\partial_{x^\mu}\partial_{x^\nu}\theta(x)\stackrel{\mathrm{s.p.}}{=}\frac{\delta_{\rho\nu}\delta_{\sigma\mu}+\delta_{\sigma\nu}\delta_{\mu\rho}+\delta_{\sigma\rho}\delta_{\nu\mu}}{24}\frac{L}{8\pi^2},
	\end{equation}
\end{fleqn}
\begin{fleqn}
	\begin{equation}\label{ya-a-82}
		\int_{\mathrm{B}_{1/\sigma}}\mathrm{d}^4x\,
		\Big(-\partial_{x^\sigma}\partial_{x^\rho}R_0^\Lambda(x)\Big)\theta(x)\stackrel{\mathrm{s.p.}}{=}
		\frac{\delta_{\sigma\rho}}{4}\frac{L}{8\pi^2},
	\end{equation}
\end{fleqn}
\begin{fleqn}
	\begin{equation}\label{ya-a-86}
\int_{\mathrm{B}_{1/\sigma}}\mathrm{d}^4x\,
\Big(-x^\sigma\partial_{x^\rho}R_0^\Lambda(x)\Big)R_0^\Lambda(x)\stackrel{\mathrm{s.p.}}{=}\frac{\delta_{\sigma\rho}}{2}\frac{L}{8\pi^2}.
	\end{equation}
\end{fleqn}
Note that in the calculation process, it is convenient to use spherical symmetry firstly in the form \eqref{ya-a-52}.\\

\noindent\textbf{Integrals for local part.} To formulate the answers, it is necessary to define several auxiliary functionals depending on the deformed free Green's function $R_0^1(x)$:

\begin{fleqn}
	\begin{equation}\label{ya-a-92}
\rho_1=\int_{\mathrm{B}_{1}}\mathrm{d}^4x\,R_0^1(x)R_0^1(x)\times8\pi^2,
	\end{equation}
\end{fleqn}
\begin{fleqn}
	\begin{equation}\label{ya-a-93}
\rho_2=
\int_{\mathrm{B}_{1}}\mathrm{d}^4x\,A_0^{\phantom{1}}(x)R_0^1(x)
\int_{\mathbb{R}^4}\mathrm{d}^4y\,\Big(R_0^1(x-y)R_0^1(y)-R_0^1(y)R_0^1(y)\Big)\times8\pi^2,
	\end{equation}
\end{fleqn}
\begin{fleqn}
	\begin{equation}\label{ya-a-94}
\rho_3=\int_{\mathrm{B}_{1}}\mathrm{d}^4x\,R_0^1(x)A_0^{\phantom{1}}(x)R_0^1(x)\times4\pi^2,
	\end{equation}
\end{fleqn}
\begin{fleqn}
\begin{equation}\label{ya-a-96}
\rho_4=\int_{\mathrm{B}_{1}}\mathrm{d}^4x\,R_0^1(x)|x|^2A_0^{\phantom{1}}(x)R_0^1(x)\times\frac{\pi^2}{2},
\end{equation}
\end{fleqn}
\begin{fleqn}
	\begin{equation}\label{ya-a-95}
\rho_5=\int_{\mathrm{B}_{1}}\mathrm{d}^4x\,\Big(R_0^1(x)-R_0^{\phantom{1}}(x)\Big),
	\end{equation}
\end{fleqn}
\begin{fleqn}
\begin{equation}\label{ya-a-189}
\rho_6=\int_{\mathrm{B}_{1}}\mathrm{d}^4x\,A_0^{\phantom{1}}(x)R_0^1(x)
\int_{\mathbb{R}^4}\mathrm{d}^4y\,R_0^1(x-y)A_0^{\phantom{1}}(y)R_0^1(y)\times4\pi^2,
\end{equation}
\end{fleqn}
as well as
\begin{fleqn}
\begin{align*}
\rho_7=16\pi^4\times\int_{\mathbb{R}^4}\mathrm{d}^4x\,\bigg(
\Big(&-2\partial_{x_\mu}\tilde{\theta}_\mu(x)-\tilde{\tau}(x)-R_0^1(x)\Big)\Big(R_0^1(x)\Big)^2\\&
+2R_0^1(x)\Big(A_0^{\phantom{1}}(x)R_0^1(x)\Big)
\Big(\theta(x)-A_0^{\phantom{1}}(x)\tau(x)\Big)\Big|_{\Lambda=1,\,\sigma\to+0}\\&+
R_0^1(x)\tilde{\theta}_\mu(x)\Big(\partial_{x_\mu}R_0^1(x)-A_0^{\phantom{1}}(x)\tilde{\theta}_\mu(x)\Big)
-\frac{1}{2}\tilde{\theta}_\mu(x)\tilde{\theta}_\mu(x)A_0^{\phantom{1}}(x)R_0^1(x)
\bigg)
,
\end{align*}
\end{fleqn}
where
\begin{align*}
\tilde{\theta}_\mu(x)&=\int_{\mathbb{R}^4}
\mathrm{d}^4z\,R_0^1(x-z)\partial_{z^\mu}R_0^1(z),\\
\tilde{\tau}(x)&=
\int_{\mathbb{R}^4}\mathrm{d}^4z\int_{\mathbb{R}^4}\mathrm{d}^4y\,R_0^{1}(x-z)A_0^{\phantom{1}}(z)R_0^{1}(z-y)A_0^{\phantom{1}}(y)R_0^{1}(y).
\end{align*}
Then we can write out the following asymptotic expansions with respect to the parameter $\Lambda$ for integrals arising in calculations:
\begin{fleqn}
\begin{equation}\label{ya-a-98}
\theta(0)=\frac{L+\rho_1}{8\pi^2},\,\,\,\mbox{see the definition in}\,\,\,\eqref{ya-a-76},
\end{equation}
\end{fleqn}
\begin{fleqn}
\begin{equation}\label{ya-a-99}
\tau(0)=\frac{3L\rho_5}{8\pi^2\Lambda^2}+\mathcal{O}\big(1/\Lambda^2\big),
\,\,\,\mbox{see the definition in}\,\,\,\eqref{ya-a-78},
\end{equation}
\end{fleqn}
\begin{fleqn}
	\begin{equation}\label{ya-a-87}
\hat{\mathrm{I}}_1=\int_{\mathrm{B}_{1/\sigma}}\mathrm{d}^4x\,
R_0^\Lambda(x)
\Big(A_0^{\phantom{1}}(x)\theta(x)\Big)
\theta(x)\stackrel{\mathrm{s.p.}}{=}\frac{L^2+L(1+2\rho_1+2\rho_2)}{8(4\pi^2)^2},
	\end{equation}
\end{fleqn}
\begin{fleqn}
	\begin{equation}\label{ya-a-88}
\hat{\mathrm{I}}_2=\int_{\mathrm{B}_{1/\sigma}}\mathrm{d}^4x\,
\Big(A_0^{\phantom{1}}(x)R_0^\Lambda(x)\Big)
\theta(x)\theta(x)\stackrel{\mathrm{s.p.}}{=}\frac{L^2+L(2\rho_1+2\rho_2)}{4(4\pi^2)^2},
	\end{equation}
\end{fleqn}
\begin{fleqn}
	\begin{equation}\label{ya-a-89}
\hat{\mathrm{I}}_3=\int_{\mathrm{B}_{1/\sigma}}\mathrm{d}^4x\,
R_0^\Lambda(x)
\Big(A_0^{\phantom{1}}(x)R_0^\Lambda(x)\Big)
\tau(x)\stackrel{\mathrm{s.p.}}{=}\frac{L(3\rho_3\rho_5-\rho_4)}{2(4\pi^2)^2},
	\end{equation}
\end{fleqn}
\begin{fleqn}
	\begin{equation}\label{ya-a-90}
\hat{\mathrm{I}}_4=\int_{\mathrm{B}_{1/\sigma}}\mathrm{d}^4x\,
R_0^\Lambda(x)R_0^\Lambda(x)A_0^{\phantom{1}}(x)
\tau(x)\stackrel{\mathrm{s.p.}}{=}\frac{L^2+L(1+2\rho_1)}{8(4\pi^2)^2},
	\end{equation}
\end{fleqn}
\begin{fleqn}
\begin{equation}\label{ya-a-136}
\hat{\mathrm{I}}_5=\int_{\mathrm{B}_{1/\sigma}}\mathrm{d}^4x\,
\theta(x)R_0^\Lambda(x)R_0^\Lambda(x)\stackrel{\mathrm{s.p.}}{=}\frac{L^2+L(1+2\rho_1)}{8(4\pi^2)^2},
	\end{equation}
\end{fleqn}
\begin{fleqn}
\begin{equation}\label{ya-a-137}
	\hat{\mathrm{I}}_6=\int_{\mathrm{B}_{1/\sigma}}\mathrm{d}^4x\,
	\theta(x)R_0^\Lambda(x)x^\mu\partial_{x^\mu}R_0^\Lambda(x)\times(-1/2)\stackrel{\mathrm{s.p.}}{=}
	\frac{L^2+L(1/2+2\rho_1)}{8(4\pi^2)^2},
\end{equation}
\end{fleqn}
\begin{fleqn}
	\begin{equation}\label{ya-a-137-1}
\hat{\mathrm{I}}_7=\int_{\mathrm{B}_{1/\sigma}}\mathrm{d}^4x\,
|x|^2\Big(R_0^\Lambda(x)\Big)^3\stackrel{\mathrm{s.p.}}{=}
		\frac{L}{2(4\pi^2)^2}.
	\end{equation}
\end{fleqn}
All integrals $\hat{\mathrm{I}}_i$ are analyzed by the same method -- differentiation by the regularizing parameter $\Lambda$. Only unlike the values at the beginning of the section, after applying the operator $\Lambda\partial_\Lambda$, we do not need to go to the limit, because it may not exist, but find an asymptotic expansion of $\Lambda$ while preserving a logarithmic part ($\sim L$) and a constant correction.

Let us demonstrate an example of calculating the asymptotics for $\hat{\mathrm{I}}_1$. To do this, we first write out an explicit expression 
\begin{equation}\label{ya-a-91}
\int_{\mathrm{B}_{1/\sigma}}\mathrm{d}^4x_1\int_{\mathrm{B}_{1/\sigma}}\mathrm{d}^4x_2\int_{\mathrm{B}_{1/\sigma}}\mathrm{d}^4x_3\,
R_0^\Lambda(x_1)
	\Big(A_0^{\phantom{1}}(x_1)R_0^\Lambda(x_1-x_2)\Big)R_0^\Lambda(x_2)R_0^\Lambda(x_1-x_3)R_0^\Lambda(x_3).
\end{equation}
Next, we scale the variables $x_i\to x_i/\Lambda$, apply the operator $\Lambda\partial_\Lambda$, and then rescale $x_i\to x_i\Lambda/\sigma$ them again. Since after the first transformation, all dependence on $\Lambda$ is in $\mathrm{B}_{\Lambda/\sigma}$, differentiation only affects the limit of integration over the radius. As a result, we get three contributions:
\begin{align*}
	t_1&=\int_{\mathbb{S}^3}\mathrm{d}^3\sigma(\hat{x}_1)\int_{\mathrm{B}_{1}}\mathrm{d}^4x_2\int_{\mathrm{B}_{1}}\mathrm{d}^4x_3\,
	R_0^{\Lambda/\sigma}(\hat{x}_1)
	\Big(A_0^{\phantom{1}}R_0^{\Lambda/\sigma}(\hat{x}_1-x_2)\Big)R_0^{\Lambda/\sigma}(x_2)R_0^{\Lambda/\sigma}(\hat{x}_1-x_3)R_0^{\Lambda/\sigma}(x_3),\\
	t_2&=\int_{\mathrm{B}_{1}}\mathrm{d}^4x_1\int_{\mathbb{S}^3}\mathrm{d}^3\sigma(\hat{x}_2)\int_{\mathrm{B}_{1}}\mathrm{d}^4x_3\,
	R_0^{\Lambda/\sigma}(x_1)
	\Big(A_0^{\phantom{1}}R_0^{\Lambda/\sigma}(x_1-\hat{x}_2)\Big)R_0^{\Lambda/\sigma}(\hat{x}_2)R_0^{\Lambda/\sigma}(x_1-x_3)R_0^{\Lambda/\sigma}(x_3),\\
	t_3&=\int_{\mathrm{B}_{1}}\mathrm{d}^4x_1\int_{\mathrm{B}_{1}}\mathrm{d}^4x_2\int_{\mathbb{S}^3}\mathrm{d}^3\sigma(\hat{x}_3)\,
	R_0^{\Lambda/\sigma}(x_1)
	\Big(A_0^{\phantom{1}}R_0^{\Lambda/\sigma}(x_1-x_2)\Big)R_0^{\Lambda/\sigma}(x_2)R_0^{\Lambda/\sigma}(x_1-\hat{x}_3)R_0^{\Lambda/\sigma}(\hat{x}_3).
\end{align*}
Let us consider the main transitions for each term separately.\\

\noindent1) Recall that $\hat{x}=x/|x|$, therefore $R_0^{\Lambda/\sigma}(\hat{x}_1)=1/(4\pi^2)$, since $R_0^{\Lambda/\sigma}(\hat{x}_1)=R_0^{\phantom{1}}(\hat{x}_1)$, if $\Lambda/\sigma>1$ is satisfied. Next, note that the support of the function $A_0^{\phantom{1}}R_0^{\Lambda/\sigma}(\cdot)$ is located in $\mathrm{B}_{\sigma/\Lambda}$, so the variable $x_2$ is located in a ball of radius $\sigma/\Lambda$ centered at $\hat{x}_1$. When searching for the main order, this fact allows us to replace $R_0^{\Lambda/\sigma}(x_2)$ with $1/(4\pi^2)$, and then make a substitution
\begin{equation*}
	\int_{\mathrm{B}_{1}}\mathrm{d}^4x_2\,A_0^{\phantom{1}}R_0^{\Lambda/\sigma}(\hat{x}_1-x_2)\to\frac{1}{2},
\end{equation*}
	where the multiplier $1/2$ appeared due to the partial overlap of the support of the function and the area of integration. The final answer is obtained after the limit transition and explicit calculation of the integral over the variable $x_3$
\begin{equation*}
	\lim_{\Lambda\to+\infty}t_1=\frac{1}{2(4\pi^2)^2}
	\int_{\mathbb{S}^3}\mathrm{d}^3\sigma(\hat{x}_1)\int_{\mathrm{B}_{1}}\mathrm{d}^4x_3\,R_0(\hat{x}_1-x_3)R_0(x_3)=\frac{1}{16(4\pi^2)^2},
\end{equation*}
	where the auxiliary relation was used
\begin{equation}\label{ya-a-97}
\int_{\mathrm{B}_{1}}\mathrm{d}^4z\,R_0(x-z)R_0(z)=\frac{1}{(4\pi^2)^2}
\begin{cases}
1-2\ln|x|, & |x|\leqslant1;\\
	\,\,\,\,\,\,\,\,\,\,\,\,1,& |x|>1.
\end{cases}
\end{equation}
\noindent2) The second part is analyzed in a similar way. First, we replace $R_0^{\Lambda/\sigma}(\hat{x}_2)\to1/(4\pi^2)$, then $R_0^{\Lambda/\sigma}(x_1)\to1/(4\pi^2)$, as well as the integral with density $A_0^{\phantom{1}}R_0^{\Lambda/\sigma}(x_1-\hat{x}_2)$ over "half" of the ball leads to the value $1/2$. The remaining integral completely reproduces the one that was obtained in the first case, therefore
\begin{equation*}
\lim_{\Lambda\to+\infty}t_2=
\lim_{\Lambda\to+\infty}t_1=\frac{1}{16(4\pi^2)^2}.
\end{equation*}
\noindent3) The calculation sequence for the third function is different. To begin with, an obvious replacement is made as follows $R_0^{\Lambda/\sigma}(\hat{x}_3)\to1/(4\pi^2)$. Then it should be noted that one of the functions can be replaced by an undeformed analog and the corresponding integral over the sphere can be explicitly calculated
\begin{equation*}
\int_{\mathbb{S}^3}\mathrm{d}^3\sigma(\hat{x}_3)\,R_0^{\Lambda/\sigma}(x_1-\hat{x}_3)\to
\int_{\mathbb{S}^3}\mathrm{d}^3\sigma(\hat{x}_3)\,R_0^{\phantom{1}}(x_1-\hat{x}_3)=\frac{2\pi^2}{4\pi^2}
\begin{cases}
	\,\,\,\,\,1, & |x_1|\leqslant1;\\
	|x_1|^{-2},& |x_1|>1,
\end{cases}\to\frac{1}{2},
\end{equation*}
where $x_1\in\mathrm{B}_1$ was used. After all the substitutions in the remaining integral
\begin{equation*}
\frac{1}{8\pi^2}\int_{\mathrm{B}_{1}}\mathrm{d}^4x_1\int_{\mathrm{B}_{1}}\mathrm{d}^4x_2\,
R_0^{\Lambda/\sigma}(x_1)
\Big(A_0^{\phantom{1}}R_0^{\Lambda/\sigma}(x_1-x_2)\Big)R_0^{\Lambda/\sigma}(x_2)
\end{equation*}
it is possible to make the transition
\begin{equation*}
\Big(A_0^{\phantom{1}}R_0^{\Lambda/\sigma}(x_1-x_2)\Big)R_0^{\Lambda/\sigma}(x_2)
\to\Big(A_0^{\phantom{1}}R_0^{\Lambda/\sigma}(x_2)\Big)R_0^{\Lambda/\sigma}(x_1-x_2),
\end{equation*}
and also add and subtract the density of $\big(R_0^{\Lambda/\sigma}(x_1)\big)^2$. Then we come to the expression
\begin{equation*}
\frac{1}{8\pi^2}\int_{\mathrm{B}_{\Lambda/\sigma}}\mathrm{d}^4x_1\int_{\mathrm{B}_{\Lambda/\sigma}}\mathrm{d}^4x_2\,
\Big(R_0^{1}(x_1)R_0^{1}(x_1-x_2)-R_0^{1}(x_1)R_0^{1}(x_1)\Big)A_0^{\phantom{1}}R_0^{1}(x_2)+
\frac{1}{8\pi^2}\int_{\mathrm{B}_{\Lambda/\sigma}}\mathrm{d}^4x_1\,
\Big(R_0^{1}(x_1)\Big)^2.
\end{equation*}
Finally, using the notation suggested above and passing to the limit in the first integral, we obtain 
\begin{equation*}
\lim_{\Lambda\to+\infty}t_3=
	\frac{2L+1+2\rho_1+2\rho_2}{8(4\pi^2)^2}+o(1).
\end{equation*}
Summing up all the terms and calculating the integral, we arrive at the stated answer. The integrals $\hat{\mathrm{I}}_2$ and $\hat{\mathrm{I}}_4$ are analyzed using exactly the same method. Before proceeding to the comments for $\hat{\mathrm{I}}_3$, we note that the result for $\theta(0)$ is obtained by direct calculation, while for $\tau(0)$ it is obtained by rejecting corrections of the order of $1/\Lambda^2$. It is convenient to make a representation of all deformed functions in the form of $R_0^\Lambda(\cdot)=R_0^\Lambda(\cdot)\pm R_0^{\phantom{1}}(\cdot)$. Then the main part of the expansion, containing only undeformed functions, will be reduced, and the next correction, proportional to $L/\Lambda^2$, will be contained only in terms consisting of two undeformed functions and the difference. There are only three such contributions. They lead to the stated answer after using relation \eqref{ya-a-97}. For reference, here is the value of the subtracted part
\begin{equation*}
\int_{\mathrm{B}_{1/\sigma}}\mathrm{d}^4z\int_{\mathrm{B}_{1/\sigma}}\mathrm{d}^4y\,
R_0^{\phantom{1}}(z)R_0^{\phantom{1}}(z-y)R_0^{\phantom{1}}(y)=\frac{1}{32\pi^2\sigma^2}.
\end{equation*}
In the case of the integral $\hat{\mathrm{I}}_3$, it is easier to add and subtract density at zero, that is, $\tau(x)=\tau(x)\pm\tau(0)$. Then the summand with $\tau(0)$ will result in a part with $\rho_5$, and the term with the difference after using the auxiliary relation
\begin{equation*}
\int_{\mathrm{B}_{2}}\mathrm{d}^4z\,\Big(R_0^{1}(x-z)-R_0^{1}(z)\Big)=-\frac{|x|^2}{8},
\end{equation*}
which is true for all $x\in\mathrm{B}_1$, will result in the part with $\rho_4$. Note that the last equality is checked using the operator $A_0(x)$, taking into account the equality to zero at $x=0$.

\section{Two loops: weak case}
\label{ym:sec:two1}
\subsection{Decomposition of Green's functions}
\label{ym:sec:two1-1}
The basic idea of calculating two-loop contributions is the same as the one in Section \ref{ym:sec:two}. Thus, the analysis is related to the decomposition of the Green's function near the diagonal. Since the operator has a similar appearance in general, we will focus only on significant differences. First of all, we note that the form of the representations \eqref{ya-a-74} and \eqref{ya-a-128} remains valid for the new case, since the decomposition is obtained in a pertubative manner, see for example \eqref{ya-a-38} and \eqref{ya-a-39}. Nevertheless, the definitions for each part change individually, since the Green's function for a deformed scalar operator has a new form. In this section, the same designations will be used for the components, as this does not create confusion. So, the deformed Green's function for the ghost fields can be divided into two parts
\begin{equation}\label{ya-z-10}
G_0^{\Lambda ab}(x,y)=G_{\mathrm{loc}}^{ab}(x,y)+PS_0^{ab}(x,y),
\end{equation}
where the second part has two finite derivatives on the diagonal and is essentially determined by the last relation. In this case, $G_{\mathrm{loc}}^{ab}(x,y)$ differs significantly from \eqref{ya-a-74-1} and can be written out explicitly. To do this, first consider the simplified case in which the background field is $B_\mu^a(x)\to\hat{B}_\mu^a(x)$ and depends linearly on the coordinate as follows
\begin{equation}\label{ya-z-6}
	B^a_{\mu}(x)=\frac{s}{2}x^\nu \xi^a_{\nu\mu},\,\,\,\mbox{where}\,\,\,
	\big(\xi^a\big)_{\mu\nu}=\frac{1}{\sqrt{8\dim\mathfrak{g}}}
	\begin{pmatrix}
		0&1&0&1\\
		-1&0&1&0\\
		0&-1&0&1\\
		-1&0&-1&0
	\end{pmatrix}\,\,\,
	\mbox{for all}\,\,\,a\in\{1,\ldots,\dim\mathfrak{g}\}.
\end{equation}
This example was used to construct a covariant regularization in \cite{Ivanov-2022,Ivanov-Kharuk-20222}. Here, the parameter $s>0$ is small for the convenience of formal series expansion. Then, denoting with the symbol $\hat{M}^{ab}_0(x)$ the scalar operator \eqref{ya-a-10} after substitution of $B_\mu^a(x)\to\hat{B}_\mu^a(x)$ and using formula (84) from \cite{Ivanov-2022}, we can write out the following decomposition near the diagonal
\begin{multline}\label{ya-z-3}
\rho\bigg(\sqrt{\hat{M}_0}/\Lambda\bigg)^{ab}G_0^{bc}(x,y)=\Phi^{ab}_{\hat{B}}(x,y)\bigg(
\delta^{bc}R_0^\Lambda\big|_{\mathbf{f}=0}(x-y)\\+
s^2\xi_{\sigma\beta}^{bd}\xi_{\sigma\beta}^{dc}
\bigg[-\frac{|x-y|^2}{2^9\pi^2}-\frac{\hat{R}_0^\Lambda(x-y)}{2^6\Lambda^2}\bigg]+\mathcal{O}(s^3)
\bigg).
\end{multline}
Here we used relation \eqref{ya-a-37}, the definition of the function $\rho(r)=2J_1(r)/r$ from formula (50) in \cite{Ivanov-2022}, where $J_1(\cdot)$ is the Bessel function of the first kind, the definition of the function
\begin{equation}\label{ya-z-4}
	\hat{R}_0^\Lambda(x)=\frac{1}{4\pi^2}\begin{cases}
		\,\,\,\,\,\,\,\,\,\,\,\,\,\,\,\,\,\,\,\,\,\,\,\,\,\,
		0,&|x|>1/\Lambda;\\
		\frac{1}{6}-\frac{1}{3}|x|^2\Lambda^2+\frac{1}{6}|x|^4\Lambda^4,&|x|\leqslant1/\Lambda,
	\end{cases}
\end{equation}
from formula (77) of \cite{Ivanov-2022}, as well as the definition for the ordered exponent
\begin{equation}\label{ya-z-7}
\Phi^{ab}_{B}(x,y)=1+\sum_{k=1}^{+\infty}(-1)^k
\int_0^1\mathrm{d}z^{\mu_1}(s_1)\ldots
\int_0^{s_{k-1}}\mathrm{d}z^{\mu_k}(s_k)\,
B_{\mu_1}^{ac_1}\big(z(s_1)\big)\cdot\ldots\cdot
B_{\mu_k}^{c_{k-1}b}\big(z(s_k)\big),
\end{equation}
where $z^\mu(s)=(1-s)y^\mu+sx^\mu$ is a parametrization of a geodesic in the flat space (straight line), see \cite{sig1,33}. Note that the correction term is $o(s^2)$ for small values of the parameter. Let us adapt the representation from \eqref{ya-z-3} to the case under consideration.

First, the function $\hat{\omega}(\cdot)=\hat{\Omega}^2(\cdot)$ was used to construct the weak deformation, see formula \eqref{ya-z-15}, instead of the special case of $\rho(\cdot)$, which in \cite{Ivanov-2022} referred to single averaging over a unit sphere. Thus, it is necessary to make the transition
\begin{equation}\label{ya-z-8}
\rho(r)=\frac{2J_1(r)}{r}=\frac{1}{S_{3}}\int_{\mathrm{S}^{3}}\mathrm{d}^4k\,e^{i(k,x)}
\longrightarrow
\int_{\mathrm{B}_{1}}\mathrm{d}^4z\,e^{i(z,x)}
\bigg(\int_{\mathrm{B}_{1/2}}\mathrm{d}^4y\,\omega(|z-y|)\omega(|y|)\bigg)=\hat{\omega}(r),
\end{equation}
where the notation $|x|=r$ was used. Next, note that the function in parentheses depends on the absolute value of $|z|=t$. Therefore, moving to spherical coordinates with respect to the variable $z$, we obtain
\begin{equation}\label{ya-z-9}
\hat{\omega}(r)=S_3\int_0^1\mathrm{d}t\,t^3\upsilon_\omega(t)\rho(tr)
,\,\,\,\mbox{where}\,\,\,
\upsilon_\omega(t)=\int_{\mathrm{B}_{1/2}}\mathrm{d}^4y\,\omega(|z-y|)\omega(|y|).
\end{equation}
Thus, the case of interest can be obtained by applying the integral operator to both sides of formula \eqref{ya-z-3} after replacing $1/\Lambda\to t/\Lambda$. In particular, the formulas are valid
\begin{equation*}
S_3\int_0^1\mathrm{d}t\,t^3\upsilon_\omega(t)=1,\,\,\,
S_3\int_0^1\mathrm{d}t\,t^3\upsilon_\omega(t)
R_0^{\Lambda/t}\big|_{\mathbf{f}=0}(x-y)=R_0^{\Lambda}(x-y),
\end{equation*}
where the last function corresponds to the twice-averaged fundamental solution \eqref{ya-a-36} for the free Laplace operator. For the sake of definiteness, such a transformation of the function $\hat{R}_0^\Lambda(\cdot)/\Lambda^2$ will be denoted by $\hat{R}_{0,\omega}^\Lambda(\cdot)$. Let us note that $\mathrm{supp}\big(\hat{R}_{0,\omega}^\Lambda(\cdot)\big)\in\mathrm{B}_{1/\Lambda}$, because
\begin{equation*}
\mbox{if}\,\,\,\hat{R}_{0}^\Lambda(x)=0,\,\,\,\mbox{then}\,\,\,\hat{R}_{0}^{\Lambda/t}(x)=0\,\,\,
\mbox{for all}\,\,\,t\in[0,1].
\end{equation*}

Secondly, it is necessary to explain the applicability of the special case of the field, since the final formula \eqref{ya-z-3} contains a rather special combination of the field strength tensors. Indeed, using invariance with respect to the gauge transformations of the background field and considerations related to dimensional conservation, it can be noted that all singular components in the weak case should be proportional to the classical action. Thus, the quadratic combination of the field strength tensors can be written out in any convenient form, including the averaged one. Therefore, the function $G_{\mathrm{loc}}^{ab}(x,y)$ from \eqref{ya-z-10} for an arbitrary smooth background field $B_\mu^a$ is chosen as
\begin{equation}\label{ya-z-5}
G_{\mathrm{loc}}^{ab}(x,y)=\Phi^{ab}_{B}(x,y)
R_0^\Lambda(x-y)-\frac{1}{2^7}
\Big(F_{\sigma\beta}^{ad}(x)F_{\sigma\beta}^{db}(x)+F_{\sigma\beta}^{ad}(y)F_{\sigma\beta}^{db}(y)\Big)
\bigg(\hat{R}_{0,\omega}^\Lambda(x-y)+
\frac{|x-y|^2}{8\pi^2}
\bigg)
.
\end{equation}
At the same time, the part $PS_0^{ab}(x,y)$ may contain inconsistencies, which, after averaging over indices and arguments, reduce to zeros and, thus, do not affect the final results.

Turning to the Green's vector function, we also use the notation from \eqref{ya-a-74}, that is, we have again
\begin{equation}\label{ya-z-11}
	G_{1\mu\nu}^{\Lambda ab}(x,y)=\delta_{\mu\nu}G_{0}^{ab}(x,y)+
	\mathcal{N}_{\mu\nu}^{ab}(x,y)+
	\mathcal{L}_{\mu\nu}^{ab}(x,y)+
	PS_{1\mu\nu}^{\,\,ab}(x,y).
\end{equation}
We need to provide supporting comments. Here, the first part of $G_{0}^{\Lambda ab}(x,y)$ coincides with the new scalar deformed Green's function from \eqref{ya-z-10}. The second part of $\mathcal{N}_{\mu\nu}^{ab}(x,y)$ is not equal to the one in Section \ref{ym:sec:two-1}. At the same time, it exactly matches in the main order, and differs in the rest. Moreover, their explicit form is not important, since it does not lead to a singular contribution. The third part of $\mathcal{L}_{\mu\nu}^{ab}(x,y)$, as before, is determined by formula \eqref{ya-a-77}. The last part is defined as a residual term and does not match the one in \eqref{ya-a-74}. Note that all parts are symmetric with respect to the simultaneous permutation of indices and arguments.

\subsection{Adapting calculations}
\label{ym:sec:two1-2}
In this section, we discuss the issue related to the use of existing calculations for the "strong" case, see Section \ref{ym:sec:two}. It turns out that many of the answers look exactly the same. Let us start with the parts containing the so-called nonlocal parts of the Green's functions. In the "strong" case, Section \ref{ym:sec:two-nl} was dedicated to them. Note that the calculations did not use an explicit form for $PS_0$ and $PS_1$, only their smoothness properties. In addition, we note that the logarithmic singularity occurred only to the first degree and did not depend on the regularizing function $\mathbf{f}(\cdot)$. Moreover, repeating the calculations, we can see that when working with asymptotics, it was allowed to ignore some unimportant\footnote{Small in a certain sense. As a rule, by the regularizing parameter $\Lambda$.} terms, which eventually led to the possibility of using invariance with respect to the gauge transformations of the background field. Summing up all the observations, it can be argued that the logarithmic nonlocal parts in the "weak" case have the same form. Next, we note that the Green's functions in the main order in the "strong" and "weak" cases coincide and are proportional to \eqref{ya-a-37}. Consequently, the power-law singularities in both approaches also coincide. Making a summary table, we get the following results:
\begin{align*}
\mbox{\textbf{Nonlocal singular parts:}}\,\,\,\mathbb{H}_0^{\mathrm{sc}}(\Gamma_3^2)&\longrightarrow\eqref{ya-a-121}+
\eqref{ya-a-129}+\eqref{ya-a-122}
;\\
\mathbb{H}_0^{\mathrm{sc}}(\Omega_3^2)&\longrightarrow
\eqref{ya-a-130}+\eqref{ya-a-131}
;\\
\mathbb{H}_0^{\mathrm{sc}}(\Gamma_4)&\longrightarrow\eqref{ya-a-132}+\eqref{ya-a-133}+\eqref{ya-a-134}
.
\end{align*}

Let us move on to the local parts. There will already be discrepancies here. Let us first note the terms that remain unchanged. These include those that were calculated using either part of the Green's function $\mathcal{L}_{\mu\nu}^{ab}(x,y)$, or the main order of the component $\mathcal{N}_{\mu\nu}^{ab}(x,y)$, or the answer was basically written out in the general \footnote{This implies functionals of the type $\mathrm{J}_{\odot}$, $\mathrm{J}_{\otimes}$, and $\mathrm{J}_{\ominus}$.} terms. Using the proposed criteria, we provide links to the results in the form of a summary table:

\begin{align*}
	\mbox{\textbf{Local parts with the same answer:}}\,\,\,\mathbb{H}_0^{\mathrm{sc}}(\Gamma_3^2)&\longrightarrow\eqref{ya-a-126}+
	\eqref{ya-a-140}+\eqref{ya-a-143}
	;\\
	\mathbb{H}_0^{\mathrm{sc}}(\Omega_3^2)&\longrightarrow
	\eqref{ya-a-146}+\eqref{ya-a-147}
	;\\
	\mathbb{H}_0^{\mathrm{sc}}(\Gamma_4)&\longrightarrow\eqref{ya-a-149}+\eqref{ya-a-148}+\eqref{ya-a-150}
	.
\end{align*}
Thus, the calculation of singular contributions for the main three diagrams is reduced to the adaptation of two terms:
\begin{align*}
	\mbox{\textbf{Local parts with a different answer:}}\,\,\,\mathbb{H}_0^{\mathrm{sc}}(\Gamma_3^2)&\longrightarrow\eqref{ya-a-139}
	;\phantom{+
		\eqref{ya-a-140}+\eqref{ya-a-143}\,\,\,}\\
	\mathbb{H}_0^{\mathrm{sc}}(\Omega_3^2)&\longrightarrow
	\eqref{ya-a-145}
	.
\end{align*}
Here are the main arguments and results. First, note that both contributions are linear combinations of $\mathrm{J}_{\sigma\rho}^{\phantom{1}}[\mathcal{N}_{1\sigma\rho}]$ and $\mathrm{\hat{J}}_{\sigma\rho}^{\phantom{1}}[\mathcal{N}_{1\sigma\rho}]$, see formulas \eqref{ya-a-138} and \eqref{ya-a-144}. They are the ones that should be studied. Next, we recall that the ordered exponential has a set of important properties, see for example \cite{33}. In our case, only the differentiation formulas in the first orders are noteworthy
\begin{align*}
D_{\sigma}^{ca}(x)\Phi^{ab}(x,y)&=\frac{1}{2}\Phi^{ca}(x,y)(x-y)^\mu F_{\mu\sigma}^{ab}(y)+\ldots,\\
D_{\rho}^{ba}(y)\Phi^{ca}(x,y)&=\frac{1}{2}\Phi^{ca}(x,y)(x-y)^\mu F_{\mu\rho}^{ab}(y)+\ldots,
\end{align*}
where ellipsis denote terms containing full covariant derivatives of the field strength tensor. Based on these relations, it is possible to prove an auxiliary statement. Let $g(\cdot)$ be a scalar smooth function, then the equality holds
\begin{multline}\label{ya-z-2}
D_{\sigma}^{ca}(x)D_{\rho}^{bd}(y)\Big(\Phi^{ad}(x,y)g(x-y)\Big)=\Phi^{ac}(x,y)
\bigg(-\delta^{cb}\partial_{x^\sigma}\partial_{x^\rho}+\frac{1}{2}F_{\sigma\rho}^{cb}(y)\\+\frac{(x-y)^\mu}{2}\Big[ F_{\mu\rho}^{cb}(y)\partial_{x^\sigma}-F_{\mu\sigma}^{cb}(y)\partial_{x^\rho}\Big]+
\frac{1}{4}(x-y)^{\mu\nu}F_{\mu\rho}^{cd}(y)F_{\nu\sigma}^{db}(y)+\ldots\bigg)g(x-y),
\end{multline}
where again, the ellipsis denotes combinations with covariant derivatives and with large number of monomials $x-y$. This formula is quite remarkable, because after replacing the function $\Phi^{ad}(x,y)g(x-y)$ with $G_{\mathrm{loc}}^{ad}(x,y)$ from \eqref{ya-z-5}, it can be seen that the double derivative does not lead to a singular contribution due to the antisymmetry of $\mathcal{N}_{1\sigma\rho}$ by subscripts in $\mathrm{J}_{\sigma\rho}^{\phantom{1}}[\mathcal{N}_{1\sigma\rho}]$ and $\mathrm{\hat{J}}_{\sigma\rho}^{\phantom{1}}[\mathcal{N}_{1\sigma\rho}]$. Therefore, $\mathcal{N}_{1\sigma\rho}$ can be replaced by the main order, while in \eqref{ya-z-2} it is possible to leave only two terms proportional to the field strength tensor in the first degree. By quoting similar terms and using auxiliary integrals from Section \ref{ym:sec:two-vs}, we arrive at the following answers
\begin{align}\label{ya-y-6}
\mathrm{J}_{\sigma\rho}^{\phantom{1}}[\mathcal{N}_{1\sigma\rho}]&
\stackrel{\mathrm{s.p.}}{=}\frac{c_2^2}{2}W_{-1}(\hat{\mathrm{I}}_6-\hat{\mathrm{I}}_5)
,\\\label{ya-y-7}
\mathrm{\hat{J}}_{\sigma\rho}^{\phantom{1}}[\mathcal{N}_{1\sigma\rho}]&
\stackrel{\mathrm{s.p.}}{=}c_2^2W_{-1}(2\hat{\mathrm{I}}_6-\hat{\mathrm{I}}_5)
.
\end{align}
Therefore, the formulas for the "strong" case, taking into account \eqref{ya-a-138} and \eqref{ya-a-144}, are transformed according to the following substitutions:
\begin{align*}
\eqref{ya-a-139}&\longrightarrow
	\mathbb{H}_0^{\mathrm{sc}}\big(\Gamma_3^2\big)\Big|_{G_{\mathrm{loc}}G_{\mathrm{loc}}\mathcal{N}}\stackrel{\mathrm{s.p.}}{=}-\frac{3c_2^2}{2}W_{-1}(3\hat{\mathrm{I}}_6-2\hat{\mathrm{I}}_5)\stackrel{\mathrm{s.p.}}{=}
	-\frac{3c_2^2W_{-1}}{16(4\pi^2)^2}\Big(L^2+L(-1/2+2\rho_1)\Big)
	,\\
\eqref{ya-a-145}&\longrightarrow
	\mathbb{H}_0^{\mathrm{sc}}\big(\Omega_3^2\big)\Big|_{G_{\mathrm{loc}}G_{\mathrm{loc}}\mathcal{N}}\stackrel{\mathrm{s.p.}}{=}-\frac{
	c_2^2}{2}W_{-1}\hat{\mathrm{I}}_6\stackrel{\mathrm{s.p.}}{=}
	-\frac{c_2^2W_{-1}}{16(4\pi^2)^2}\Big(L^2+L(1/2+2\rho_1)\Big)
	.
\end{align*}
Additionally, we draw attention to the fact that both answers are gauge invariant, in contrast to the "strong" case, in which the correction singularities depend on the degree of the background field. Thus, the last formulas conclude the discussion of contributions for all strongly connected two-loop diagrams.

The next step is to consider the relations for the counter-diagrams from Section \ref{ym:sec:two-co}. Repeating the main stages of the reasoning given above, and also paying attention to the fact that in the "weak" case, the entire averaged trace part, proportional to the second degree of the field strength tensor, is contained precisely in $G_{\mathrm{loc}}^{ad}(x,y)$, see formulas \eqref{ya-z-3} and \eqref{ya-z-5}, we obtain the validity of all available relations
\begin{equation*}
	\mbox{\textbf{Equalities}}\,\,\eqref{ya-a-158},\,
	\eqref{ya-a-159},\,\eqref{ya-a-161},\,\mbox{\textbf{and}}\,\,\eqref{ya-a-162}\,\,
	\mbox{\textbf{have no changes}}
	.
\end{equation*}

The last point of discussion is Section \ref{ym:sec:two-llo}. In this case, all three functions $\mathrm{J}_{\otimes}$, $\mathrm{J}_{\odot}$, and $\mathrm{J}_{\ominus}$ can be calculated explicitly using formula \eqref{ya-z-5} for $G_{\mathrm{loc}}^{ad}(x,y)$. Using the definitions for the functionals \eqref{ya-a-141}, \eqref{ya-a-150}, and \eqref{ya-a-187}, after direct substitution we obtain
\begin{equation}\label{ya-p-6}
\mathrm{J}_{\otimes}=-\frac{c_2W_{-1}}{48(4\pi^2)}+o(1)
,\,\,\,
\mathrm{J}_{\odot}\stackrel{\mathrm{s.p.}}{=}\frac{c_2W_{-1}}{2^6}\hat{R}_{0,\omega}^\Lambda(0)=
\mathcal{O}(1/\Lambda^2)
,\,\,\,
\mathrm{J}_{\ominus}\stackrel{\mathrm{s.p.}}{=}\frac{c_2^2W_{-1}L}{16(4\pi^2)^2}+\mathcal{O}(1)
,
\end{equation}
where the correction terms are given relative to the regularizing parameter $\Lambda$. Note that when deriving the last relation, property \eqref{ya-z-2} was used for the ordered exponential. As a result, summing up all the above facts, the analogue of formula \eqref{ya-a-166} for weak deformation is written out in the following form
\begin{align}\label{ya-z-12}
	-\frac{1}{2}\mathbb{H}_0^{\mathrm{sc}}\big(\Gamma_3^2\big)
	+\frac{1}{4}\mathbb{H}_0^{\mathrm{sc}}\big(\Gamma_4^{\phantom{1}}\big)-\frac{1}{2}\mathbb{H}_0^{\mathrm{sc}}\big(\Omega_3^2\big)\stackrel{\mathrm{s.p.}}{=}&
	-\frac{\Lambda^2c_2(2\rho_3-3\rho_0)}{8\pi^2}
	\int_{\mathbb{R}^4}\mathrm{d}^4x\,\Big(PS_{1\rho\rho}^{\,\,aa}(x,x)+4PS_{0}^{aa}(x,x)\Big)\\\nonumber
	&-\frac{5Lc_2}{48\pi^2}
	\int_{\mathbb{R}^4}\mathrm{d}^4x\,
	\Big(D_\rho^{bc}(x)D_\sigma^{ca}(x)PS_{1\sigma\rho}^{\,\,ab}(x,y)\Big)\Big|_{y=x}
	\\\nonumber
	&-\frac{c_2^2W_{-1}L}{32(4\pi^2)^2}\Big(10+12\rho_2+96\rho_5(2\rho_3-3\rho_0)-64\rho_4\Big).
\end{align}
Next, using relations \eqref{ya-a-158} and \eqref{ya-a-159}, we obtain the final relation in the form
\begin{align}\label{ya-z-13}
	-\frac{1}{2}\mathbb{H}_0^{\mathrm{sc}}\big(\Gamma_3^2\big)
	+\frac{1}{4}\mathbb{H}_0^{\mathrm{sc}}\big(\Gamma_4^{\phantom{1}}\big)-\frac{1}{2}\mathbb{H}_0^{\mathrm{sc}}\big(\Omega_3^2\big)\stackrel{\mathrm{s.p.}}{=}&
	-\frac{\Lambda^2c_2(2\rho_3-3\rho_0)}{8\pi^2}
	\mathbb{H}_0^{\mathrm{sc}}\big(S_2[\,\cdot\,]\big)\\\nonumber
	&-\frac{5Lc_2}{48\pi^2}\mathbb{H}_0^{\mathrm{sc}}\big(S_{\mathrm{f}}[\,\cdot\,,B]\big)
	\\\nonumber
	&-\frac{c_2^2W_{-1}L}{32(4\pi^2)^2}\Big(175/18+56\rho_2/3-64\rho_4\Big).
\end{align}
It is clear that the first two terms are reduced by counter-diagrams, and the third part leads to the two-loop coefficient for the renormalization constant.

\subsection{Answers for diagrams}
\label{ym:sec:two1-3}
\begin{table}[h!]
	\centering
	\setlength\arrayrulewidth{0.6pt}
	\renewcommand{\arraystretch}{1.8}
	\begin{tabular}{|r||c|c|c|c|c|c|}
		\hline
		\multicolumn{1}{|c||}{\mbox{Diagram}}&$\frac{c_2^2W_{-1}}{(4\pi)^4}$&$\frac{\Lambda^2c_2\mathbb{H}_0^{\mathrm{sc}}(S_2)}{(4\pi)^2}$&$\frac{Lc_2\mathrm{V}_1}{48\pi^2}$&$\frac{Lc_2\mathrm{V}_2}{48\pi^2}$&$\frac{Lc_2\mathrm{V}_3}{48\pi^2}$&$\frac{Lc_2\mathrm{V}_4}{48\pi^2}$\\
		\hline
		${\centering\adjincludegraphics[width = 1 cm, valign=c]{fig/ya8.eps}}$&$L(-1+4\rho_2-8\rho_4)$&$\rho_3$&$-5$&$-1/2$&$0$&$0$\\
		\hline
		$-{\centering\adjincludegraphics[width = 1 cm, valign=c]{fig/ya7.eps}}$&$12L^2+L(10+24\rho_1+8\rho_2-64\rho_4)$&$8\rho_3$&$-8$&$4$&$18$&$0$\\
		\hline
		${\centering\adjincludegraphics[width = 1 cm, valign=c]{fig/ya9.eps}}$&$8L^2+L(8+16\rho_1-32\rho_4)$&$4\rho_3$&$4$&$4$&$12$&$0$\\
		\hline
		$-2{\centering\adjincludegraphics[width = 1 cm, valign=c]{fig/ya10.eps}}$&$-4L^2+L(-4-8\rho_1+16\rho_4)$&$-2\rho_3$&$10$&$1$&$-6$&$0$\\
		\hline
		${\centering\adjincludegraphics[width = 1 cm, valign=c]{fig/ya11.eps}}$&$-4L^2+L(-4-8\rho_1+16\rho_4)$&$-2\rho_3$&$10$&$1$&$-6$&$0$\\
		\hline
\rowcolor[gray]{.9}			$\mathbb{H}_0^{\mathrm{sc}}\big(\Gamma_3^2\big)$&$12L^2+L(9+24\rho_1+12\rho_2-72\rho_4)$&$9\rho_3$&$11$&$19/2$&$18$&$0$\\
		\hline
\rowcolor[gray]{.9}			$\mathbb{H}_0^{\mathrm{sc}}\big(\Omega_3^2\big)$&$L(1+8\rho_4)$&$-\rho_3$&$-1$&$1/2$&$0$&$-6$\\
		\hline
\rowcolor[gray]{.9}		$-\frac{1}{2}\mathbb{H}_0^{\mathrm{sc}}\big(\Gamma_4\big)$&$-12L^2+L(-24\rho_1)$&$-12\rho_0$&$0$&$0$&$-18$&$0$\\
		\hline
\rowcolor[gray]{.7}		\multicolumn{1}{|c||}{$-\frac{\mbox{\small{sum}}}{2}$}&$L(-5-6\rho_2+32\rho_4)$&$6\rho_0-4\rho_3$&$-5$&$-5$&$0$&$3$\\
		\hline
	\end{tabular}
	\renewcommand{\arraystretch}{1}
	\caption{The table shows the singular components for individual diagrams, as well as for their linear combinations. In the sixth line, the result for expression \eqref{ya-a-19} is written out, which is equal to the sum of the first five diagrams (lines). The last line contains a linear combination for \eqref{ya-z-13}.}
	\label{ya:table:1}
\end{table}

\noindent The main result of the previous section was the derivation of a formula for the singular component of the linear combination of diagrams from \eqref{ya-z-13}, both as a whole and separately for each component. By performing similar steps, we can get answers for each diagram. To do this, it is enough to adapt the calculations, as was done in Section \ref{ym:sec:two1-3}, taking into account the auxiliary calculations from Sections \ref{ym:sec:two-nl} and \ref{ym:sec:two-1} for individual diagrams. Using the above notation and auxiliary functionals of the form
\begin{align*}
\mathrm{V}_1&=\int_{\mathbb{R}^4}\mathrm{d}^4x\,
\Big(D_\rho^{ab}(x)D_\sigma^{bc}(x)\big(PS_{1\sigma\rho}^{ca}(x,y)+\delta_{\sigma\rho}
PS_{0}^{ca}(x,y)\big)\Big)\Big|_{y=x},\\
\mathrm{V}_2&=\int_{\mathbb{R}^4}\mathrm{d}^4x\,
\Big(M_{1\rho\sigma}^{ac}(x)\big(PS_{1\sigma\rho}^{ca}(x,y)+\delta_{\sigma\rho}
PS_{0}^{ca}(x,y)\big)\Big)\Big|_{y=x},\\
\mathrm{V}_3&=\int_{\mathbb{R}^4}\mathrm{d}^4x\,
\Big(F_{\rho\sigma}^{ac}(x)\big(PS_{1\sigma\rho}^{ca}(x,x)+\delta_{\sigma\rho}
PS_{0}^{ca}(x,x)\big)\Big),\\
\mathrm{V}_4&=\int_{\mathbb{R}^4}\mathrm{d}^4x\,
\Big(M_{0}^{ac}(x)PS_{0}^{ca}(x,y)\Big)\Big|_{y=x},
\end{align*}
we provide the answer in the form of Table \ref{ya:table:1}.

\subsection{Comparison of parts $G_{\mathrm{loc}}G_{\mathrm{loc}}\mathcal{N}$ }
\label{ym:sec:two1-5}

In Section \ref{ym:sec:two1-2}, a variant of converting formulas for singular parts in the case of strong deformation to formulas for the "weak" approach was shown. It turned out that most of the "singular" integrals have the same form and do not depend on the choice of deformation. The main difference lies in local contributions of the type $G_{\mathrm{loc}}G_{\mathrm{loc}}\mathcal{N}$. To see this, just compare the formulas \eqref{ya-a-151}$\longleftrightarrow$\eqref{ya-y-6} and \eqref{ya-a-152}$\longleftrightarrow$\eqref{ya-y-7}. 

Let us discuss the first pair. It is clear that in the "strong" case, the classical action is split into two parts, therefore, it is more correct to make a comparison for the second part with the derivative, since in this case there are fewer deforming averaging operators. For convenience, we choose a field in the form $B_\mu^{ab}(x)=x^{\nu}F_{\nu\mu}^{ab}(x)$ that satisfies the Fock--Schwinger gauge condition. Then, using relation \eqref{ya-z-2}, we obtain that in the case of weak deformation, the main contribution to the functional $\mathrm{J}_{\sigma\rho}^{\phantom{1}}[\mathcal{N}_{1\sigma\rho}]$ is given by the combination
\begin{equation*}
	D_{\sigma}^{ac}(x)D_{\rho}^{bd}(y)\Big(\Phi^{cd}(x,y)R_0^{\Lambda,\mathbf{f}}(x-y)\Big)\longrightarrow
	\frac{1}{2}F_{\sigma\rho}^{ab}(y)R_0^{\Lambda,\mathbf{f}}(x)+\frac{1}{4} F_{\sigma\rho}^{ab}(y)x^\mu\partial_{x^\mu}R_0^{\Lambda,\mathbf{f}}(x),
\end{equation*}
where additionally variable shifting and spherical averaging were used. In the case of strong deformation, the Green's function is not invariant with respect to the gauge transformations of the background field, so the part in large parentheses has a different form. Let us use the decomposition near the diagonal from Section \ref{ym:sec:two-1}, see formula \eqref{ya-a-74-1}. It follows that for the transition\footnote{This transition is valid for searching for the local singular part. In the general case, this is incorrect.} to the case of strong deformation, it is necessary to replace
\begin{equation*}
\Phi^{cd}(x,y)R_0^{\Lambda,\mathbf{f}}(x-y)\longrightarrow
\delta^{cd}R_0^{\Lambda,\mathbf{f}}(x-y)+y^{\nu}F_{\nu\mu}^{ab}(y)\partial_{x^\mu}\theta(x-y).
\end{equation*} 
Then applying the two derivatives $D_{\sigma}^{ac}(x)D_{\rho}^{bd}(y)$, making a shift of the variable followed by spherical averaging, and also removing combinations symmetric in indices $\sigma$ and $\rho$, we obtain the following final terms
\begin{equation*}
\frac{1}{4}F_{\sigma\rho}^{ab}(y)\Big(A_0(x)\theta(x)+x^\mu\partial_{x^\mu}R_0^{\Lambda,\mathbf{f}}(x)/2\Big).
\end{equation*}
Thus, comparing the obtained combinations in both approaches, we see that the transition between the strong deformation and the weak one in the functional $\mathrm{J}_{\sigma\rho}^{\phantom{1}}[\mathcal{N}_{1\sigma\rho}]$ is carried out by the following replacement
\begin{equation*}
A_0(x)\theta(x)\longleftrightarrow 2R_0^{\Lambda,\mathbf{f}}(x)+x^\mu\partial_{x^\mu}R_0^{\Lambda,\mathbf{f}}(x)/2.
\end{equation*}
It is precisely this transition that leads to the connection of singular integrals in the pair \eqref{ya-a-151}$\longleftrightarrow$\eqref{ya-y-6}, since integrals also change after the density transformation, see Section \ref{ym:sec:two-vs},
\begin{equation*}
	\hat{\mathrm{I}}_1\stackrel{\mathrm{s.p.}}{\longleftrightarrow} 2\hat{\mathrm{I}}_5-\hat{\mathrm{I}}_6.
\end{equation*}
For completeness, we note that in the second pair \eqref{ya-a-152}$\longleftrightarrow$\eqref{ya-y-7}, integrals of the form are replaced
\begin{equation*}
\hat{\mathrm{I}}_2-2\hat{\mathrm{I}}_1\stackrel{\mathrm{s.p.}}{\longleftrightarrow} 4\hat{\mathrm{I}}_6-4\hat{\mathrm{I}}_5.
\end{equation*}
Considering the validity of the transition from the first pair, the second substitution reduces to
$\hat{\mathrm{I}}_2\stackrel{\mathrm{s.p.}}{\longleftrightarrow} 2\hat{\mathrm{I}}_6$.

\subsection{Comparison with dimensional regularization}
\label{ym:sec:two1-4}

Considering the singularities for the first loop, see for example \cite{12} and the first point of Theorem \ref{ya-t1}, it may seem that "elementary" logarithmic singularities that $\sim1/\varepsilon$ in dimensional regularization and $\sim L=\ln(\Lambda/\sigma)$ in the case of cutoff, can be compared with each other using some linear relationship $1/\varepsilon\longleftrightarrow aL+b$. However, this impression is deceptive, since an analysis of the two-loop corrections shows that there is no such dependence.

In this paper, we managed to make a very important observation about two-loop singularities. It turns out that it is possible to identify a set of master-integrals in which the terms independent of the deforming function have a direct relationship with the singularities for dimensional regularization. In this case, the difference between the two approaches arises from the fact that in the case of dimensional regularization, the coefficients with which the master-integrals enter the diagrams are deformed, whereas in the case of cutoff, the integrals themselves are deformed. We demonstrate this explicitly in several steps.\\

\noindent\underline{\textbf{Stage 1.}} First, let us compare the elementary blocks involved in constructing the decomposition of the Green's functions near the diagonal. The first function is the main part of the asymptotic expansion of $R_0(x)$
\begin{equation*}
\frac{\Gamma(d/2-1)}{4\pi^{d/2}|x|^{d-2}}=R_0^\varepsilon(x)\longleftarrow 
R_0^{\phantom{1}}(x)\longrightarrow
R_{0}^{\Lambda,\mathbf{f}}(x)\,\,\,\mbox{from}\,\,\,\eqref{ya-a-37}.
\end{equation*}
In this case, $d=4-\varepsilon$, where $\varepsilon$ is a small parameter of dimensional regularization, the transition of which $\varepsilon\to +0$ removes the regularization. As the second function, we choose the coefficient $R_1(x)$ about $2F_{\mu\nu}^{ab}$ when decomposing the Green's vector function near the diagonal, which obeys the relation\footnote{In the case of dimensional regularization, the operator is also deformed, since the quadratic form also changes.} $A_0(x)R_1(x)=R_0(x)$ in some fixed neighborhood of the origin. In the absence of regularization, it is equal to $-\ln(|x|^2\sigma^2)/(16\pi^2)$. After the deformation, we get\footnote{For dimensional regularization, the parameter designation $\mu$ is usually used instead of $\sigma$. We violated this agreement for convenience.}
\begin{equation*}
\frac{1}{16\pi^2}\bigg(\frac{2\sigma^{-\varepsilon}}{\varepsilon}+
\frac{\Gamma(d/2-2)}{\pi^{d/2-2}}\vert x\vert^{4-d}\bigg)=R_1^\varepsilon(x)\longleftarrow 
R_1^{\phantom{1}}(x)\longrightarrow
\theta(x)\,\,\,\mbox{from}\,\,\,\eqref{ya-a-76}.
\end{equation*}
The third function is a coefficient of about $2F_{\mu\sigma}^{ac}(x)F_{\sigma\nu}^{cb}(x)$, which, in the case of cutoff, is contained in the term $\mathcal{L}_{\mu\nu}^{ab}$, see \eqref{ya-a-77}. By construction, such a function should result\footnote{It is possible to have small corrections with respect to the regularizing parameter. This depends on the definition of the additive.} to $2R_1(x)$ after the action of the operator $A_0(x)$ and is usually denoted by $R_2(x)-|x|^2/(2^7\pi^2)$, where
\begin{equation*}
R_2(x)=\frac{|x|^2\big(\ln(|x|^2\sigma^2)-1\big)}{64\pi^2}.
\end{equation*}
Comparing such a function with the case of dimensional regularization \cite{12,13} and with the cutoff from Section \ref{ym:sec:two-1}, we obtain 
\begin{equation*}
\frac{1}{32\pi^2}\bigg(-\frac{\vert x\vert^2\sigma^{-\varepsilon}}{\varepsilon}+
\frac{\Gamma(d/2-3)}{2\pi^{d/2-2}}\vert x\vert^{6-d}\bigg)=R_2^\varepsilon(x)\longleftarrow 
	R_2^{\phantom{1}}(x)\longrightarrow
	2\tau(x)+\frac{|x|^2}{2^7\pi^2}\,\,\,\mbox{from}\,\,\,\eqref{ya-a-78}.
\end{equation*}

\noindent\underline{\textbf{Stage 2.}} At the next stage, we need to compare the master-integrals. In the case of dimensional regularization, eight pieces of $\mathrm{I}_1$--$\mathrm{I}_8$ were used, see Section 2.3 in \cite{Ivanov-Kharuk-20222}. In the case of the cutoff regularization, seven functionals were used $\hat{\mathrm{I}}_1$--$\hat{\mathrm{I}}_7$, see Section \ref{ym:sec:two-vs}. Let us present the results of the comparisons in the form of Table \ref{ya:table:2}. In this case, the relations are not a purposeful selection of linear combinations, but a consequence of the density comparisons from the previous step.

\begin{table}[h!]
	\centering
	\setlength\arrayrulewidth{0.6pt}
	\renewcommand{\arraystretch}{1.8}
	\begin{tabular}{|c|c|c||c|c|}
		\hline
		\multicolumn{3}{|c||}{\mbox{Cutoff regularization}}&\multicolumn{2}{c|}{\mbox{Dimensional regularization}}\\
		\hline
		\multicolumn{1}{|c|}{\mbox{Int.}$\times(4\pi)^4$}&\multicolumn{1}{c|}{\mbox{Main part}}&\multicolumn{1}{c||}{\mbox{Correction}}&\multicolumn{1}{c|}{\mbox{Int.}$\times(4\pi)^4\sigma^{2\varepsilon}$}&\multicolumn{1}{c|}{\mbox{Singularity}}\\
		\hline
		$\hat{\mathrm{I}}_1$&$2L^2+L(2+4\rho_1)$&$L(4\rho_2)$&$-(\mathrm{I}_4+d\mathrm{I}_3)/(2c_2^2)$&$2/\varepsilon^2+1/\varepsilon$\\
		\hline
		$\hat{\mathrm{I}}_2$&$4L^2+L(8\rho_1)$&$L(8\rho_2)$&$-d\mathrm{I}_3/c_2^2$&$4/\varepsilon^2$\\
		\hline
		$\hat{\mathrm{I}}_3$&$0$&$L(24\rho_3\rho_5-8\rho_4)$&$-d(\mathrm{I}_1+\mathrm{I}_2)/(4c_2^2)$&$0$\\
		\hline
		$\hat{\mathrm{I}}_4$&$2L^2+L(2+4\rho_1)$&$0$&$-d(\mathrm{I}_1-\mathrm{I}_7)/(2c_2^2)$&$2/\varepsilon^2+1/\varepsilon$\\
		\hline
		$\hat{\mathrm{I}}_5$&$2L^2+L(2+4\rho_1)$&$0$&$-2(\mathrm{I}_6-\mathrm{I}_5)/c_2^2$&$2/\varepsilon^2+1/\varepsilon$\\
		\hline
		$\hat{\mathrm{I}}_6$&$2L^2+L(1+4\rho_1)$&$0$&$2\mathrm{I}_5/c_2^2$&$2/\varepsilon^2+1/(2\varepsilon)$\\
		\hline
		$\hat{\mathrm{I}}_7$&$8L$&$0$&$8d\mathrm{I}_7/c_2^2=8d\mathrm{I}_8/c_2^2$&$4/\varepsilon$\\
		\hline
	\end{tabular}
	\renewcommand{\arraystretch}{1}
\caption{The first column contains the master-integrals for the cutoff regularization. The fourth column contains expressions using master-integrals for the dimensional regularization. Here $L=\ln(\Lambda/\sigma)$, and $\rho_i$ are auxiliary functionals (numbers) from Section \ref{ym:sec:two-vs}. }
\label{ya:table:2}
\end{table}
\noindent The table shows that in the case of cutoff regularization, the "base" singularity is the combination of $L+\rho_1$, and not just $L$, which is equal to $8\pi^2\theta(0)$. This value is more natural. A similar situation was observed in the two-dimensional sigma model, see \cite{AIK-25}, and played an important role\footnote{The suggested choice significantly reduces the number of calculations.} when studying three-loop contributions. Note that, if necessary, the sum of $L+\rho_1$ can be reduced to $L$ by shifting the parameter $\sigma$. Thus, the main "poles" have the direct connection
\begin{equation*}
(L+\rho_1)^2\longleftrightarrow\frac{1}{\varepsilon^2},
\end{equation*}
whereas all first-order singularities that do not depend on the choice of the regularizing function $\mathbf{f}(\cdot)$ differ by two. The "correction" column contains the singularities that occur due to the smoothing of the delta-functional. Some of them are reduced after summing all the diagrams, which is a consequence of the invariance of the two-loop contribution with respect to the shift by a constant with a multiplier of $L/\Lambda^2$, and some can be neutralized by introducing quasi-local vertices, see Section \ref{ym:sec:quas}. Thus, only the functional $\rho_2$ remains, which is responsible for the deformation of the master-integrals, leading to a change in the two-loop corrections.\\

\noindent\underline{\textbf{Stage 3.}} At this stage, individual diagrams and their linear combinations can be discussed. Note that in this paper, the contribution of $\mathbb{H}_0^{\mathrm{sc}}\big(\Gamma_3^2\big)$, see formula \eqref{ya-a-19}, consists of five diagrams. A similar contribution in the works \cite{Ivanov-Kharuk-2020,Ivanov-Kharuk-20222} was represented by the sum of four parts
\begin{equation*}
\mathbb{H}_0^{\mathrm{sc}}\big(\Gamma_3^2\big)=\sum_{i=1}^42\mathcal{J}_i,
\end{equation*}
where, using formulas (48) and (49) from \cite{Ivanov-Kharuk-2020}, the following connecting relations are true
\begin{equation*}
	2\mathcal{J}_1=-2
	{\centering\adjincludegraphics[width = 1 cm, valign=c]{fig/ya10.eps}},\,\,\,
	2\mathcal{J}_2=2{\centering\adjincludegraphics[width = 1 cm, valign=c]{fig/ya9.eps}},\,\,\,
	2\mathcal{J}_3=-
	{\centering\adjincludegraphics[width = 1 cm, valign=c]{fig/ya7.eps}}-
	{\centering\adjincludegraphics[width = 1 cm, valign=c]{fig/ya10.eps}},
\end{equation*}
\begin{equation*}
	2\mathcal{J}_4=2
	{\centering\adjincludegraphics[width = 1 cm, valign=c]{fig/ya10.eps}}+
	{\centering\adjincludegraphics[width = 1 cm, valign=c]{fig/ya11.eps}}+ {\centering\adjincludegraphics[width = 1 cm, valign=c]{fig/ya8.eps}}.
\end{equation*}
At the same time, similar designations were used for the remaining parts
\begin{equation*}
\mathbb{H}_0^{\mathrm{sc}}\big(\Gamma_3^2\big)=2\mathcal{J}_5,\,\,\,
\mathbb{H}_0^{\mathrm{sc}}\big(\Gamma_4\big)=-4\mathcal{J}_6.
\end{equation*}
The question of matching the master-integrals concerns diagrams with two integrations, that is $\mathcal{J}_1$--$\mathcal{J}_5$. Let us use the formulas derived earlier, see (61)--(66) in \cite{Ivanov-Kharuk-20222}, for the case of dimensional regularization and rewrite them, dividing them into two parts: the main one, which allows direct comparison with the case of the cutoff according to Table \ref{ya:table:2}, and the correction, which is responsible for the deformation of the coefficients with which the master-integrals enter. We have the following formulas
\begin{fleqn}
\begin{equation}\label{ya-y-1}
2\mathcal{J}_1\times(4\pi)^4\sigma^{2\varepsilon}\stackrel{\mathrm{s.p.}}{=}
\Big(2d\mathrm{I}_1+d\mathrm{I}_2+\mathrm{I}_4-4\mathrm{I}_5+4\mathrm{I}_7\Big)+\varepsilon
\Big(2\mathrm{I}_1+\mathrm{I}_2\Big)
\end{equation}
\begin{equation*}
\phantom{2\mathcal{J}_1\times(4\pi)^4\sigma^{2\varepsilon}}\,\to\Big(-2\hat{\mathrm{I}}_1+\hat{\mathrm{I}}_2-4\hat{\mathrm{I}}_3-2\hat{\mathrm{I}}_4-2\hat{\mathrm{I}}_6+\hat{\mathrm{I}}_7/4\Big)\times c_2^2,
\end{equation*}
\end{fleqn}
\begin{fleqn}
	\begin{equation}\label{ya-y-2}
		2\mathcal{J}_2\times(4\pi)^4\sigma^{2\varepsilon}\stackrel{\mathrm{s.p.}}{=}
		\Big(-4d\mathrm{I}_1-2d\mathrm{I}_2-2\mathrm{I}_4+8\mathrm{I}_5-8\mathrm{I}_7\Big)+\varepsilon\Big(-2\mathrm{I}_5\Big)
	\end{equation}
\begin{equation*}
\phantom{2\mathcal{J}_2\times(4\pi)^4\sigma^{2\varepsilon}}\,\to\Big(4\hat{\mathrm{I}}_1-2\hat{\mathrm{I}}_2+8\hat{\mathrm{I}}_3+4\hat{\mathrm{I}}_4+4\hat{\mathrm{I}}_6-\hat{\mathrm{I}}_7/2\Big)\times c_2^2,
\end{equation*}
\end{fleqn}
\begin{fleqn}
	\begin{equation}\label{ya-y-3}
	2\mathcal{J}_3\times(4\pi)^4\sigma^{2\varepsilon}\stackrel{\mathrm{s.p.}}{=}
	\Big(-2d\mathrm{I}_1-d\mathrm{I}_2-d\mathrm{I}_3-4\mathrm{I}_5-d\mathrm{I}_6-d\mathrm{I}_7\Big)+\varepsilon\Big(\mathrm{I}_5\Big)
\end{equation}
\begin{equation*}
	\phantom{2\mathcal{J}_3\times(4\pi)^4\sigma^{2\varepsilon}}\,\to\Big(\hat{\mathrm{I}}_2+4\hat{\mathrm{I}}_3+2\hat{\mathrm{I}}_4+2\hat{\mathrm{I}}_5-4\hat{\mathrm{I}}_6-\hat{\mathrm{I}}_7/4\Big)\times c_2^2,
\end{equation*}
\end{fleqn}
\begin{fleqn}
	\begin{equation}\label{ya-y-4}
	2\mathcal{J}_4\times(4\pi)^4\sigma^{2\varepsilon}\stackrel{\mathrm{s.p.}}{=}
	\Big(-d\mathrm{I}_1/2-d\mathrm{I}_2/4-d\mathrm{I}_3/2+\mathrm{I}_4/4-3\mathrm{I}_5-6\mathrm{I}_6-\mathrm{I}_7\Big)+\varepsilon\Big(-\mathrm{I}_1/2-\mathrm{I}_2/4-\mathrm{I}_3/2\Big)
\end{equation}
\begin{equation*}
	\phantom{2\mathcal{J}_4\times(4\pi)^4\sigma^{2\varepsilon}}\,\to\Big(-\hat{\mathrm{I}}_1/2+3\hat{\mathrm{I}}_2/4+\hat{\mathrm{I}}_3+\hat{\mathrm{I}}_4/2+\hat{\mathrm{I}}_5-5\hat{\mathrm{I}}_6/2-\hat{\mathrm{I}}_7/16\Big)\times c_2^2,
\end{equation*}
\end{fleqn}
\begin{fleqn}
	\begin{equation}\label{ya-y-5}
	2\mathcal{J}_5\times(4\pi)^4\sigma^{2\varepsilon}\stackrel{\mathrm{s.p.}}{=}
	\Big(d\mathrm{I}_2/4-\mathrm{I}_5+\mathrm{I}_7+2\mathrm{I}_8\Big)+\varepsilon\Big(\mathrm{I}_2/4\Big)
\end{equation}
\begin{equation*}
	\phantom{2\mathcal{J}_5\times(4\pi)^4\sigma^{2\varepsilon}}\,\to\Big(-\hat{\mathrm{I}}_3+\hat{\mathrm{I}}_4/2-\hat{\mathrm{I}}_6/2+\hat{\mathrm{I}}_7/16\Big)\times c_2^2.
\end{equation*}
\end{fleqn}
Here, the first line corresponds to the case of dimensional regularization, and the second line corresponds to the case of the cutoff. At the same time, when switching from the "dimensional" situation to the cutoff, the second part of the first row resets to zero, and then the coefficients in the first part are deformed, and vice versa. Let us substitute the values for the master integrals into the obtained ratios, explicitly dividing the answers into the main part and the correction. The values obtained are shown in Table \ref{ya:table:3}.

Let us note an interesting property possessed by correction terms for dimensional regularization. If we sum up all the parts that do not contain $\mathrm{I}_5$, we get
\begin{equation*}
\varepsilon
\Big(2\mathrm{I}_1+\mathrm{I}_2\Big)+
\varepsilon\Big(-\mathrm{I}_1/2-\mathrm{I}_2/4-\mathrm{I}_3/2\Big)+
\varepsilon\Big(\mathrm{I}_2/4\Big)\stackrel{\mathrm{s.p.}}{=}0.
\end{equation*}
Consequently, such additions are similar to terms with $\rho_1$ in the case of the cutoff regularization, which eventually cancel each other out. In turn, the parts with the integral $\mathrm{I}_5$ are not reduced
\begin{equation*}
\varepsilon\Big(-2\mathrm{I}_5\Big)+\varepsilon\Big(\mathrm{I}_5\Big)\stackrel{\mathrm{s.p.}}{=}-\frac{c_2^2\sigma^{-2\varepsilon}}{(4\pi)^4}\frac{1}{\varepsilon}
\end{equation*}
and they are an analog of contributions with the functional $\rho_2$ for the cutoff case, which also remains.

\begin{table}[h!]
	\centering
	\setlength\arrayrulewidth{0.6pt}
	\renewcommand{\arraystretch}{1.8}
	\begin{tabular}{|r||c|c|c|c|}
		\hline
		\multicolumn{1}{|c||}{}&\multicolumn{2}{c|}{\mbox{Cutoff reg.}}&\multicolumn{2}{c|}{\mbox{Dimensional reg.}}\\
		\hline
		\multicolumn{1}{|c||}{\mbox{Diagram}}&$\mbox{Main part}$&$\mbox{Correction}$&$\mbox{Main part}$&$\mbox{Correction}$\\
		\hline
		$2\mathcal{J}_1$&$-8L^2+L(-8+16\rho_1)$&$L(32\rho_4)$&$-8/\varepsilon^2-4/\varepsilon$&$-1/\varepsilon$\\
		\hline
		$2\mathcal{J}_2$&$16L^2+L(16+32\rho_1)$&$L(-64\rho_4)$&$16/\varepsilon^2+8/\varepsilon$&$-2/\varepsilon$\\
		\hline
		$2\mathcal{J}_3$&$4L^2+L(2+8\rho_1)$&$L(8\rho_2-32\rho_4)$&$4/\varepsilon^2+1/\varepsilon$&$1/\varepsilon$\\
		\hline
		$2\mathcal{J}_4$&$-L$&$L(4\rho_2-8\rho_4)$&$-1/(2\varepsilon)$&$3/(4\varepsilon)$\\
		\hline
		\rowcolor[gray]{.9}			$\mathbb{H}_0^{\mathrm{sc}}\big(\Gamma_3^2\big)$&$12L^2+L(9+24\rho_1)$&$L(12\rho_2-72\rho_4)$&$12/\varepsilon^2+9/(2\varepsilon)$&$-5/(4\varepsilon)$\\
		\hline
		\rowcolor[gray]{.9}			$\mathbb{H}_0^{\mathrm{sc}}\big(\Omega_3^2\big)$&$L$&$L(8\rho_4)$&$1/(2\varepsilon)$&$1/(4\varepsilon)$\\
		\hline
		\rowcolor[gray]{.9}		$-\frac{1}{2}\mathbb{H}_0^{\mathrm{sc}}\big(\Gamma_4\big)$&$-12L^2+L(-24\rho_1)$&$0$&$-12/\varepsilon^2$&$0$\\
		\hline
		\rowcolor[gray]{.7}		\multicolumn{1}{|r||}{$-\frac{\mbox{\small{sum}}}{2}$}&$-5L$&$L(-6\rho_2+32\rho_4)$&$-5/(2\varepsilon)$&$1/(2\varepsilon)$\\
		\hline
		\rowcolor[gray]{.9}\multicolumn{1}{|r||}{$-\mathcal{J}_7$}&$0$&$L(5/36-10\rho_2/3)$&$0$&$-5/(6\varepsilon)$\\
		\hline
		\rowcolor[gray]{.7}\multicolumn{1}{|c||}{$\mbox{Answer}$}&
		\multicolumn{2}{c|}{$L(-175/36-28\rho_2/3+32\rho_4)$}
		&\multicolumn{2}{c|}{$-17/(6\varepsilon)$}\\
		\hline
	\end{tabular}
	\renewcommand{\arraystretch}{1}
	\caption{The table shows the local singular components corresponding to the second column of Table \ref{ya:table:1}. In this case, the second and third columns must be multiplied by $c_2^2W_{-1}/(4\pi)^4$, and the last two by $c_2^2W_{-1}\sigma^{-2\varepsilon}/(4\pi)^4$.}
	\label{ya:table:3}
\end{table}

\noindent\underline{\textbf{Stage 4.}} In addition, Table \ref{ya:table:3} shows the values and the division into the main part and the correction part for the counter-vertex. First, we give a complete singular value decomposition for each type of regularization
\begin{align*}
	-\frac{5Lc_2}{48\pi^2}
	\mathbb{H}_0^{\mathrm{sc}}\big(S_{\mathrm{f}}[\,\cdot\,,B]\big)\bigg|_{\mathrm{cutoff}}&\stackrel{\mathrm{s.p.}}{=}-\frac{5}{6}\frac{c_2^2W_{-1}}{(4\pi)^4}
	\times L(4\rho_2-1/6)
	+\frac{5Lc_2\mathrm{V}_1}{48\pi^2},
	\\
	-\frac{5c_2}{48\pi^2\varepsilon}
	\mathbb{H}_0^{\mathrm{sc}}\big(S_{\mathrm{f}}[\,\cdot\,,B]\big)\bigg|_{\mathrm{d.reg.}}^{\phantom{\mathrm{cutoff}}}&\stackrel{\mathrm{s.p.}}{=}-\frac{5}{6}\frac{c_2^2W_{-1}\sigma^{-2\varepsilon}}{(4\pi)^4}\times\frac{1}{\varepsilon}
	+\frac{5c_2\mathrm{V}_1}{48\pi^2\varepsilon}.
\end{align*}
The latter value, following the definitions from \cite{Ivanov-Kharuk-20222}, is denoted by $-\mathcal{J}_7$. In both cases, the local singularity proportional to $W_{-1}$ is recorded as a correction part. Let us compare the corresponding terms with the case without regularization. First, the part of $M_0^{ab}G_{\mathrm{loc}}^{ba}$ proportional to $F_{\nu\sigma}^{ab}F_{\mu\sigma}^{ba}$ contains the following multiplier
\begin{equation*}
	e_{1}^{\mu\nu}(x)=-\frac{x^{\mu\nu}}{4}R_0(x)+A_0(x)\bigg(x^{\mu\nu}R_1(x)+R_2(x)\delta^{\mu\nu}-\frac{|x|^2\delta^{\mu\nu}}{2^7\pi^2}\bigg)\frac{1}{12}.
\end{equation*}
In the absence of regularization, it is zero $e_{1}^{\mu\nu}(x)=0$ for all values of $x$. However, when regularization is introduced, it may deform. It is not difficult to verify that 
\begin{equation*}
		e_{1}^{\mu\nu}\Big|_{\mathrm{d.reg.}}^{\phantom{\mathrm{cutoff}}}(0)=0
	,\,\,\,
	e_{1}^{\mu\nu}\Big|_{\mathrm{cutoff}}(0)=\frac{\delta^{\mu\nu}}{48(4\pi^2)}
	.
\end{equation*}
In the second case, there is a combination with the field strength tensor of the form $F_{\nu\sigma}^{ab}F_{\sigma\mu}^{ba}$. Again, in the unregulated case, it has the form
\begin{equation*}
	e_{2}^{\mu\nu}(x)=-R_1(x)\delta^{\mu\nu}-2\partial_{x_\mu}\partial_{x_\nu}\bigg(R_2(x)-\frac{|x|^2}{2^7\pi^2}\bigg)=\frac{\delta^{\mu\nu}-4x^\mu x^\nu/|x|^2}{2^5\pi^2}.
\end{equation*}
In this case, we have an ambiguity at zero. It depends on the limit (the direction of approach to zero). Turning to the dimensional regularization or to the cutoff, we are convinced of the validity of the following relations
\begin{equation*}
	e_{2}^{\mu\nu}\Big|_{\mathrm{d.reg.}}^{\phantom{\mathrm{cutoff}}}(0)=\frac{\delta^{\mu\nu}}{32\pi^2}
	,\,\,\,
	e_{2}^{\mu\nu}\Big|_{\mathrm{cutoff}}(0)=\frac{\rho_2\delta^{\mu\nu}}{8\pi^2}
	.
\end{equation*}
Thus, in the case of dimensional regularization, the ambiguity is revealed by zeroing the second term, while in the case of cutoff, both parts are important, and their total value has the opposite sign, since $\rho_2\leqslant0$. Nevertheless, it is the second option that is more natural, since in the case of cutoff in the momentum representation reach a maximum of $\rho_2\to0$, which is consistent with the fact that the trace of the matrix $e_2^{\mu\nu}$ is zero in the absence of regularization.

\section{On renormalization of mass}
\label{ym:sec:mas}

Section \ref{ym:sec:int} discussed the renormalization process and emphasized the fact that renormalizability may depend on the type of regularization. In this case, it is necessary either to expand the classical action by adding new terms with new parameters, or to expand the renormalization rules by adding new counter-terms. It should be noted right away that the second option is a special case of the first one, since when setting the newly added parameters in the classical action, both results must match. In Sections \ref{ym:sec:pr:gen3} and \ref{ym:sec:re}, the second approach was used, as it is easier to implement. The main objective of this section is to demonstrate the first approach within the framework of the first two quantum corrections for the weak deformation and to discuss additional changes in the case of the strong deformation.\\

\noindent\textbf{Weak deformation.} Regularization in this approach is introduced in a covariant manner \eqref{ya-z-15}, therefore, the counter-vertices are also invariant with respect to the gauge transformations of the background field $B_\mu^a$, see \eqref{ya-p-1}. This fact means that new additional parts of the classical action \eqref{ya-z-25} must also have this property. Now note that within the first two quantum corrections, see Theorem \ref{ya-t2}, only one new counter-vertex was introduced
\begin{equation}\label{ya-s-1}
S_2[a]=\int_{\mathbb{R}^4}\mathrm{d}^4x\,a_\mu^a(x)a_\mu^a(x)
\end{equation}
thus, the classical action from \eqref{ya-z-25} was shifted by the value
\begin{equation*}
g^2\frac{\Lambda^2c_2(2\rho_3-3\rho_0)}{8\pi^2}S_2[a].
\end{equation*}
For verification, it is enough to take into account formula \eqref{ya-a-173} and its special case from Section \ref{ym:sec:re-5}. The definitions of the coefficients $\rho_0$ and $\rho_3$ are given in Theorem \ref{ya-t2}. Therefore, the extended classical action should differ only in the mass\footnote{Intuitively, it would be more correct to call the mass term $S_2[B+ga]$. For this reason, the parameter is designated $\mu^2$, not $m^2$. The authors believe that the term with $S_2[a]$ has more to do with a renormalization of a measure rather than the mass. Anyway, it is an open question.} summand and have the form
\begin{equation}\label{ya-p-5}
S_{\mathrm{tot}}[B,e]+\frac{\mu^2}{2}S_2[a].
\end{equation}
Note that covariance is present here, since the field $a_\mu^a$ is transformed according to the law from \eqref{ya-p-4}. After such a change, the renormalization procedure with the addition of a new counter-vertex can be reformulated in a standard way by multiplying the parameter $\mu^2$ by the auxiliary renormalization constant $Z_\mu$, which decomposes into a formal series\footnote{Usually, an ansatz with $z_{\mu,1}=0$ is used. We consider a more general case, since $z_{\mu,1}=0$ is not obvious. It follows from the calculations.} by the coupling constant
\begin{equation*}
Z_\mu^{\phantom{1}}=1+z_{\mu,1}^{\phantom{1}}+g^2_{\mathrm{ren}}z_{\mu,2}^{\phantom{1}}+\mathcal{O}(g^4_{\mathrm{ren}}).
\end{equation*}
In this case, the coefficients $z_{\mu,1}$ and $z_{\mu,2}$ are determined based on the first two corrections. They contain the "power-law" parts $z_{\mu,1}^{\Lambda}$ and $z_{\mu,2}^\Lambda$, which follow from Theorem \ref{ya-t2} and are equal to
\begin{equation*}
z_{\mu,1}^{\Lambda}=0,\,\,\,
z_{\mu,2}^{\Lambda}=\frac{\Lambda^2c_2(2\rho_3-3\rho_0)}{4\mu^2\pi^2},
\end{equation*}
as well as the "logarithmic" parts of $z_{\mu,1}^{L}$ and $z_{\mu,2}^L$, which must be additionally calculated. To determine the first coefficient, it suffices to note that the contribution to the first correction, containing $\mu^2$, is proportional to $\mathbb{H}_0^{\mathrm{sc}}\big(S_2[\,\cdot\,]\big)$. Taking into account the decomposition for the Green's function near the diagonal from Section \ref{ym:sec:two1-1}, the equality $F_{\nu\nu}^a=0$, and the relation for $J_{\odot}$ from \eqref{ya-p-6}, we arrive at the conclusion that there are no logarithmic singularities proportional to $\mu^2$ in the first approximation. Thus, we have\footnote{Generally speaking, we can add an arbitrary constant. However, in this case, the first coefficient in $Z_\mu$ will not be equal to one.} $z_{\mu,1}^{L}=0$. Moving on to the second "loop", we note that all diagram blocks are covariant, so for general reasons it can be argued that $z_{\mu,2}^L$ will not contain singularities. It can be chosen to be equal to any finite constant, in particular, $z_{\mu,2}^L=0$. If we turn to the mechanics of computing, then for the search it is necessary to solve the equality\footnote{It follows from \eqref{ya-a-163} and \eqref{ya-a-184} after adding the "mass" vertex.}
\begin{equation}\label{ya-p-7}
-\frac{1}{2}\mathbb{H}_0^{\mathrm{sc}}\big(\Gamma_3^2S_2^{\phantom{1}}\big)
+\frac{1}{4}\mathbb{H}_0^{\mathrm{sc}}\big(\Gamma_4^{\phantom{1}}S_2^{\phantom{1}}\big)-\frac{1}{2}\mathbb{H}_0^{\mathrm{sc}}\big(\Omega_3^2S_2^{\phantom{1}}\big)+\mathbb{H}_0^{\mathrm{sc}}\big(\hat{\Gamma}_2^0S_2^{\phantom{1}}\big)-z_{\mu,2}^L
\mathbb{H}_0^{\mathrm{sc}}\big(S_2^{\phantom{1}}\big)
\stackrel{\mathrm{s.p.}}{=}0.
\end{equation}
To do this, perform calculations related to the substitution of the decomposition for the Green's function from Section \ref{ym:sec:two1-1}, and use Lemma 4 from \cite{ya-30}. Then we can see that the nonlocal terms are reduced, since in this case the combinations completely repeat those already studied in Section \ref{ym:sec:two}, and the local singular part is proportional to the trace of the field strength tensor and therefore equal to zero. Based on all that has been said, the renormalization constant can be rewritten as
\begin{equation*}
	Z_\mu^{\phantom{1}}=1+g^2_{\mathrm{ren}}\frac{\Lambda^2c_2(2\rho_3-3\rho_0)}{4\mu^2\pi^2}+\mathcal{O}(g^4_{\mathrm{ren}}).
\end{equation*}
It can be assumed that in higher loops, the logarithmic parts will also be missing due to the covariance of regularization. However, this fact needs to be further investigated, as it may depend on renormalizability\footnote{At the moment, there is no proof that in higher corrections, the standard $\mathcal{R}$- operation of Bogoliubov--Parasyuk subtracts all necessary subsingularities in diagrams.} as a whole. Nevertheless, at the level of two corrections, it can be argued that in the covariant case there was not much point in introducing a mass term, since the power singularities remained the same, and the logarithmic ones did not appear.\\

\noindent\textbf{Strong deformation.} In this case, it is necessary to rely on the results of Theorem \ref{ya-t1}. It can be seen that within the framework of two quantum corrections, new counter-vertices\footnote{The type of the first functional is determined by the need to keep the relation of the action and the equation density. For this reason, $S_2[B+ga]$ is selected rather than $S_2[B]$.} are $S_2[B+ga]$ and $S_2[a]$, so the extended action for the weak deformation from \eqref{ya-p-5} is no longer sufficient. It is necessary to consider the following extended functional of the classical action
\begin{equation*}
	S_{\mathrm{tot}}[B,e]+\frac{\mu^2}{2}S_2[a]+\frac{m^2}{2g^2}S_2[B+ga].
\end{equation*}
The last term\footnote{Note that the standard notation of the mass parameter is used here.}, at the same time, is completely consistent with the connection between the action and the equation, see \eqref{ya-a-23}. In this case, invariance with respect to gauge transformations is violated already at the level of the classical action, which is due to the type of regularization. Let us move on to the renormalization process. Here we have already two renormalization constants\footnote{In the case of $Z_m$, the ansatz is chosen for convenience reasons.}
\begin{equation*}
Z_\mu^{\phantom{1}}=1+z_{\mu,1}^{\phantom{1}}+g^2_{\mathrm{ren}}z_{\mu,2}^{\phantom{1}}+\mathcal{O}(g^4_{\mathrm{ren}}),\,\,\,
Z_m^{\phantom{1}}=\frac{g^2}{g^2_{\mathrm{ren}}}\Big(1+g^2_{\mathrm{ren}}z_{m,1}^{\phantom{1}}+g^4_{\mathrm{ren}}z_{m,2}^{\phantom{1}}+\mathcal{O}(g^6_{\mathrm{ren}})\Big).
\end{equation*}
To restore the form of the coefficients, let us turn to the results of Theorem \ref{ya-t1}. It follows from the first point that in order to compensate for the singularities, the classical action must be shifted by the value
\begin{equation*}
\frac{\Lambda^2c_2(\rho_3-2\rho_0)}{8\pi^2}S_2[B]+g^2_{\mathrm{ren}}\frac{\Lambda^2c_2\alpha}{8\pi^2}S_2[B]+g^2_{\mathrm{ren}}\frac{\Lambda^2c_2(2\rho_3-3\rho_0)}{8\pi^2}S_2[a]+\ldots,
\end{equation*}
where the ellipsis denotes the part containing the cross term with $B_\mu^aa_\mu^a$, which is uniquely reconstructed taking into account \eqref{ya-z-16}. Therefore, the "power" parts of the first two coefficients take the form
\begin{equation*}
z_{m,1}^{\Lambda}=\frac{\Lambda^2c_2(\rho_3-2\rho_0)}{4m^2\pi^2},
\,\,\,
z_{m,2}^{\Lambda}=\frac{\Lambda^2c_2\alpha}{4m^2\pi^2}
,
\end{equation*}
\begin{equation*}
z_{\mu,1}^{\Lambda}=0,\,\,\,
z_{\mu,2}^{\Lambda}=\frac{\Lambda^2c_2(\rho_3-\rho_0)}{4\mu^2\pi^2}.
\end{equation*}
Returning to logarithmic singularities, we calculate only the first ones. They have the form $z_{m,1}^{\Lambda}=0$ and $z_{m,1}^{\Lambda}=0$ in the scheme with minimal subtractions and are a consequence of decompositions of the Green's functions near the diagonal, see formula \eqref{ya-a-127} in Section \ref{ym:sec:two-1}, and relation \eqref{ya-a-160}. Indeed, in this case, the diagram $\mathbb{H}_0^{\mathrm{sc}}\big(S_2[\,\cdot\,]\big)$ has no singularities. The second coefficients, which are generally proportional to the first power of the logarithm $L$, are the result of a study of equality \eqref{ya-p-7} and are not included in this work. They should be perceived as a "deviation" from the covariant case.

\section{On quasi-local vertices}
\label{ym:sec:quas}

In Theorems \ref{ya-t1} and \ref{ya-t2}, the two-loop coefficient of the beta-function was calculated. It turned out that, in addition to the numerical term, the answer also contains the functionals $\rho_2$ and $\rho_4$, depending on the deforming function $\mathbf{f}(\cdot)$, see \eqref{ya-a-37}. These functions differ significantly and, according to the authors, have a different nature of origin. For example, the functional $\rho_2$, see Section \ref{ym:sec:two-vs}, constructed as the difference of convolutions of functions $R_0^1(x)$ and $A_0^{\phantom{1}}(x)R_0^\Lambda(x)$ and, thus, reflects the presence of a deformed $\delta$-functional, that is, the transition $\delta(x)\longrightarrow A_0^{\phantom{1}}(x)R_0^1(x)$. In the absence\footnote{They are also zero when the "sharp cutoff" in the momentum representation is used.} of the regularization, there are also no such functionals.

In turn, the functional $\rho_4$, see Section \ref{ym:sec:two-vs}, contains no difference and is nonzero, including the case of absence of any regularization. Indeed, in this case we get
\begin{equation*}
\frac{\pi^2}{2}\int_{\mathrm{B}_{1}}\mathrm{d}^4x\,R_0(x)|x|^2A_0(x)R_0(x)=
\frac{1}{8}\int_{\mathrm{B}_{1}}\mathrm{d}^4x\,\delta(x)=\frac{1}{8}.
\end{equation*}
Therefore, in a situation without regularization, when all the symmetries of the original model are present, such an integral also does not vanish. At the same time, its contribution is quite natural and does not violate "physics" or "combinatorics". Additional interest in $\rho_4$ is due to its appearance in two-loop diagrams, see Table \ref{ya:table:1}. Note that the coefficient with which this functional appears is proportional to the coefficient for the functional $\rho_3$, which is completely subtracted by the counter-vertex $\mathbb{H}_0^{\mathrm{sc}}(S_2)$. The question arises: "Is it possible to generalize the counter-vertex $S_2[a]$ by passing to a quasi-local density in such a way that it subtracts all the functionals of $\rho_4$ and does not affect the rest?" We give an explicit construction as a positive answer.

So, the vertex $S_2[a]$ is determined by integral \eqref{ya-s-1}. In this case, the decomposition by ultraviolet singularities for the diagram $\Lambda^2\mathbb{H}_0^{\mathrm{sc}}(S_2)$, taking into account \eqref{ya-a-158}, in the case of weak deformation, see Section \eqref{ym:sec:two1-1}, can be written out as
\begin{equation}\label{ya-s-2}
\Lambda^2\int_{\mathbb{R}^4}\mathrm{d}^4x\,\Big(G_{1\mu\mu}^{\Lambda\, aa}(x,x)-4R_0^\Lambda(0)\dim(\mathfrak{g})\Big)\stackrel{\mathrm{s.p.}}{=}
\frac{3c_2L\rho_5}{2\pi^2}W_{-1}+\Lambda^2\times\mbox{$PS$-part}.
\end{equation}
Let us take as an example\footnote{The introduction of quasi-local kernel is not unique. We can build a whole family of them. A similar situation occurs in the two-dimensional model of the principal chiral field, see \cite{AIK-25}. The additional conditions that uniquely fix the kernel are not clear at the moment.} the transition from the functional $S_2[\,\cdot\,]$ with the local density to the functional $Q[\,\cdot\,]$ with a quasi-local density, that is, allowing the presence of an integral kernel depending on the variables $x$ and $y$, which is zero at $|x-y|>1/\Lambda$. An explicit transition can be represented as
\begin{equation*}
S_2[a]\longrightarrow Q[a]=\frac{4\pi^2}{\rho_3\Lambda^2}
\int_{\mathbb{R}^4}\mathrm{d}^4x
\int_{\mathbb{R}^4}\mathrm{d}^4y\,
a_\mu^b(x)\Big(R_0^\Lambda(x-y)A_0^{\phantom{1}}(x)R_0^\Lambda(x-y)\Big)a_\mu^b(y).
\end{equation*}
Let us study the singular part of $\Lambda^2\mathbb{H}_0^{\mathrm{sc}}(Q)$. Immediately note that the calculation of the nonlocal component ($PS$-part) reduces to considering the function on the diagonal. Given the definition for $\rho_3$ from Section \ref{ym:sec:two-vs}, we get the same answer as for $\Lambda^2\mathbb{H}_0^{\mathrm{sc}}(S_2)$. Next, we note that part \eqref{ya-z-5} does not give a singular contribution depending on the background field, since there are no logarithmic components in the decomposition. Therefore, as in the case of the vertex $S_2$, the contribution comes only from the $\mathcal{L}$-part. Further, using the definition of $\mathrm{\hat{I}}_3$ from Section \eqref{ym:sec:two-vs}, we can show that
\begin{equation*}
\Lambda^2\frac{c_2\rho_3}{(4\pi)^2}\Big(\mathbb{H}_0^{\mathrm{sc}}(Q)-\mathbb{H}_0^{\mathrm{sc}}(S_2)\Big)
\stackrel{\mathrm{s.p.}}{=}
\frac{c_2}{(4\pi)^2}\Big(4c_2\mathrm{\hat{I}}_3-\frac{3c_2\rho_5\rho_3}{2\pi^2}W_{-1}L\Big)
\stackrel{\mathrm{s.p.}}{=}
-\frac{8c_2^2\rho_4L}{(4\pi)^4}W_{-1}.
\end{equation*}
Finally, comparing the result with Table \ref{ya:table:1}, we see that the new vertex $Q[\,\cdot\,]$ reduces the functionals of $\rho_4$ in all loops, while leaving the other parts the same. This idea is very attractive from the point of view of adjusting singular contributions. The fact is that at the moment there is no proof that the subtraction of subsingularities works for this regularization in all loops. Therefore, in case of violations of such a procedure, the introduction of quasi-local terms may give additional freedom in the selection of suitable counter-vertices. In other words, "superfluous" singularities can be neutralized by introducing a quasi-locality.

\section{Conclusion}
\label{ym:sec:zakl}

\subsection{Summary of results}

Let us briefly describe the main results. First of all, we recall that the work consists of two main parts devoted to the strong deformation, see Section \ref{ym:sec:pr:gen3}, as well as the weak deformation, see Section \ref{ym:sec:re}. Moreover, in the first case, the regularization is introduced by quasi-local probabilistic averaging of fluctuation fields, see formula \eqref{ya-a-41}, and in the second case, a covariant method of "averaging" is proposed, see formula \eqref{ya-re-1}, which coincides with the first in the main order. A detailed description of the relevant results is given in two theorems and auxiliary comments after them:
\begin{align*}
	\mbox{\textbf{strong deformation}}&\,\,\longrightarrow\,\,\mbox{Theorem \ref{ya-t1}}
	\,\,+\,\,\mbox{comments};
	\\
	\mbox{\textbf{weak deformation}}&\,\,\longrightarrow\,\,\mbox{Theorem \ref{ya-t2}}\,\,+\,\,\mbox{comments}.
\end{align*}
Schematically, the results can be presented in the form of the following auxiliary list.\\

\noindent\textbf{The first "loop".} In both cases, the logarithmic singular part for the first quantum correction has the standard form $11/3\times c_2LW_{-1}/(4\pi)^2$, where $L=\ln(\Lambda/\sigma)$, and coincides with the case of dimensional regularization, see for example \cite{16,17}. At the same time, the use of "strong" regularization additionally entails the appearance of power singularities $\sim\Lambda^2$, which can either be compensated by introducing auxiliary counter-vertices, as was done in Theorem \ref{ya-t1}, or by extending the classical action by adding a "mass" term, and by further renormalizing the mass, see Section \ref{ym:sec:mas}.\\

\noindent\textbf{The local part of the second "loop".} In the "weak" case, the local singular part is $\sim LW_{-1}$, see point 3 of Theorem \ref{ya-t2}. In this case, the proportionality coefficient does not coincide with the case of dimensional regularization, see for example \cite{12}, and depends on the regularizing function $\mathbf{f}(\cdot)$, see \eqref{ya-a-37}. In the "strong" case, the local singularity is complemented by the "massive" terms $\sim\Lambda^2$ and $\sim L\Lambda^2$, as well as a set of parts of the classical action $W_{-1}$ proportional to $L$ and reflecting a violation of invariance with respect to the gauge transformations of the background fields \eqref{ya-p-1}.\\

\noindent\textbf{The nonlocal part of the second "loop".} In both situations, see point 3 in Theorems \ref{ya-t1} and \ref{ya-t2}, the contribution has the same form and consists of two parts: the logarithmic $\sim L$, which leads to the renormalization of the functional $S_{\mathrm{f}}$, which fixes the gauge condition, and the power-law one $\sim\Lambda^2$, which leads to an additional "mass" term. At the same time, the second part does not violate invariance, since the fluctuation fields are transformed according to the law from \eqref{ya-p-4}. Also note that the logarithmic part is consistent with the case of other regularizations.\\

\noindent\textbf{The first "loop" for the first variation.} The singular part of the variation of quantum action \eqref{ya-a-17} with respect to the field responsible for the gauge condition was considered. It was shown that in both situations it has the same form, see point 2 in Theorems \ref{ya-t1} and \ref{ya-t2}, and is consistent with other regularizations. At the same time, in the "weak" case, it was also proved that the variation with respect to the field responsible for covariant deformation does not add new singularities within the framework of the first approximation.\\

\noindent\textbf{Quasi-local vertices.} In Section \ref{ym:sec:quas}, it was shown that part of the local singular component of the second approximation can be removed by moving from a "mass" counter-vertex with a local density to a generalized case with a quasi-local counter-vertex. At the same time, this change does not affect other singular contributions.\\

\noindent\textbf{Comparison.} In Section \ref{ym:sec:two1-4}, a detailed comparison of the singular parts in the case of dimensional regularization and in the case of weak deformation was performed. It has been shown that singular contributions can be represented as a linear combination of a finite number of master-integrals. In the case of dimensional regularization, the coefficients with which these integrals enter are deformed, and in the case of cutoff, the integrals themselves are deformed. For comparison, see Tables \ref{ya:table:2} and \ref{ya:table:3}.\\

\noindent\textbf{Faddeev's approach.} As part of the approach to working with the quantum action, see Sections \ref{ym:sec:pr:gen3-3}--\ref{ym:sec:pr:gen3-5} and \ref{ym:sec:re-5}, which consists in using a special type of background field, the renormalization procedure was considered. The consistency of the functionals before and after regularization was discussed, as well as options for adjusting the relations during the renormalization process. As an explanation, the procedure was performed for the first two quantum corrections.

\subsection{Comments}
\label{ym:sec:zakl-1}

\noindent\textbf{About the second correction.} An interesting open problem is to study the second correction for the variation of the quantum action both in the background field (that is, the standard second correction for the quantum equation of motion) and in the field responsible for the choice of the gauge condition. Moreover, these calculations are important not only for searching for the auxiliary current from \eqref{ya-a-181}, but also for searching for the singular part of the variation of the effective action with respect to the field responsible for regularization in the case of weak deformation. In this paper, it was shown that a similar value for the first approximation is zero. At the same time, it is expected that for the next correction term it will not only be nonzero, but will additionally lose gauge invariance with respect to background field changes. This is due to the fact that the gauge invariant regularizing operator restores invariance during the transition from the strong deformation to the weak one.\\

\noindent\textbf{About the third "loop".} The problem of calculating singular contributions for the third quantum correction in the case of cutoff regularization is open. At the same time, as part of the use of weak deformation, see Section \ref{ym:sec:two1}, it is quite feasible. Note that in this case, one can immediately use a decomposition of the form \eqref{ya-z-5}, since all constructions are invariant with respect to the gauge transformations of the background field, see \eqref{ya-p-1}. Note that, in addition, it is convenient to use an expansion with respect to the field strength tensor for ordered exponentials from \cite{ya-30}, as well as a number of already studied diagrams \cite{Ivanov-Kharuk-2023}.\\

\noindent\textbf{About classical solutions.} Note that the functional \eqref{ya-a-17} can be considered in the case when the background field $B_\mu^a=B_{\mathrm{cl},\mu}^a$ solves the classical equation of motion. In this case, the decomposition will contain connected diagrams, not just strongly connected ones. For example, a diagram like "glasses" is not present in the quantum action of $W_{\mathrm{reg}}^{\mathrm{sc}}[\,\cdot\,]$. From the point of view of the renormalization process, the connection between the action and the quantum equation is important in this case, since it guarantees the absence of additional singularities. Note that in the case of the weak deformation in the first two corrections, it is quite easy to verify renormalizability. Indeed, due to the covariance of the deformation in the quantum equation of motion, the singularity is proportional to the density of the classical equation of motion, which is zero due to the choice of the background field.\\

\noindent\textbf{About tree diagrams.} In Section \ref{ym:sec:re-6}, a special approach was discussed, which consisted in studying a strongly connected quantum action on a diagonal without controlling the renormalizability of partial variations in the background field. In this case, a renormalized connected action is, by definition, constructed using the renormalized strongly connected action and its derivatives. Explicit formulas for tree diagrams are important. Unfortunately, the authors were unable to find suitable formulas. Moreover, it is not clear whether the new action can be represented as a functional integral of some "classical" action.\\

\noindent\textbf{On quasi-local vertices.} In Section \ref{ym:sec:quas}, the possibility of using a vertex with a non-local density was considered. In general, the presence of such an object in a non-local theory is quite natural, although it violates the standard requirement of locality of counter-vertices. Nevertheless, this new approach makes it possible to eliminate the deformed coefficient by moving it from the renormalization constant to a new vertex. The technique used is very attractive from a computational point of view, since it reduces the number of new functionals in the renormalization constants, but the physical justification for its application is not entirely clear. Moreover, another open question arises related to the connection between hypothetical quasi-local vertices and local functionals. Is it possible to perform renormalization within local objects and then obtain quasi-local ones by additional summation, and vice versa? Similar observations have been proposed in the two-dimensional sigma model \cite{AIK-25}. \\

\noindent\textbf{On "mass" terms.} In Section \ref{ym:sec:mas}, the standard renormalization procedure for an extended classical action with a "mass" term was considered. Moreover, two parameters $m^2$ and $\mu^2$ were introduced, since a quadratic field functional can be constructed using both $B_\mu^a+ga_\mu^a$ and $a_\mu^a$. Intuitively, the first option is related to the standard mass functional, while the second one is related to the measure. Accordingly, it is not completely clear what exactly is being renormalized: mass or measure, or both.\\

\noindent\textbf{About the gauge condition.} Gauge conditions that do not depend on the background field are of particular interest. The most trivial can be obtained from \eqref{ya-a-31} by choosing $e_\mu^a=0$. Such a condition is standard and is called the Lorenz gauge. In this case, by choosing a regularization that does not depend on the background field, for example, using the strong deformation from Section \ref{ym:sec:pr:gen3}, we obtain equality of the full and partial functional derivatives with respect to the background field. Thus, we have
\begin{equation*}
	\frac{\delta W^{\mathrm{sc}}_{\mathrm{reg}}}{\delta B_{\mu}^a(x)}=Q_\mu^a|_{\mathrm{reg.}}(x),
\end{equation*}
see for comparison \eqref{ya-z-16} and \eqref{ya-zzz-1}. This choice automatically leads to the consistency of the renormalization process of the quantum action and its derivatives. Unfortunately, this construction is not invariant under the gauge transformations of the background field \eqref{ya-p-1}, and the weak deformation from Section \ref{ym:sec:re} does not allow such symmetry to be restored. The authors expect that at the level of the second approximation, the classical action will split, see for reference Comment 1 for Theorem \ref{ya-t1}, which will significantly complicate further calculations.

\subsection{Acknowledgements}
\vspace{2mm}
\begin{center}
We are grateful to our\,\, \large\mbox{\calligra{K}}\normalsize \,\,\,\,\,\,\,\,and\,\, \small\textbf{\textsl{L}}\normalsize\,\, for inspiration.
\,\,\,\,\,\,\,
\end{center}

\subsection{Statements}

\noindent\textbf{Data Availability Statement.} Data exchange is not applicable to this article because no data sets have been generated or analyzed during the current study.

\vspace{2mm}
\noindent\textbf{Code Availability Statement.} The corresponding code/software is not attached to the article.

\vspace{2mm}
\noindent\textbf{Conflict of Interest Statement.} The authors claim that there is no conflict of interest.

\end{document}